\renewcommand{\arraystretch}{1.25}
\newcounter{theorem}
\newtheorem{thm}[theorem]{Theorem}
\newtheorem{lemma}[theorem]{Lemma}
\newtheorem{prop}[theorem]{Proposition}
\theoremstyle{definition}
\newtheorem{defn}[theorem]{Definition}
\newtheorem{rmk/}[theorem]{Remark}
\newtheorem{notation}[theorem]{Notation}
\newenvironment{rmk}
	{%
  	\pushQED{\qed}\begin{rmk/}} {\popQED\end{rmk/}}
\newcommand{\scr}[1]{\mathcal{#1}}
\newcommand{\goth}[1]{\mathfrak{#1}}
\newcommand{\bb}[1]{\mathbb{#1}}
\newcommand{\vp}{\varphi}
\newcommand{\ve}{\varepsilon}
\newcommand{\g}{\goth{g}}
\newcommand{\ga}{\gamma}
\newcommand{\D}{\Delta}
\newcommand{\limtau}{\lim_{\tau \to -\infty}}
\newcommand{\vi}{$\mathrm{VI}_{-1/9}$}
\newcommand{\ot}{$\mathrm{OT}$}
\newcommand{\otp}{\mathrm{OT}_{+}}
\newcommand{\otpo}{\mathrm{OT}_{+}^{\Omega}}
\newcommand{\bpvi}{\mathrm{B}_+}
\newcommand{\bpvio}{\mathrm{B}_+^{\Omega}}
\newcommand{\svi}{\mathrm{S}}
\newcommand{\spvi}{\mathrm{S}_+}
\newcommand{\spvio}{\mathrm{S}_+^{\Omega}}
\newcommand{\hpvi}{\mathrm{HP}_+}
\newcommand{\hpvio}{\mathrm{HP}_+^{\Omega}}
\newcommand{\advi}{\mathrm{AD}_+}
\newcommand{\advio}{\mathrm{AD}_+^{\Omega}}
\newcommand{\isvio}{\mathrm{IS}_+^{\Omega}}
\renewcommand{\S}{\Sigma}
\newcommand{\s}{\sigma}
\newcommand{\Lp}[1]{L^{#1}((-\infty,0])}
\newcommand{\rK}{\mathring{K}}
\newcommand{\ha}{\hat{a}}
\newcommand{\hn}{\hat{n}}
\newcommand{\hg}{\hat{\gamma}}
\newcommand{\he}{\hat{e}}
\newcommand{\ho}{\hat{\omega}}
\DeclareMathOperator{\Id}{Id}
\DeclareMathOperator{\roRic}{Ric}
\DeclareMathOperator{\ad}{ad}
\DeclareMathOperator{\cl}{cl}
\DeclareMathOperator{\roScal}{Scal}
\title{Quiescence for the exceptional Bianchi cosmologies}
\author{Hans Oude Groeniger}
\address{Department of Mathematics, KTH, 100 44 Stockholm, Sweden}
\email{joog@kth.se}
\date{}
\begin{document}
%%%%%%%%%%%%%%%%%%%%%%%%%%%%%%%%%%%%%%%%%%%%%%%%%%%%%%%%%%%%%%%%%%%%%%%%%%%
%
% ABSTRACT
%
%%%%%%%%%%%%%%%%%%%%%%%%%%%%%%%%%%%%%%%%%%%%%%%%%%%%%%%%%%%%%%%%%%%%%%%%%%%
\begin{abstract}
The exceptional Bianchi type~VI$_{-1/9}$ cosmologies,
with non-stiff fluid matter or in vacuum, are conjectured to generically undergo
an infinite series of chaotic oscillations toward the initial singularity,
as has been proven to occur for Bianchi type VIII and IX cosmologies.
Cosmologies of the lower Bianchi types, i.e. except those of type VIII or IX, 
admit a two-dimensional Abelian subgroup of the isometry group, the~$G_2$.
In orthogonal perfect fluid cosmologies of all lower Bianchi types
except for type~VI$_{-1/9}$ the~$G_2$ acts orthogonally-transitively,
which is closely related to an eventual cessation of the oscillations
    and thus to a quiescent singularity.
But due to a degeneracy in the momentum constraints,
such cosmologies of type VI$_{-1/9}$ do not necessarily have this property.
As a consequence, the dynamics of type~VI$_{-1/9}$ orthogonal perfect fluid cosmologies
have the same degrees of freedom as those of the higher types~VIII and~IX 
and their dynamics are expected to be markedly different compared to those of 
the other lower Bianchi types.

In this article we take a different approach to quiescence,
namely the presence of an orthogonal stiff fluid.
On the one hand, this completes the analysis of the initial singularity 
for all Bianchi orthogonal stiff fluid cosmologies.
On the other hand, this allows us to get a grasp of the underlying dynamics
of type VI$_{-1/9}$ perfect fluid cosmologies, 
in particular the effect of orthogonal transitivity 
as well as possible (asymptotic) polarization conditions.
In particular, we show that a generic type~VI$_{-1/9}$ cosmology 
with an orthogonal stiff fluid has similar asymptotics
as a generic Bianchi type VIII or IX cosmology with an orthogonal stiff fluid.
The only exceptions to this genericity are solutions satisfying
    (asymptotic) polarization conditions,
and solutions for which the~$G_2$ acts orthogonally-transitively.
Only in those cases may the limits of the eigenvalues
    of the expansion-normalized Weingarten map be negative.
We also obtain a concise way to represent the dynamics which works more
generally for the exceptional type~VI$_{-1/9}$ orthogonal perfect fluid cosmologies,
and obtain a monotonic function for the case of a orthogonal perfect fluid 
that is more stiff than a radiation fluid.
\end{abstract}
\maketitle
%%%%%%%%%%%%%%%%%%%%%%%%%%%%%%%%%%%%%%%%%%%%%%%%%%%%%%%%%%%%%%%%%%%%%%%%%%%
%
% CH1: INTRODUCTION
%
%%%%%%%%%%%%%%%%%%%%%%%%%%%%%%%%%%%%%%%%%%%%%%%%%%%%%%%%%%%%%%%%%%%%%%%%%%%
\section{Introduction} \label{sec:Introduction}
\noindent In this article we consider spacetimes solving Einstein's equations
with fluid matter or in vacuum that have initial big bang singularities.
We call such spacetimes \emph{cosmologies}.
Moreover, the results we obtain are about Bianchi cosmologies with orthogonal stiff fluid matter.
In particular we consider the \emph{exceptional} Bianchi type~\vi{},
which is of Bianchi class B, unlike the well-known Bianchi types VIII and IX.
Bianchi cosmologies of type VIII or IX play a central role in the framework of 
Belinski, Khalatnikov and Lifschitz
and their collaborators (BKL), see e.g. \cite{BKLTimeSingularity1982} and references therein,
as they are the simplest models that we know of that exhibit an oscillatory big bang singularity.
By this we mean, loosely speaking,
that toward the big bang singularity the spacetime undergoes
an infinite sequence of contractions and expansions in different directions of spacetime.
This behaviour is captured by the expansion-normalized Weingarten map $\scr{K}$,
which is the endomorophism metrically equivalent 
    to the second fundamental form normalized by dividing with the mean curvature.
More specifically, given some preferred foliation, in the BKL framework
the eigenvalues of $\scr{K}$
generically do not converge and their dynamics closely follow the so-called Kasner map.
For cosmologies of type VIII and IX,
rigorous results have been established in this direction,
see in particular \cite{RingstromBianchiIX,HeinzleUgglaNew} 
and more recently \cite{BeguinDutilleul}.
The exceptional Bianchi type~\vi{} cosmologies with a non-stiff orthogonal perfect fluid or in vacuum
are conjectured to have an oscillatory singularity
    similar to the Bianchi type VIII or IX cosmologies.

The complement to these oscillatory big bang singularities
are \emph{quiescent} big bang singularities,
for which the eigenvalues of the expansion-normalized Weingarten map $\scr{K}$
of a preferred foliation converge toward the singularity. 
A pioneering result in this direction is the article \cite{AnderssonRendall} by Andersson and Rendall.
In this article analytic solutions to the Einstein-scalar field
and Einstein-stiff fluid equations are constructed, using so-called Fuchsian methods, 
by essentially prescribing the behaviour of the solution at the singularity.
This idea of understanding the solution by its properties at the singularity
has been formalized by Ringström in \cite{RingstromInitialDataSingularity},
and specialized to cosmologies with Bianchi class A or Gowdy symmetry 
    in \cite{RingstromInitialDataSingSym},
leading to the notion of \emph{initial data on the singularity}.
Other notable examples of this type that are relevant to the discussion here, are
found e.g. in the article by Isenberg and Moncrief 
\cite{IsenbergMoncriefHalfPolarized} and 
more recently in \cite{FournodavlosLukAsymptoticallyKasner} by Fournodavlos and Luk.

A more recent class of results is that
of the stable formation of quiescent big bang singularities outside of symmetry.
Here stable refers not to the nonlinear stability of a specific solution,
but rather to the nature of the formation of the singularity.
Firstly there are the articles \cite{RodnianskiSpeckNearFLRW,
SpeckIsotropicBianchiIX, FajmanUrbanBianchiV, BeyerOliynykPastStabilityScalarField,
BeyerOliynykPastStabilityFluids},
in which stable big bang formation is shown for developments of initial data that 
are close to being both isotropic and spatially (locally) homogeneous,
in the presence of scalar field or stiff fluid matter, 
or in vacuum but then necessarily in high dimension.
Note also \cite{RodnianskiSpeckModeratelyAnisotropic}, 
for the case with mild anisotropy in vacuum, but in high dimension.
The results obtained in \cite{FournodavlosRodnianskiSpeck} are of a similar nature,
but concern data close to that induced by solutions
that are potentially highly anisotropic but spatially homogeneous and spatially flat
(namely the analogues of the Kasner solutions in the Einstein-scalar field setting).
On the other hand, there is the article in \cite{QuiescentBigBang} 
where instead of closeness to a particular solution, 
a criterion is given concerning the size of the mean curvature that
depends in particular on various bounds on expansion-normalized quantities
constructed from the initial data, including $\scr{K}$ and its eigenvalues.
Then stable formation of quiescent big bang singularities 
    follows for developments of initial data satisfying the criterion.
This result allows one to make conclusions, which are analogous to the other results mentioned above,
but for developments of initial data close to that induced by
    a large class of spatially locally homogeneous solutions,
in particular also those that are not close to isotropy or nearly spatially flat.
Moreover, the criterion places no direct restrictions on the inhomogeneity of the data.

A shortfall of many of the results on stable big bang formation
is that there are no significant dynamics.
What essentially happens is that a regime is identified
in which the dynamics of Einstein's equations are suppressed,
usually made possible by either stiff fluid or scalar field matter 
    or a high dimension of spacetime.
Therefore, initial data close to a model solution has properties
    similar to the model solution,
and this allows for the stable formation of big bang singularities.
On the other hand, there are many results concerning cosmologies with Bianchi or Gowdy symmetry,
which have quiescent big bang singularities
toward one time direction and significant dynamics toward the other time direction,
both in vacuum or with various types of matter.
Focusing our attention of Bianchi cosmologies,
we note in particular \cite{WainwrightHsuClassA, HewittWainwrightClassB, HewittTraceClass, RendallMixmaster, RingstromBianchiIX, LeBlancKerrWainwrightMagneticVI0, HewittBridsonWainwrightTiltedII, HervikTiltedVI0, RadermacherVacuum, RadermacherStiff, OudeGroenigerVI0}, 
among many others.
The results in these articles are interesting for various reasons,
but in the current context 
for mainly two considerations.
First, Bianchi cosmologies with quiescent singularities may be very close,
at any stage of their development, to those that have oscillatory singularities.\footnote{
Closeness here is to be understood in the sense of the phase space of the expansion-normalized variables
such as given e.g. in \cite{WainwrightHsuClassA}.}
This is because quiescence may occur not only as result of dynamics being suppressed
    by the presence of certain types of matter,
which we note is necessary for the results on stable formation of big bang singularities in four dimensions.
It may and does also occur as a result of geometrical features related to symmetries.
Second, the intermediate evolution and future asymptotics of Bianchi cosmologies
can be analyzed to a larger level of detail compared to inhomogeneous cosmologies 
due to the field equations being described effectively by a set of ordinary differential equations.

Thus, quiescence of the initial singularity can be caused by the dynamics
being suppressed due to the presence of certain types of matter,
or due to a high dimension of spacetime.
But, as mentioned above,
it may also occur due to the presence of certain geometrical features due to symmetry.
The heuristics of this were already brought forward by BKL,
but may formally be understood 
using the framework developed in \cite{RingstromGeometrySilentBigBang}.
Although the results in \cite{RingstromGeometrySilentBigBang}
rely on certain assumptions concerning the expansion-normalized Weingarten map $\scr{K}$
of a preferred foliation
and concerning its expansion-normalized normal derivative,
the crux of the matter may be understood without going too much into the details.

Here we are interested in the four-dimensional case, 
so consider an orthonormal $\{e_i\}_{i=1}^3$ frame diagonalizing~$\scr{K}$.
Also, order them so that the corresponding eigenvalues $p_i$ satisfy $p_1 \leq p_2 \leq p_3$.
Then there are two sufficient conditions that lead to quiescence.
Either all the $p_i$ are positive along the foliation all the way up to the initial singularity,
or if $p_1$ is negative then the structure coefficient $\gamma^1_{23}$ associated with this frame
must vanish identically along the foliation.
The former condition is the one that occurs only for specific types of matter,
most notably stiff fluids and scalar fields, or a high dimension
and in four-dimensional spacetimes is incompatible with vacuum,
due to an asymptotic version of the Hamiltonian constraint.
The latter condition is non-generic and caused by geometrical features due to symmetries.
There are moreover examples, among which in this article, where the structure coefficient
does not vanish identically but only asymptotically, and the presence of the geometrical feature
may be understood to be only at the initial singularity, the so-called half-polarized solutions.

There are two notable examples of such geometrical features.
The first examples has to do with the presence of a two-dimensional Abelian subgroup of the isometry group,
the Abelian $G_2$, which is present in all the Bianchi cosmologies of lower type, 
i.e. except for those of Bianchi type~VIII or~IX.
The action of the $G_2$ may have the property of orthogonal transitivity,
which means that the planes in spacetime orthogonal to the $G_2$ are the tangent planes of some surface.
If this is the case, then the second condition, concerning the vanishing of a structure coefficient,
may be shown to be satisfied.
% In some sense it is a peculiarity of the orthogonal perfect fluid matter that
% causes the orthogonal transitivity of the $G_2$ for cosmologies of most lower Bianchi types with such matter.
% One does not expect this to persist generically if some kind of anisotropy is introduced to the matter.
% In this sense the case of Bianchi \vi{} should be seen as more indicative of case away from symmetry.
The second example is when one of the Killing fields generated by the isometry group 
is hypersurface-orthogonal. 
This also leads to the vanishing of a structure coefficient
for the frame diagonalizing the Weingarten map.
Also the presence of a hypersurface-orthogonal Killing vector field 
is something that is typically non-generic within the symmetry class,
which in the results below is evidenced by the non-genericity of the orbits satisfying this extra requirement.
We expand on both of these notions in Appendix~\ref{sec:abelianG2}.

In the family of spacetimes under consideration in this article,
namely Bianchi type~\vi{} cosmologies with orthogonal stiff fluid matter,
both of these causes of quiescence are present,
and both examples of geometrical features occur in this family as well.
The main result is that the behaviour associated with quiescence of the former type is generic,
i.e. generically the quiescence is related to the stiff fluid 
and the positivity of all eigenvalues of the expansion-normalized Weingarten map. 
For a null set of orbits the behaviour associated with quiescence of the second type occurs,
i.e. there are examples where the quiescence is related to geometrical features. 
We find in particular that not all limits of the eigenvalues of the
expansion-normalized Weingarten map are necessarily positive,
but the set of orbits for which they are strictly positive is generic.
This may be contrasted with the results concerning stable big bang formation mentioned above.
In those cases, the eigenvalues of the expansion-normalized Weingarten map are positive initially
and remain positive throughout the past development.

The methods used in this article are in the tradition of the dynamical systems approach.
The dynamical systems approach, in particular the expansion-normalized variant
of Wainwright and Hsu \cite{WainwrightHsuClassA} and subsequently developed by many others,
based on the tetrad or orthonormal frame approach of Ellis and MacCallum \cite{EllisMacCallumHomCos},
has been fruitful in the analysis of spatially homogeneous cosmologies.
Most studies to date have focused on Bianchi class A cosmologies,
in part due to its favourable properties in choosing an orthonormal frame;
in Bianchi class A orthogonal perfect fluid cosmologies, one may simultaneously diagonalize
the Weingarten map and a symmetric matrix associated to the structure coefficients
(as well as the spatial Ricci and Cotton-York tensors).
For the Bianchi class B orthogonal perfect fluid cosmologies the 
fact that the $G_2$ does not act necessarily orthogonally-transitively forms an obstruction to this approach.
It adds another degree of freedom and any choice of frame results in a trade-off
on how to describe this additional degree of freedom.
This is reflected in additional variables and constraints
in the description in expansion-normalized variables.
For this reason the behaviour of type~\vi{} orthogonal perfect fluid
cosmologies is considered as the generic one
within the class B orthogonal perfect fluid cosmologies,
similar to the behaviour of Bianchi types VIII and IX cosmologies within the class A cosmologies.
In fact, orthogonal perfect fluid cosmologies of types VIII and IX
have the same number of degrees of freedom as those of type~\vi{}.
One could say that the analysis of Bianchi class A cosmologies with orthogonal perfect fluids
obscures complications related to the frame.
Resolving these complications are important to obtain a better grasp of quiescent singularities.
It should be noted, however, that if the matter has some built in anisotropy,
then these types of complications also occur for Bianchi class A.
This happens for example for tilted fluids, cf. e.g. \cite{HewittBridsonWainwrightTiltedII}
and \cite{HervikTiltedVI0}, or for magnetic Bianchi cosmologies,
cf. e.g. \cite{LeBlancKerrWainwrightMagneticVI0} and \cite{WeaverMagnetic}.
The approach with respect to the frame chosen in \cite{HewittHorwoodWainrightExceptional} 
    is adapted to $G_2$ cosmologies;
in particular one of the frame vectors is chosen orthogonal to the Abelian $G_2$, 
and the other two tangent to it. 
There is a remaining freedom of rotating the two frame vectors tangent to the $G_2$,
which there is used to eliminate one degree of freedom in the shear.
The additional degree of freedom of the exceptional  
type~\vi{} cosmologies is then manifested in one additional off-diagonal shear component,
as well as one additional off-diagonal component of the structure coefficients.
Our approach is similar, but diagonalizes the structure coefficients,
leaving all the additional degrees of freedom to be manifested in the shear.
Moreover, we introduce a double set of polar coordinates which allows 
to write one of the off-diagonal shear terms as a graph of the other functions,
thereby getting rid of one of the constraints.

% Although closed manifolds do not support locally homogeneous metrics
% with respect to a group of class B
% (with some exceptions, see the article of Ashtekar and Samuel 
%     \cite{AshtekarSamuelBianchiSpatTop},
% in particular Section~3.3 for a detailed and enlightening discussion on this topic), 
% the exceptional type~\vi{} cosmologies could still play a role
% in the inhomogeneous dynamics due to the expected localized nature of the singularity,
% as a result of asymptotic silence.
% Moreover, the exceptional type~\vi{} play a vital role in understanding $G_2$ cosmologies,
% which constitute another step in the process of understanding the field equations
% away from symmetry.

\subsection{Overview of results} \label{sec:OverviewOfResults}
The main result of this article, Theorem~\ref{thm:GenericConvergenceToTriangle},
may be read as follows:
\emph{For a generic Bianchi type~\vi{} cosmology with orthogonal stiff fluid matter,
the initial singularity is quiescent, anisotropic and contracting in all directions.
In particular, the limits of the eigenvalues of the expansion-normalized Weingarten map
exist, are distinct and are all strictly positive.}

The word \emph{generic} here means that the statement holds
for a full measure subset of orbits in the expansion-normalized phase space,
so with respect to the expansion-normalized phase space it is 
Lebesgue generic.
We moreover show that the generic set 
    consists of a countable intersection of open and dense subsets,
    so the set is also Baire generic.
There are two sets of orbits that fall outside of this set of generic orbits.

There are the orthogonally-transitive orbits, in which the $G_2$
does act orthogonally transitive.
In this case the limits of the eigenvalues of the expansion-normalized Weingarten map
need not all be strictly positive; the eigenvalue corresponding to the direction
orthogonal to the $G_2$ may be negative.
This is a similar situation as happens for other class B cosmologies with 
an orthogonal stiff fluid, 
which have been studied in different expansion-normalized coordinates in \cite{RadermacherStiff}.

But there is also the subset of \emph{half-polarized} orbits.
The orbits in this set, which are non-generic in the sense above,
have asymptotic behaviour similar to the polarized orbits
which have a hypersurface-orthogonal Killing vector field.
Therefore we call them half-polarized as they arise in an analogous way to that of the half-polarized, 
$U(1)$-symmetric solutions to the Einstein-vacuum equations studied
in \cite{IsenbergMoncriefHalfPolarized} using Fuchsian methods;
beyond polarized asymptotic data, 
where polarization means that the Killing vector field generating the symmetry
is hypersurface-orthogonal, 
the authors there are able to specify asymptotic data with another degree of freedom
that leads to solutions with similar behaviour
as the solutions arising from polarized asymptotic data,
the so-called half-polarized solutions.
Also in this case the smallest of the limits of the eigenvalues of the expansion-normalized
Weingarten map may be negative.
Similar solutions also arise in other class B cosmologies with orthogonal perfect fluids, 
cf. \cite{HewittTraceClass}, \cite{HewittWainwrightClassB},
\cite{RadermacherVacuum} and \cite{RadermacherStiff},
as for those the $G_2$ acts orthogonally-transitively.

Finally, we also obtain a result regarding the expanding direction.
A monotone quantity is derived for Bianchi type~\vi{} cosmologies
with orthogonal perfect fluid matter with linear equation of state
of the form $\rho = (\gamma - 1) p$ for $\gamma \in [4/3, 2]$, 
which allows for the conclusion that 
the expansion-normalized matter density vanishes asymptotically for this range of $\gamma$.

\subsection{Comparison to other Bianchi stiff fluid cosmologies}
An interesting contrast of the results described here
is with the class A cosmologies with an orthogonal stiff fluid,
due to the complications with the frame that do not arise in class A.
These results concerning class A cosmologies 
have been described in Section~7 of \cite{RingstromBianchiIX}.
For such cosmologies of the higher Bianchi types VIII and IX,
the limits of the eigenvalues of the expansion-normalized
Weingarten map are always strictly positive, and there is no analogue
of the (half-)polarized orbits.
For such cosmologies of Bianchi type VI$_0$ and VII$_0$,
the behaviour is similar to the orthogonally-transitive orbits of type~\vi{},
as a negative eigenvalue may occur but the corresponding eigenspace 
is then orthogonal to the Abelian $G_2$.
In Bianchi type VI$_0$ with an orthogonal perfect fluid, there is also a set similar 
to the polarized orbits, cf. \cite{DynSysCosmology, OudeGroenigerVI0},
where it is called the shear-invariant set.
For Bianchi type II, there is not only one Abelian $G_2$
but we can find two $G_2$'s that act orthogonally transitively
in the case of an orthogonal perfect fluid.
Also, if one of the limits of the eigenvalues of the expansion-normalized
Weingarten map is negative, then its corresponding eigenspace is orthogonal to both of these.
Lastly for Bianchi type I orthogonal perfect fluid cosmologies, any two shear eigenvectors commute
and form an orthogonally transitive $G_2$, and the limit of the eigenvalue being negative
can occur for any eigenspace of the shear.

The non-exceptional class B cosmologies with orthogonal stiff fluid matter
    have been studied in detail in \cite{RadermacherStiff}.
When it comes to the polarized and half-polarized solutions, similar phenomena
as described above for type~\vi{} occur for Bianchi types IV and VI$_\eta$.
However, the behaviour for generic orbits of type IV,
    VI$_\eta$ ($\eta \neq -1/9)$ and VII$_\eta$,
for all of which the $G_2$ acts orthogonally-transitively,
is similar to the behaviour of the orthogonally-transitive orbits of type~\vi{}.
This is because that not all limits of the eigenvalues of
the expansion-normalized Weingarten map are 
necessarily positive, and if indeed one is negative
then its eigenspace corresponds to the direction orthogonal to the $G_2$.

\subsection{Outline of the paper}
In Section~\ref{sec:Equations} we introduce the evolution equations that we use,
which are in particular specialized for the orthogonal stiff fluid case.
The evolution equations are derived in Appendix~\ref{sec:AppendixEquations} 
from the usual equations as found e.g. in \cite{DynSysCosmology} 
using the well-known orthonormal frame formalism.
We moreover introduce the phase space and discuss the non-vacuum fixed points.

In Section~\ref{sec:matterdensity} we do a preliminary analysis,
mainly using the monotonicity of the expansion-normalized matter density,
and derive convergence of the backward orbits.
In Section~\ref{sec:asympdiag} we consider the main class of orbits,
namely the \emph{asymptotically diagonalized} orbits,
which is shown to be generic in Section~\ref{sec:halfpolarized}.
These solutions converge to specific sections of the Jacobs disc toward the past.
Next, in Section~\ref{sec:halfpolarized}, we consider the so-called half-polarized solutions,
which toward the singularity behave similarly to solutions having the additional
symmetry that one of the Killing fields is hypersurface-orthogonal.
We show that this class of solutions is non-generic within the set of all Bianchi type~\vi{}
orthogonal stiff fluid solutions.
The statement and proof of the main result are contained in Section~\ref{sec:MainTheorem}.

Section~\ref{sec:future} contains some observations regarding
the expanding direction also for non-stiff Bianchi type~\vi{} cosmologies.
In particular, we find a monotone function for the range $\gamma \in [4/3,2]$.
We conclude with several appendices.
Firstly, there is an appendix where we recall some important properties of the Bianchi classification,
after which we make use of this to discuss the relevant geometric features of the Abelian $G_2$
and the hypersurface-orthogonal Killing vector field.
Next, Appendix~\ref{sec:AppendixEquations} contains the derivations for the equations
which are not specialized to the stiff fluid,
and lastly Appendix~\ref{sec:dynsys} contains some analytical tools 
that prove useful for the analysis.

We assume the reader is familiar with the orthonormal-frame approach 
of \cite{EllisMacCallumHomCos}, as well as with various aspects of the dynamical systems approach
to cosmology which can be found in \cite{DynSysCosmology}.
We also use geometrized units $c=1$ and $8\pi G = 1$.
Moreover, Latin indices run from 1 to 3 and Greek indices run from 0 to 3.
% We use round brackets $(~)$ to denote symmetrization
% and square brackets $[~]$ to denote anti-symmetrization
% (although the remarks regarding notation are in fact almost only relevant in Appendix~\ref{sec:AppendixEquations}).

% \newpage
%%%%%%%%%%%%%%%%%%%%%%%%%%%%%%%%%%%%%%%%%%%%%%%%%%%%%%%%%%%%%%%%%%%%%%%%%%%
%
% CH2: EXPANSION-NORMALIZED EVOLUTION EQUATIONS
%
%%%%%%%%%%%%%%%%%%%%%%%%%%%%%%%%%%%%%%%%%%%%%%%%%%%%%%%%%%%%%%%%%%%%%%%%%%%

\section{Expansion-normalized evolution equations} \label{sec:Equations}
\noindent Specializing the system of equations \eqref{eq:PolarEvEAppendix}
as well as the accompanying constraints found in Appendix~\ref{sec:AppendixEquations}
to the case of a stiff fluid, so in particular $\gamma = 2$ and $q^*=2$,
we obtain the following system of equations:
\begin{subequations} \label{eq:PolarEvE}
\begin{align} 
M' &= \big[q +2 \S_{+} + 2 \sqrt{3} \cos(\theta) \S_- \big] M, \label{eq:EvEM}\\
P' &= \big[-(2-q) - \big(3 + \sin^2(\theta) \big) \S_{+}
    + \sqrt{3} \cos(\theta) \S_-  \big] P, \label{eq:EvEP} \\
\theta' &= -2 \sqrt{3} \S_- \sin(\theta), \label{eq:EvETh}\\
\S_{+}' &= -(2-q) \S_{+} - 2 M^2 + 3 P^2, \label{eq:EvESplus} \\
\S_{-}' &= -(2-q) \S_- - \sqrt{3} \cos(\theta) \big( 2 M^2 + P^2 
    - \tfrac{2}{3} \sin^2(\theta) \S_{+}^2 \big) \label{eq:EvESminus},
\intertext{where $q$ is now shorthand for}
q &= 2-2 \big(1+ \tfrac{1}{3}\sin^2(\theta)\big) M^2.
    \label{eq:DefinitionOfq}
\intertext{We also have the constraints defining $\Omega$ and $\S_{\times}$, namely}
\Omega &= 1 - \big(1+ \tfrac{1}{3}\sin^2(\theta)\big) \big( M^2 + \S_+^2 \big) 
    - P^2 - \S_-^2, \label{eq:HCNew} \\
\S_{\times} &= \tfrac{1}{\sqrt{3}} \S_+ \sin (\theta), \label{eq:MCSx}
\intertext{and the corresponding auxiliary equations}
\S_{\times}' &= [-(2-q) - 2 \sqrt{3} \S_- \cos(\theta)] \S_{\times} 
    - \tfrac{1}{\sqrt{3}} \sin(\theta) \big( 2 M^2 - 3 P^2 \big), \\
\Omega' &= -2(2-q) \Omega. \label{eq:EvEOm}
\end{align}
\end{subequations}

\subsection{The phase space and fixed points on the boundary}
We are now in a position to give a description of the phase space
of the system we wish to consider.
In particular, we define the set $\bpvi$ as follows:
\begin{equation}
\bpvi :=
    \left\{(M, P, \theta, \Sigma_+, \Sigma_-) \in \bb{R}^5 ~|~
		 M > 0, P \geq 0,  \theta \in (0, \pi) \text{~and~} \Omega \geq 0
\right\},
\end{equation}
and refer to it as the \emph{phase space}.
The closure $\cl \bpvi$, which in particular consists of the boundaries
for which $M = 0$ and for which $\sin(\theta) = 0$, 
which respectively describe the type I and type II orbits,
we refer to as the \emph{extended phase space}.

Moreover, let $\bpvio := \bpvi \cap \{ \Omega >0 \}$
denote the non-vacuum orbits
and let $\bpvi^{0} := \bpvi \cap \{\Omega = 0\}$
denote the vacuum orbits, both of which are invariant subsets of $\bpvi$.

Moreover, we denote
\begin{equation}
\otp := \bpvi \cap \{P =0\},
\end{equation}
for the orthogonally-transitive orbits in the stratum $\bpvi$, as well as 
\begin{equation}
\spvi := \bpvi \cap \{\cos(\theta) = 0, \S_- = 0\},
\end{equation}
for the polarized orbits in the stratum $\bpvi$.
We also define $\otp^{0}, \otpo$ and $\spvi^0, \spvio$ in an analogous way
to distinguish the vacuum and non-vacuum orbits.

\begin{rmk} \label{rmk:SmoothEquivalence}
Due to the homogeneous nature of Equations~\eqref{eq:EvEM} and~\eqref{eq:EvEP},
the subsets $\{M=0\}$ as well as $\{P=0\}$ of the extended phase space $\cl \bpvi $
are invariant sets under the evolution.
In particular, outside of these sets, the coordinate transformations used
to bring the original system~\eqref{eq:CanonEvE} into the form~\eqref{eq:PolarEvE} above
is a smooth map for orbits with $M > 0$ and $P >0$,
and we thus obtain a smooth equivalence between the dynamics.
For orbits for which $M > 0$ but $P  = 0$ we obtain such a statement as well,
as we may regard the variable $\phi = \theta/2$
as being eliminated through the constraints.
The only issue to relate orbits of \eqref{eq:CanonEvE} with those of \eqref{eq:PolarEvE}
    is for orbits satisfying $M = 0$, which are type I orbits.
The type I orbits be related using the eigenspaces and eigenvalues of the expansion-normalized shear.
But as our main interest is in making conclusions about type~\vi{} orbits,
i.e. those lying inside the phase space $\bpvi$,
it suffices for the conclusions we wish to make 
that the orbits are smoothly equivalent for $M >0$.
\end{rmk}

\begin{notation} Given $x \in \cl{\bpvi}$
we let $(M(x),P(x), \theta(x), \S_+(x), \S_-(x)) = (x_1,...,x_5)$.
Moreover, given initial data $x \in \bpvi$ we write e.g.
\begin{equation*}
M(\tau) := \big( M \circ \vp^\tau \big) (x) := \big(\vp^\tau(x)\big)_1
\end{equation*}
and similarly for the other coordinates and functions of the coordinates.
Here and throughout this article $\vp^\tau$ denotes the flow of the system 
defined by \eqref{eq:PolarEvE}.
\end{notation}

\begin{rmk} \label{rmk:alphalimit}
Due to the requirement that $\Omega \geq 0$,
i.e. by the Hamiltonian constraint \eqref{eq:HCNew},
we observe that $M, P, \sin(\theta), \S_+^2, \S_-^2$ as well as $\Omega$ lie in the interval $[0,1]$.
Hence, the closure of the phase space is compact.
Moreover, as the derivatives in $\tau$ are also bounded
the flow $\vp^\tau$ is complete.

Also, recall that $\alpha$- and $\omega$-limit sets are closed and invariant. 
Since the orbits of our dynamical system are contained in a compact invariant set,
any limit set of a point in $\cl{\bpvi}$ is non-empty and connected,
see e.g. Proposition~1.1.14 of \cite{WigginsNonlinearDynSys}.
\end{rmk}

\subsubsection{Non-vacuum fixed points}
The flow of the system \eqref{eq:PolarEvE} may contain various fixed points
in either the phase space or its boundary.
Observe that any equilibrium point must satisfy either $q = 2$ or $\Omega =0$,
as a consequence of the evolution equation for $\Omega$.

We may compute that a non-vacuum fixed point in $\bpvio$ with coordinates
$(m, p, \psi, s_+, s_-)$, and $\Omega = \mu > 0$, must satisfy $q = 2$ and therefore $m = 0$. 
Then from the evolution equation for $\S_+$ we deduce that $p = 0$.
Next, we see that either $s_- = 0$ or $\sin(\psi) = 0$ as else $\theta' \neq 0$.
In the former case, we obtain that moreover $\cos(\psi)\sin^2(\psi)s_+^2 = 0$
else $\S_-' \neq 0$. Thus one of the following three must hold: $\sin(\psi) = 0$ or 
$s_- = 0$ and $\cos(\psi) = 0$
or $s_- = 0$ and $s_+= 0$.

In particular, we have four sets of non-vacuum fixed points. 
Firstly there are the two sets $\scr{D}^\pm$ of fixed points whose projections
in the $\S_+\S_-$-plane are the same and known as the Jacobs disc~$\scr{D}$,
i.e. $\scr{D}:=\{(\S_+,\S_-): \S_+^2 + \S_-^2 < 1\}$.
These fixed points correspond to the Jacobs stiff fluid solutions,
and its boundary forms the Kasner circle $K := \{(\S_+,\S_-): \S_+^2 + \S_-^2 = 1\}$
    corresponding to the vacuum Kasner solutions.
Depending on whether $\cos(\theta) = 1$, or $\cos(\theta) = -1$
we denote the sets by $\scr{D}^+$ and $\scr{D}^-$ respectively.
Next, there is the set of fixed points $\scr{F}$ corresponding to the FLRW solutions
(this is now a set due to a degeneracy of adapted polar coordinates at that point,
cf. Remark~\ref{rmk:SmoothEquivalence}). 
Lastly there is the line of fixed points $\scr{P}$
containing the limits of the polarized orbits for which $\sin(\theta) = 1$.
In particular, these limits lie on the boundary of the invariant set $\spvi$.
The four sets are given respectively by
\begin{subequations}
\begin{align}
\scr{D}^+ &:= \left\{(0,0, 0, \S_+, \S_-) ~|~ \S_+^2 + \S_-^2 <1 \right\}, \\
\scr{D}^- &:= \left\{(0,0, \pi, \S_+, \S_-) ~|~ \S_+^2 + \S_-^2 <1 \right\}, \\
\scr{F} &:= \left\{ (0,0, \theta, 0, 0) ~|~ \theta \in [0,\pi] \right\}, \\
\scr{P} &:= \left\{ (0,0, \pi/2, \S_+, 0) ~|~ \S_+^2 < 3/4 \right\}.
\end{align}
\end{subequations}
Note that $\scr{D}^+, \scr{D}^-$ and $\scr{P}$ are connected through $\scr{F}$.
In particular, $\scr{D}^\pm \cap \scr{F} \neq \emptyset$
as well as $\scr{P} \cap \scr{F} \neq \emptyset$.

\begin{rmk} \label{rmk:ReconstructionOfSigma}
From the variables $M, P, \theta, \S_+, \S_-$, 
we can reconstruct the symmetric matrix $(\S_{ij})_{ij}$ which the expansion-normalized shear
written with respect to a canonical basis, cf. Appendix~\ref{sec:AppendixEquations},
in particular \eqref{eq:SigmaPlusSigmaMinus}.
Indeed, we have
\begin{align*}
\S_{11} &= - \tfrac{2}{3} \S_{+}, \quad 
\S_{22} = \tfrac{1}{3} \left (\S_+ + \sqrt{3} \S_- \right), \quad 
\S_{33} = \tfrac{1}{3} \left (\S_+ - \sqrt{3} \S_- \right), \\
\S_{23} &= \tfrac{1}{3} \S_+ \sin(\theta), \quad
\S_{31} = \tfrac{1}{\sqrt{3}} P \cos(\theta/2), \quad
\S_{12} = \tfrac{1}{\sqrt{3}} P \sin(\theta/2).  
\end{align*}
Note that for points in $\scr{D}^\pm$, only the diagonal elements are non-zero.
\end{rmk}
%FIGURE
\begin{figure}[t]
     \centering
     \begin{subfigure}[b]{0.45\textwidth}
         \centering
         \includegraphics[width=\textwidth]{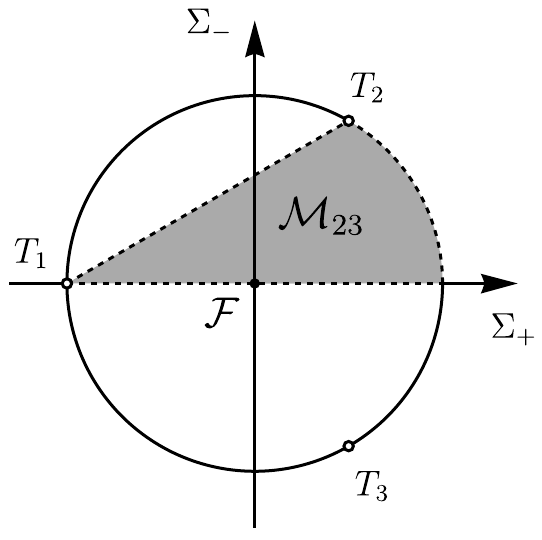}
     \end{subfigure}
     \begin{subfigure}[b]{0.45\textwidth}
         \centering
         \includegraphics[width=\textwidth]{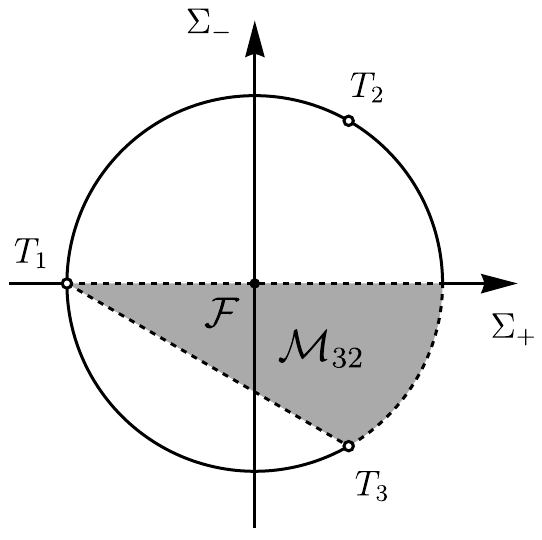}
     \end{subfigure}
        \caption{
        The projections of the sets $\scr{M}_{23}^\pm$ (left)
        and $\scr{M}_{32}^\pm$ (right)
        onto the Jacobs disc~$\scr{D}$ in the $\S_+ \S_-$-plane.
        Note that the boundaries of the regions are not part of the respective sets.
        The projection of the set $\scr{F}$, in the center of the circle, is also denoted,
        as well as the Taub points $T_i$, $i \in \{1,2,3\}$.}
        \label{fig:Mij}
\end{figure}
%FIGURE
\begin{figure}[b]
     \centering
     \begin{subfigure}[b]{0.45\textwidth}
         \centering
         \includegraphics[width=\textwidth]{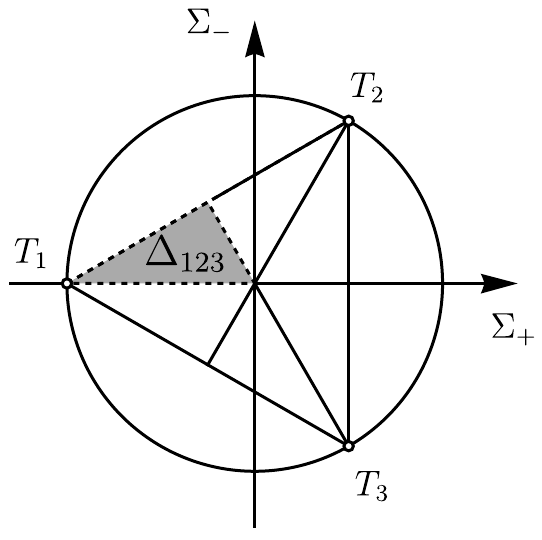}
     \end{subfigure}
     \begin{subfigure}[b]{0.45\textwidth}
         \centering
         \includegraphics[width=\textwidth]{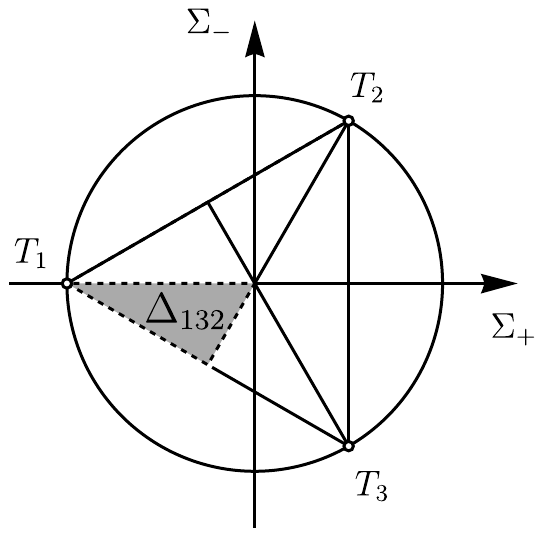}
     \end{subfigure}
        \caption{The projections of the sets
        $\D_{123}^\pm$ (left) and $\D_{132}^\pm$ (right)
        onto the Jacobs disc $\scr{D}$ in the $\S_+ \S_-$-plane.
        Note that the boundaries of the regions are not part of the respective sets.
        The triangle, i.e. the three line segments corresponding
        to one of the eigenvalues of the expansion-normalized Weingarten map vanishing,
        is also depicted, as well as the Taub points $T_i$, $i \in \{1,2,3\}$ lying at their intersections.}
        \label{fig:Dijk}
\end{figure}
It will prove to be convenient to further split $\scr{D}^\pm$ into several parts.
Firstly, given $i,j~\in~\{1,2,3\}$ such that $i \neq j$, 
we denote by $\scr{M}^\pm_{ij}$ the following subsets of $\scr{D}^\pm$,
namely
\begin{equation}
\scr{M}^\pm_{ij} := \scr{D}^\pm \cap \{ \S_{ii} > \S_{jj} > -\tfrac{1}{3}\},
\end{equation}
where $\S_{11}, \S_{22}, \S_{33}$ can be reconstructed from $\S_+$ and $\S_-$ 
through by Remark~\ref{rmk:ReconstructionOfSigma} above.
In particular, we have
\begin{align*}
\scr{M}^-_{23} &= \left\{(0,0,\pi,\S_+, \S_-) ~|~ 
    \S_+^2 + \S_-^2 < 1 \text{~and~}
    \S_+ - \sqrt{3} \S_- > -1 \text{~and~} 
    \S_- > 0 \right\}, \\
\scr{M}^+_{32} &= \left\{(0,0,0,\S_+, \S_-) ~|~ 
    \S_+^2 + \S_-^2 < 1 \text{~and~} 
    \S_+ + \sqrt{3} \S_- > -1 \text{~and~} 
    \S_- < 0 \right\},
\end{align*}
both of which play a role in the subsequent analysis.
In Figure~\ref{fig:Mij} the projections of $\scr{M}_{23}^\pm$ and $\scr{M}_{32}^\pm$
onto the Jacobs disc $\scr{D}$, i.e. in the $\S_+ \S_-$-plane, are shown.

On the other hand we denote by $\D^{\pm}_{ijk}$ certain sets whose projection into the $\S_+\S_-$-plane
lies inside what is commonly referred to as
    the (inside of the) \emph{triangle} inside the Jacobs disc,
which corresponds to the expansion-normalized Weingarten map being positive definite.
To be more precise, they are defined by
\begin{equation}
\D^\pm_{ijk} := \scr{D}^{\pm} \cap \{\S_{ii} > \S_{jj} > \S_{kk} > -\tfrac{1}{3}\} 
    = \scr{M}^{\pm}_{ij} \cap \scr{M}^{\pm}_{jk}.
\end{equation}
The following two examples play a role in the subsequent analysis:
\begin{align*}
\D^-_{123} &= \left\{ (0,0,\pi,\S_+,\S_-)~|~ 
    \S_+ < - \tfrac{1}{\sqrt{3}} \S_- < 0 \text{~and~}
    \S_+ - \sqrt{3} \S_- > -1 \right\}, \\
\D^+_{132} &= \left\{ (0,0,0,\S_+,\S_-)~|~ 
    \S_+ < \tfrac{1}{\sqrt{3}} \S_- < 0 \text{~and~}
    \S_+ + \sqrt{3} \S_- > -1 \right\}. 
\end{align*}
In Figure~\ref{fig:Dijk} the projections of the sets $\D_{123}^\pm$
and $\D_{132}^\pm$ onto the Jacobs disc $\scr{D}$,
i.e. in the $\S_+ \S_-$-plane, are shown.

% \begin{rmk}
% In the case that $\sin(\theta) \to 0$, we may interpret the Jacobs disc $\scr{D}$ as a parametrization of
% the eigenvalues of the expansion-normalized Weingarten map, cf. Remark~\ref{rmk:ReconstructionOfSigma}.
% The sets $\Delta_{ijk}^\pm$ project to the part of the Jacobs disc $\scr{D}$
% where the eigenvalues of the expansion-normalized normal Weingarten map are all positive,
% the so-called triangle.
% \end{rmk}

%%%%%%%%%%%%%%%%%%%%%%%%%%%%%%%%%%%%%%%%%%%%%%%%%%%%%%%%%%%%%%%%%%%%%%%%%%%
%
% CH3: MONOTONICITY OF THE MATTER DENSITY
%
%%%%%%%%%%%%%%%%%%%%%%%%%%%%%%%%%%%%%%%%%%%%%%%%%%%%%%%%%%%%%%%%%%%%%%%%%%%
\section{Monotonicity of the matter density} \label{sec:matterdensity}
\noindent 
Considering Equation \eqref{eq:EvEOm} it is clear that $\Omega$ is monotonically decreasing.
Since $ \Omega \in [0,1]$, as a consequence of the Hamiltonian constraint \eqref{eq:HCNew}, 
we conclude by the monotonicity principle, i.e. Lemma~\ref{lemma:monotonicity},
that if $\Omega(x) > 0$ then $\Omega(\tau)$ must converge to a positive value
$\Omega_0 \in [\Omega(x), 1]$ as $\tau \to \pm \infty$.
Hence $2-q \in \Lp{1}$, or equivalently $M \in \Lp{2}$.
As $M$ is a smooth function of time,
as it solves a smooth (in fact analytic) differential equation,
and as $M'$ is bounded due to the constraint \eqref{eq:HCNew},
we may conclude that $M \to 0$ as $\tau \to -\infty$.
On the other hand, as~$\tau \to \infty$, either we have that $\Omega(\tau) \to 0$,
or again $M \in \Lp{2}$.
In particular, we have proven the following.
%INTEGRABILITY OF M
\begin{lemma} \label{lemma:IntegrabilityOfM}
Let $x \in \bpvio$ for $\gamma = 2$.
Then towards the past $\Omega(\tau)$ converges to a value larger or equal than $\Omega(x)$,
$2-q \in \Lp{1}$,
$M \in \Lp{2}$ and $M \to 0$ as $\tau \to -\infty$.
\end{lemma}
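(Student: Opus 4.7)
The plan is to read off everything from the evolution equation for $\Omega$, exactly as indicated in the introductory paragraph to the lemma. The crucial observation is that the function $q$ defined by \eqref{eq:DefinitionOfq} satisfies
\begin{equation*}
2 - q = 2\bigl(1 + \tfrac{1}{3}\sin^2(\theta)\bigr) M^2 \geq 2 M^2 \geq 0,
\end{equation*}
so \eqref{eq:EvEOm} immediately shows $\Omega' \leq 0$ along every orbit. Combined with the Hamiltonian constraint \eqref{eq:HCNew}, which confines $\Omega$ to the interval $[0,1]$, this makes $\Omega(\tau)$ a bounded monotone function of $\tau$. Hence the limits $\Omega_0 := \lim_{\tau \to -\infty}\Omega(\tau)$ and $\lim_{\tau \to +\infty}\Omega(\tau)$ exist, and monotonicity forces $\Omega_0 \geq \Omega(0) = \Omega(x)$. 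Since by hypothesis $x \in \bpvio$, i.e. $\Omega(x) > 0$, we in particular have $\Omega_0 > 0$.

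Next I would extract $L^1$-integrability of $2-q$ from the same equation. On the interval $(-\infty,0]$ the function $\Omega(\tau)$ stays bounded away from zero, namely $\Omega(\tau) \geq \Omega(x) > 0$. Therefore we may divide by $\Omega$ in \eqref{eq:EvEOm} and integrate from $-T$ to $0$ to obtain
\begin{equation*}
\int_{-T}^{0} 2(2-q)\,d\tau \;=\; -\int_{-T}^{0} \frac{\Omega'}{\Omega}\,d\tau \;=\; \log\Omega(-T) - \log\Omega(0).
\end{equation*}
As $T \to \infty$ the right-hand side converges to $\log(\Omega_0/\Omega(x)) < \infty$, which together with $2-q \geq 0$ gives $2-q \in \Lp{1}$. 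Because $2-q \geq 2M^2$, this also gives $M^2 \in \Lp{1}$, i.e. $M \in \Lp{2}$.

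Finally, to upgrade $M \in \Lp{2}$ to $M(\tau) \to 0$, I would invoke uniform continuity. On $\cl\bpvi$ the right-hand side of \eqref{eq:EvEM} is bounded because all of $q, \Sigma_\pm, M$ are bounded by the Hamiltonian constraint (cf.\ Remark~\ref{rmk:alphalimit}), so $M'$ and hence $(M^2)'$ are uniformly bounded on $(-\infty,0]$; this makes $M^2$ uniformly continuous. A standard lemma (of the type collected in Appendix~\ref{sec:dynsys}) then says that a non-negative uniformly continuous function in $L^1$ must tend to zero at the endpoint, giving $M(\tau) \to 0$ as $\tau \to -\infty$. None of the steps should present real obstacles; the only point that needs care is that the $\log\Omega$ manipulation is legitimate, which is exactly why the hypothesis $\Omega(x) > 0$ (rather than just $\Omega \geq 0$) is essential and why the conclusion $M \to 0$ is stated only backward in time.
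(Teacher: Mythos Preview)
Your proof is correct and follows essentially the same approach as the paper: both exploit the monotonicity of $\Omega$ from \eqref{eq:EvEOm} together with the bound $\Omega\in[0,1]$ to conclude convergence, then integrate $\Omega'/\Omega$ (implicitly in the paper, explicitly in your write-up) to get $2-q\in\Lp{1}$ and hence $M\in\Lp{2}$, and finally use boundedness of $M'$ to upgrade to $M(\tau)\to 0$. The only cosmetic difference is that the paper phrases the convergence step via the monotonicity principle (Lemma~\ref{lemma:monotonicity}) rather than the bare monotone-bounded argument, and that the ``$L^1$ plus uniformly continuous implies convergence to zero'' fact you invoke is not actually stated in Appendix~\ref{sec:dynsys}---but it is standard and is exactly what the paper uses as well.
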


\begin{rmk} \label{rmk:decelaration}
Observe that we may also write the deceleration parameter in the stiff fluid case as 
$q = 2- 2 N_-^2 + \tfrac{2}{3} A^2 = 2 + 3 \roScal_h/\theta^2$,
where $\roScal_h$ denotes the spatial scalar curvature.
\end{rmk}

\noindent Knowing that $M \in L^2((-\infty,0])$, we obtain several immediate consequences.
To begin we have the following.

%INTEGRABILITY OF P
\begin{lemma} \label{lemma:IntegrabilityOfP}
Let $x \in \bpvio$ for $\gamma = 2$.
Then $\S_+(\tau)$ converges as $\tau \to -\infty$,
$P~\in~L^2((-\infty,0])$ and $P(\tau) \to 0$ as $\tau \to -\infty$.
\end{lemma}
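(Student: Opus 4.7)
The plan is to deduce $P \in L^2$ first, and then derive the other two conclusions as corollaries.

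The key observation is that the evolution equation \eqref{eq:EvESplus} can be rearranged as
\begin{equation*}
3 P^2 = \S_+' + (2-q)\S_+ + 2 M^2.
\end{equation*}
Integrating this identity from $-T$ to $0$ for $T>0$ gives
\begin{equation*}
3 \int_{-T}^{0} P^2 \, ds = \S_+(0) - \S_+(-T) + \int_{-T}^{0} (2-q)\S_+ \, ds + 2 \int_{-T}^{0} M^2 \, ds.
\end{equation*}
The first two boundary terms are bounded uniformly in $T$ because $\S_+$ is bounded (by the Hamiltonian constraint \eqref{eq:HCNew}). The integral $\int_{-T}^{0} (2-q)\S_+ \, ds$ is uniformly bounded by Lemma~\ref{lemma:IntegrabilityOfM} together with the boundedness of $\S_+$, and the last integral is uniformly bounded since $M \in \Lp{2}$. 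Letting $T \to \infty$ and using $P^2 \geq 0$ then yields $P \in \Lp{2}$.

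With $P \in \Lp{2}$ in hand, the right-hand side of \eqref{eq:EvESplus} is in $\Lp{1}$: the term $(2-q)\S_+$ is in $L^1$ as above, $M^2$ is in $L^1$ by Lemma~\ref{lemma:IntegrabilityOfM}, and $P^2$ is in $L^1$ by what we just established. Thus $\S_+' \in \Lp{1}$, so that
\begin{equation*}
\S_+(\tau) = \S_+(0) - \int_{\tau}^{0} \S_+'(s) \, ds
\end{equation*}
converges to a finite value as $\tau \to -\infty$. Finally, to obtain $P(\tau) \to 0$, observe from \eqref{eq:EvEP} together with boundedness of all variables that $P'$ is uniformly bounded on $(-\infty, 0]$, and hence $(P^2)' = 2 P P'$ is uniformly bounded as well. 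Thus $P^2$ is uniformly continuous; combined with $P^2 \in \Lp{1}$ this forces $P^2(\tau) \to 0$, and therefore $P(\tau) \to 0$, as $\tau \to -\infty$.

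The only nontrivial step is the first one, namely noticing that the evolution equation for $\S_+$ can be solved for $P^2$ and that every other term in the resulting identity is already known to be in $L^1$ toward the past. The remaining steps follow fairly mechanically from standard $L^1$/$L^2$ and uniform-continuity arguments in the spirit of Lemma~\ref{lemma:IntegrabilityOfM}.
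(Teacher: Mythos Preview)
Your proof is correct and follows essentially the same approach as the paper: integrate the evolution equation \eqref{eq:EvESplus} to express $\int P^2$ in terms of quantities already known to be bounded or integrable, then deduce convergence of $\S_+$ and $P\to 0$ by standard arguments. The paper additionally invokes a point in the $\alpha$-limit set to make the boundary term $\S_+(\tau_k)$ converge along a sequence, but your observation that mere boundedness of $\S_+$ suffices (since $P^2\ge 0$ makes the left-hand side monotone) is a slight simplification of the same idea.
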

\begin{proof}
By Remark~\ref{rmk:alphalimit} we know that $\alpha(x) \neq \emptyset$.
Thus, let $y \in \alpha(x)$
and let $(\tau_k)_{k=1}^\infty$
be a decreasing sequence of times such that $\tau_k \to -\infty$ and $\vp^{\tau_k} (x) \to y$.
From the evolution equation \eqref{eq:EvESplus} we may deduce that
\begin{equation*}
3 \int_{\tau_k}^0 P(s)^2 d s = \S_+(0) - \S_+(\tau_k)
    + \int_{\tau_k}^0 \left[ (2-q(s)) \S_+(s) + 2 M(s)^2 \right] ds.
\end{equation*}
The integrands on the right-hand side are all integrable on the interval $(-\infty,0]$
by Lemma~\ref{lemma:IntegrabilityOfM}, the fact that $\S_+$ is bounded
and the fact that $\S_+(\tau_k)$ converges to $\S_+(y)$.
Hence we conclude that $\lim_{k \to \infty} \int_{\tau_k}^0 P(s)^2 ds$ exists
and thus $P \in L^2((-\infty,0])$.
As $P$ is smooth and has bounded derivative, it must converge to 0.
By rewriting the above equation, we may conclude that also $\S_+(\tau)$ must converge
as $\tau \to -\infty$.
\end{proof}
\begin{rmk} \label{rmk:DefinitionOfX}
In the analysis that follows
the functions $\sin(\theta)$ and $X := \sqrt{3} \S_- \cos(\theta)$
play a fundamental role.
For that reason, let us explicitly consider their evolution equations.
We have:
\begin{subequations} \label{eq:EveXSinTheta}
\begin{align}
\sin(\theta)' &= - 2 X \sin(\theta), \label{eq:EvESinTheta}\\
X' &= -(2-q)X + 2 \sin^2(\theta) \big( 3 \S_-^2 + \cos^2(\theta) \S_+^2\big)
    - 3 \cos^2(\theta) \big(2 M^2 + P^2 \big). \label{eq:EveX}
\end{align}
\end{subequations}    
(In particular, $\sin(\theta)$ may be integrated from $X$.)
\end{rmk}
The following proposition,
which is based on a rather simple analysis of the evolution equation for $X$,
is fundamental to the understanding of the asymptotics,
as we obtain a trichotomy of possible past limits. 
Firstly, we obtain that all variables of the solution converge.
Secondly, one of the following three things may happen: the backward converges to $\scr{F}$,
$\sin(\theta) \to 0$ and the shear diagonalizes in the limit towards the past,
or $\cos(\theta) \to 0$ and asymptotically the solution 
has the same behaviour as that a polarized solution.
%CONVERGENCE OF SIGMA
\newpage
\begin{prop} \label{prop:ConvergenceOfSigma}
Let $x \in \bpvio$ for $\ga= 2$.
Then $\limtau \vp^\tau(x)$ exists and $\limtau X(\tau) \leq 0$. 
Moreover, %denoting $\limtau (\S_+, \S_-, \theta)(\tau) = (s_+, s_-, \psi)$,
we have that either
\begin{enumerate}[label=(\alph*)]
    \item %$\sin(\psi) = 0$ and $s_+ \neq 0$, so
    $\limtau \vp^{\tau}(x) \in (\scr{D}^+ \cup \scr{D}^-) \setminus \scr{F}$, or
    \item %$\cos(\psi) = 0$, $s_- = 0$ and $s_+ \neq 0$, so
    $\limtau \vp^{\tau}(x) \in \scr{P} \setminus \scr{F}$, or
    \item %$s_+ = 0$ and $s_- = 0$, so
    $\limtau \vp^{\tau}(x) \in \scr{F}$.
\end{enumerate}
\end{prop}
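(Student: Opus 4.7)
The plan is to first establish convergence of $X$ directly from \eqref{eq:EveX}, and then derive convergence of the remaining coordinates from an auxiliary conserved quantity. Rewrite \eqref{eq:EveX} as $X' = A - B - (2-q)X$ with
\[
A := 2\sin^2(\theta)\big(3\Sigma_-^2 + \cos^2(\theta)\Sigma_+^2\big) \geq 0,
\qquad
B := 3\cos^2(\theta)\big(2M^2 + P^2\big) \geq 0.
\]
By Lemmas~\ref{lemma:IntegrabilityOfM} and~\ref{lemma:IntegrabilityOfP} we have $(2-q), M^2, P^2 \in \Lp{1}$, and $X$ is bounded as a consequence of \eqref{eq:HCNew}. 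Hence both $(2-q)X$ and $B$ lie in $\Lp{1}$. Integrating $X'$ from $\tau$ to $0$ and using boundedness of $X$, the integral $\int_\tau^0 A\,ds$ must stay bounded as $\tau\to-\infty$; combined with $A\geq 0$ this forces $A\in\Lp{1}$, so $X'\in\Lp{1}$ and $X(\tau)$ converges to some $X_\infty\in\mathbb{R}$. The inequality $X_\infty\leq 0$ then follows from \eqref{eq:EvESinTheta}: the identity $(\log\sin^2(\theta))' = -4X$ would give $\sin^2(\theta)\to+\infty$ toward $-\infty$ were $X_\infty > 0$, contradicting $\sin(\theta)\leq 1$.

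Next I would introduce $H := \Sigma_-^2 - \tfrac{1}{3}\Sigma_+^2\cos^2(\theta)$. A direct computation of $H'$ using \eqref{eq:EvESminus}, \eqref{eq:EvETh} and \eqref{eq:EvESplus} shows that the only non-obviously-integrable contribution---namely $\tfrac{4\sqrt{3}}{3}\Sigma_+^2\Sigma_-\sin^2(\theta)\cos(\theta)$ arising from $2\Sigma_-\Sigma_-'$---is exactly cancelled by the corresponding term from $-\tfrac{1}{3}\Sigma_+^2(\cos^2(\theta))'$. The remaining terms each carry a factor of $(2-q)$, $M^2$ or $P^2$, so $H'\in\Lp{1}$ and $H(\tau)\to H_0$ for some $H_0\in\mathbb{R}$. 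Combined with $X^2/3 = \Sigma_-^2\cos^2(\theta)\to X_\infty^2/3$ and $\Sigma_+^2\to(\Sigma_+^\infty)^2$, these asymptotic relations uniquely determine the non-negative asymptotic value of $\cos^2(\theta)$ (as a root of an explicit quadratic, where $\Sigma_-^2\geq 0$ selects the admissible root) whenever $(\Sigma_+^\infty)^2\neq 0$, and hence that of $\Sigma_-^2$. A continuity argument then fixes the signs of $\cos(\theta)$ and $\Sigma_-$ wherever their magnitudes stay bounded away from $0$, so that $\limtau\vp^\tau(x)$ exists.

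The trichotomy then follows from the fact that this limit must be a fixed point (cf.\ Remark~\ref{rmk:alphalimit}), so must lie in $\scr{D}^\pm\cup\scr{P}\cup\scr{F}$. Case (a) corresponds to $\sin(\theta)\to 0$; case (b) to $\cos(\theta)\to 0$ together with $\Sigma_-\to 0$ and $(\Sigma_+^\infty)^2\neq 0$; and case (c) to $(\Sigma_+^\infty)^2=0$ and $\Sigma_-\to 0$. Other conceivable configurations---such as $\cos(\theta)\to 0$ with $\Sigma_-$ bounded away from zero---are ruled out because they would force $\theta'\not\to 0$ or $\Sigma_-'\not\to 0$ at the asymptotic point, inconsistent with stationarity.

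The central obstacle is the convergence of $X$: the observation that $A\geq 0$ combined with boundedness of $X$ forces $A\in\Lp{1}$ is the crucial insight, and appears to be what is referred to in the text as the ``simple analysis of the evolution equation for $X$''. A secondary step is spotting the cancellation that yields $H'\in\Lp{1}$. The degenerate case (c), where the polar coordinates are singular at the FLRW point (Remark~\ref{rmk:SmoothEquivalence}) and where the asymptotic relations above leave $\cos^2(\theta)$ undetermined, will likely require a separate treatment to establish convergence of the $\theta$-coordinate itself.
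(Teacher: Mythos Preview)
Your argument for the convergence of $X$ and the inequality $X_\infty\le 0$ is exactly the paper's, and your observation that $A\in\Lp{1}$ is the same ``sign plus boundedness'' trick used there. From that point on, however, your route diverges from the paper's. You introduce the auxiliary quantity $H=\Sigma_-^2-\tfrac13\Sigma_+^2\cos^2\theta$, verify the cancellation that makes $H'\in\Lp{1}$, and then recover $\cos^2\theta$ (and hence $\Sigma_-^2$) as the nonnegative root of a quadratic built from the limits of $X^2$, $H$ and $\Sigma_+^2$. This is correct when $s_+\neq 0$: one checks that the discriminant is the perfect square $(\Sigma_-^2+\tfrac13\Sigma_+^2\cos^2\theta)^2$, so the nonnegative-root formula is in fact an identity for $\cos^2\theta$, and convergence follows by continuity; the sign-fixing is then unproblematic. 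What this buys you is a purely algebraic conserved-quantity argument.

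The paper takes a shorter and more robust path that you almost have in hand but do not exploit. Having shown $A\in\Lp{1}$, the paper notes that $A$ is nonnegative with bounded derivative and therefore $A(\tau)\to 0$; this single extra step immediately forces either $\sin\theta\to 0$ or $3\Sigma_-^2+\cos^2\theta\,\Sigma_+^2\to 0$, which is the trichotomy in embryo. Convergence of $\sin\theta$ (and then $\theta$, via connectedness of $\alpha(x)$) is obtained by integrating \eqref{eq:EvESinTheta} once $X$ is known to converge, after which $\Sigma_-$ is read off from $X=\sqrt{3}\Sigma_-\cos\theta$ or from $A\to 0$. No auxiliary $H$, no quadratic, and the argument does not degenerate when $s_+=0$.

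The only genuine gap in your proposal is the one you flag yourself: in the isotropizing case ($s_+=0$, $\Sigma_-\to 0$) your quadratic collapses and you give no mechanism for the convergence of $\theta$ as a coordinate, which the proposition requires. The paper's route handles this uniformly via the $\sin\theta$ equation and connectedness of the $\alpha$-limit set; adopting that step would close your argument without needing $H$ at all.
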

\begin{proof}
If $x \in \spvio$, then $\S_-(\tau) = 0$ and $\cos(\theta(\tau)) = 0$
throughout the evolution, and as $M(\tau) \to 0$ and $P(\tau) \to 0$,
and $\S_+(\tau)$ converges by Lemmata~\ref{lemma:IntegrabilityOfM} and~\ref{lemma:IntegrabilityOfP}, 
the conclusions stated in $(ii)$ apply, so we may consider $x \notin \spvio$.
In particular, as neither $\{\S_- = 0\}$ nor $\{\cos(\theta) = 0\}$ are invariant sets,
we may assume that $X (\tau) \neq 0$ for some $\tau \in (-\infty,0]$,
for else $\sin(\theta)$ grows exponentially toward the past.

Observe that, as $X$ is bounded and $2-q, M^2, P^2 \in L^1((-\infty,0]$
by Lemmata~\ref{lemma:IntegrabilityOfM} and~\ref{lemma:IntegrabilityOfP},
the first and last terms on the right-hand side of the evolution equation \eqref{eq:EveX} 
are integrable on the interval $(-\infty,0]$.
Since the term in the middle has a sign, we can apply a similar argument 
    as in Lemma~\ref{lemma:IntegrabilityOfP} to conclude that the function
\begin{equation*}
g(\tau):=\sin^2(\theta(\tau)) \big(3\S_-(\tau)^2
    + \cos^2(\theta(\tau)) \S_+(\tau)^2 \big) 
\end{equation*}
must be in $\Lp{1}$, and that the function $X$ has a limit towards the past.
In particular, $g(\tau)$ must converge to 0 as $\tau \to -\infty$,
as it is continuous, positive, integrable and has bounded derivative.
There are thus two possibilities as $\tau \to -\infty$, either $\sin(\theta(\tau)) \to 0$,
or $\cos(\theta(\tau)) \S_+(\tau) \to 0$ and $\S_-(\tau) \to 0$. 

Indeed, as $X(\tau)$ converges, we may integrate $\sin(\theta(\tau))$
and deduce that $\sin(\theta(\tau))$ converges as $\tau \to -\infty$,
and we must also have that $\limtau X(\tau) \leq 0$.
It follows readily that also $\theta(\tau)$ converges by connectedness of $\alpha(x)$,
and from the convergence of $X(\tau)$ the convergence of $\S_-(\tau)$ follows,
as long as $\cos(\theta(\tau))$ does not converge to zero.
However, if $\cos(\theta(\tau)) \to 0$, then as $g(\tau) \to 0$,
we obtain that $\S_-(\tau) \to 0$.
On the other hand, $\S_+(\tau)$ converges by Lemma~\ref{lemma:IntegrabilityOfP},
so firstly we have shown that $\limtau \vp^\tau(x)$ exists,
and secondly $3\S_-(\tau)^2 + \cos^2(\theta(\tau)) \S_+(\tau)^2$ converges,
    and if it does not convergence to zero,
then we may conclude from the fact that $g(\tau) \to 0$ that $\sin(\theta(\tau)) \to 0$.
If, on the other hand, it does converge to zero, then $\S_-(\tau) \to 0$
and $\cos(\theta(\tau)) \S_+(\tau) \to 0$ as $\tau \to -\infty$.

If $\limtau \S_+ \neq 0$, 
we may conclude that either $\sin(\theta(\tau)) \to 0$,
in which case $(i)$ holds, i.e. $\limtau \vp^{\tau}(x) \in (\scr{D}^+ \cup \scr{D}^-) \setminus \scr{F}$,
or that $\S_-(\tau) \to 0$ and $\cos(\theta) \to 0$, 
in which case $(ii)$ holds, i.e. $\limtau \vp^{\tau}(x) \in \scr{P} \setminus \scr{F}$.

On the other hand, if $\limtau \S_+(\tau) = 0$,
then we only know that either $\S_-(\tau) \to 0$ or that $\sin(\theta(\tau)) \to 0$.
In the latter case we find that, depending on the limit of $\S_-$, either $(i$) holds, i.e.
$\limtau \vp^{\tau}(x) \in (\scr{D}^+ \cup \scr{D}^-) \setminus \scr{F}$,
or $(iii)$ holds, i.e. $\limtau \vp^{\tau}(x) \in \scr{F}$.
In the former case we have that $(iii)$ holds, i.e. $\limtau \vp^{\tau}(x) \in \scr{F}$.
\end{proof}

\begin{notation} \label{not:Limits}
By the content of Proposition~\ref{prop:ConvergenceOfSigma},
we may for a given point in $\bpvi$ consider its limit and we denote 
by $\psi, s_+, s_-$ and $\mu$ the limits of $\theta, \S_+, \S_-, \Omega$ respectively.
Observe that $\mu$ may be computed from the other limits as
\begin{equation*}
   1 - \mu = \big(1 + \tfrac{1}{3}\sin^2(\psi) \big) s_+^2 + s_-^2.
\end{equation*}
We moreover use the notation
$s_+ = s_+(x) = \limtau \S_+(\vp^\tau(x))$.
Lastly, we denote $\xi:=\limtau X(\tau) = \sqrt{3} \cos(\psi) s_-$
for convenience.
\end{notation}

\begin{rmk} \label{rmk:ConvergenceOfSigma}
Recall from Remark~\ref{rmk:ReconstructionOfSigma} that we may reconstruct 
the components of the expansion-normalized shear with respect to a canonical basis from the variables 
$M, P, \theta, \S_+, \S_-$.
By Proposition~\ref{prop:ConvergenceOfSigma} we may then conclude
that the all the above functions converge.
In case $(i)$ all the off-diagonal components converge to zero as $\tau \to -\infty$,
while in case $(ii)$ we find that $\S_{11} + 2 \S_{23} \to 0$ and similarly $\S_{22} + \S_{23} \to 0$
and $\S_{33} + \S_{23} \to 0$,
and lastly in case $(iii)$ all components vanish in the limits.
\end{rmk}

Based on the trichotomy that is obtained in Proposition~\ref{prop:ConvergenceOfSigma},
we may define the following subsets of the non-vacuum phase space $\bpvio$
for $\gamma =2$,
namely we have the set of \emph{asymptotically-diagonalizing} orbits, denoted by $\advio$,
the set of \emph{half-polarized} orbits, denoted by $\hpvio$,
and the set of (past) \emph{isotropizing} orbits, denoted by $\isvio$.
These are defined respectively by cases $(i)$, $(ii)$ and $(iii)$ of Proposition~\ref{prop:ConvergenceOfSigma} above, namely
\begin{align}
\advio :=& \{ x \in \bpvio ~|~ \sin(\psi(x)) = 0 ~\text{and either}~ s_+(x) \neq 0 ~\text{or}~ s_-(x) \neq 0 \} \\
    =& \{ x \in \bpvio ~|~ \limtau \vp^\tau(x) \in
    (\scr{D}^+ \cup \scr{D}^-) \setminus \scr{F} \}, \nonumber \\
\hpvio :=& \{ x \in \bpvio ~|~ \cos(\psi(x)) = 0 = s_-(x) ~\text{and}~ s_+(x) \neq 0 \}  \\
    =& \{x \in \bpvio ~|~ \limtau \vp^\tau(x) \in \scr{P} \setminus \scr{F} \}, 
        \nonumber \\
\isvio :=& \{ x \in \bpvio ~|~ s_+(x) = 0 = s_-(x) \} \\
    =& \{x \in \bpvio ~|~ \limtau \vp^\tau(x) \in \scr{F} \}
        \nonumber.
\end{align}
It follows from Proposition~\ref{prop:ConvergenceOfSigma}
that these are well-defined, invariant subsets of the $\bpvio$,
they are mutually disjoint and lastly their union is all of $\bpvio$.
Moreover, the set $\hpvio$ contains the orbits of $\spvio$ for which $s_+ \neq 0$.

Asymptotically-diagonalizing refers to the shear diagonalizing
with respect to a canonical frame.
Half-polarized refers to the fact that despite these orbits not all being
part of the set of polarized orbits, the asymptotical behaviour is similar,
cf. the comments in the introduction in Subsection~\ref{sec:OverviewOfResults}.
Isotropizing refers to the eigenvalues of the expansion-normalized Weingarten map (or shear for that matter)
all converging to the same value toward the initial singularity.

\begin{rmk}
We could similarly try to define subsets $\advi^0$ and $\hpvi^0$ 
of the vacuum phase space $\bpvi^0$
replacing $\scr{D}^+ \cup \scr{D}^-$ and $\scr{P}$ by their vacuum boundaries respectively.
However, these might not yield well-defined invariant sets,
as convergence is certainly not guaranteed and in fact conjectured not to hold in that setting generically,
cf. \cite{HewittHorwoodWainrightExceptional}.
\end{rmk}

The sets $\advio$ and $\hpvio$ are investigated in the following two sections.
We wish to highlight the following two results.
On the one hand, in Section~\ref{sec:asympdiag},
in particular in Lemma~\ref{lemma:ExceptionalOrbitsNegativeSplus}, 
    it is shown that the set $\isvio$ of isotropizing orbits is contained within the 
    invariant set $\otpo$ of orthogonally-transitive orbits.
On the other hand, in Section~\ref{sec:halfpolarized} it is demonstrated
that the set of half-polarized solutions $\hpvio$
is a set of zero measure in $\bpvio$, 
in fact contained in a countable union of co-dimension~1 submanifolds.
In this sense the generic behaviour is that of orbits in $\advio$,
which is formalized and proven in Section~\ref{sec:MainTheorem}.
%%%%%%%%%%%%%%%%%%%%%%%%%%%%%%%%%%%%%%%%%%%%%%%%%%%%%%%%%%%%%%%%%%%%%%%%%%%
%
% CH4: ASYMPTOTICALLY-DIAGONALIZING ORBITS
%
%%%%%%%%%%%%%%%%%%%%%%%%%%%%%%%%%%%%%%%%%%%%%%%%%%%%%%%%%%%%%%%%%%%%%%%%%%%
\section{The asymptotically-diagonalizing orbits} \label{sec:asympdiag}
\noindent
We consider first in more detail the set
    of asymptotically-diagonalizing orbits $\advio$. 
In this section we show in particular that for such orbits
the limit of the eigenvalues of the expansion-normalized Weingarten map are all positive,
or in other words, the solutions end up in the interior of the Jacobs set.
Moreover, orbits in this set end up in a rather specific region of the Jacobs set, 
depending on the index of the direction orthogonal to the Abelian $G_2$.

To begin, the following lemma shows that there are no orbits converging 
to the lines in $\scr{D}^\pm$ lying on the boundaries
of $\scr{M}_{23}^\pm$ and $\scr{M}_{32}^\pm$,
corresponding to the vanishing of one of the eigenvalues 
of the expansion-normalized Weingarten map.
\begin{lemma} \label{lemma:ExponentialM}
Let $x \in \advio$ for $\ga= 2$.
Then $s_+(x) + \xi(x) > -1$. 
In particular, $M(\tau) \to 0$ exponentially and also $q \to 2$ exponentially as $\tau \to -\infty$.
\end{lemma}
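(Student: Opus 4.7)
The plan is to argue by contradiction. The key quantity is $Y := 1 + \S_+ + X$, in terms of which the $M$-equation reads $M'/M = (q-2) + 2Y$, with $\limtau Y(\tau) = 1 + s_+(x) + \xi(x)$. Suppose for contradiction that $1 + s_+(x) + \xi(x) \leq 0$.

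If $1 + s_+ + \xi < 0$ strictly, then $M'/M \to 2(1 + s_+ + \xi) < 0$, which forces $M$ to grow exponentially as $\tau \to -\infty$, contradicting $M \leq 1$. The delicate case is the borderline $1 + s_+ + \xi = 0$, where $Y \to 0$. Here I would first extract two quantitative consequences. Since the limit lies in $\scr{D}^\pm \setminus \scr{F}$, we have $s_+^2 + s_-^2 < 1$ strictly; combined with the chord equation $s_+ + \sqrt{3}\cos(\psi)s_- = -1$ this excludes the two Taub points on the Kasner circle, so $s_- \neq 0$, whence $\xi = \sqrt{3}\cos(\psi)s_- \neq 0$, and Proposition~\ref{prop:ConvergenceOfSigma} gives $\xi < 0$ strictly. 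Consequently $(\sin\theta)' = -2X\sin\theta$ with $X \to \xi < 0$ implies $\sin\theta$ decays exponentially as $\tau \to -\infty$, while $M'/M \to 0$ together with L'H\^opital yield $\ln M(\tau)/\tau \to 0$, so $M$ decays slower than any exponential. In particular, $\sin^2\theta = o(M^2)$ as $\tau \to -\infty$.

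Next I would compute the evolution of $Y$ directly from the equations for $\S_+$ and $X$, obtaining
\begin{equation*}
Y' = -(2-q)Y + \tfrac{2}{3}M^2\bigl(1 - 10\cos^2\theta\bigr) + 3P^2 \sin^2\theta + 2\sin^2\theta\bigl(3\S_-^2 + \cos^2\theta \S_+^2\bigr).
\end{equation*}
Using $\cos^2\theta \to 1$, $|Y| \to 0$, and $\sin^2\theta = o(M^2)$, the leading asymptotic behaviour is $Y' = -6M^2 + o(M^2)$; in particular there exists $\tau_0 \leq 0$ with $Y' \leq -3M^2 < 0$ on $(-\infty, \tau_0]$. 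Hence $Y$ is strictly decreasing on this half-line and tends to $0$, which forces $Y(\tau) < 0$ for all $\tau \leq \tau_0$. Consequently $\int_\tau^0 Y(s)\,ds \leq \int_{\tau_0}^0 Y(s)\,ds$ is bounded above as $\tau \to -\infty$.

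The hard part is behind us. To close the contradiction, integrate $M'/M = (q-2) + 2Y$ from $\tau$ to $0$: using Lemma~\ref{lemma:IntegrabilityOfM} to ensure $(q-2) \in L^1((-\infty, 0])$, combined with $\ln M(\tau) \to -\infty$, forces $\int_\tau^0 Y(s)\,ds \to +\infty$ as $\tau \to -\infty$, contradicting the upper bound just derived. This establishes $s_+ + \xi > -1$. Finally, $M'/M \to 2(1 + s_+ + \xi) > 0$ gives exponential decay of $M$ toward the past, and $2 - q = 2(1 + \sin^2\theta/3)M^2$ inherits exponential convergence to $0$.
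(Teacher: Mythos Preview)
Your proof is correct and follows essentially the same route as the paper: both study the quantity $Y_M = 2Y = 2 + 2\S_+ + 2X$, derive its evolution equation, use $\xi < 0$ to get exponential decay of $\sin\theta$, and then exploit that the $\sin^2\theta$ contributions are negligible compared to the $-12M^2$ term to force a contradiction with $Y \to 0$. The only difference is cosmetic: the paper packages the endgame into its general-purpose Lemma~\ref{lemma:EdgeCases}, whereas you inline that argument directly via the L'H\^opital comparison $\sin^2\theta = o(M^2)$ and the sign analysis of $Y'$.
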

\begin{proof}
From Lemma~\ref{lemma:IntegrabilityOfM} and Proposition~\ref{prop:ConvergenceOfSigma}
we know that $M(\tau) \to 0$ as $\tau \to -\infty$
and we know that $\S_+, \S_-$ and $\theta$ converge.
We then may conclude that $s_+(x) +  \xi(x) \geq -1$,
because else $M$ would grow exponentially, 
as $\log(M)' = q + 2 \S_+ + 2 X \to 2 + 2 s_+ + 2 \xi$.
What remains to prove is that $s_+(x) + \xi(x) = -1$ cannot happen,
thus assume to the contrary that $x \in \advio$ is such that $s_+(x) + \xi(x) = -1$.

Firstly, by Lemma~\ref{lemma:IntegrabilityOfM} it holds that $\mu(x) >0$
and thus, by taking the limit of \eqref{eq:HCNew}, we find that 
\begin{equation*}
s_+^2(x) + s_-^2(x) = 1 - \mu(x) < 1.
\end{equation*}
It follows that $s_+(x) > -1$ and thus $\xi(x) < 0$.
Now, as $\sin(\theta)' = - 2 X \sin(\theta)$,
so that 
\begin{equation*}
\log(\sin(\theta))' \to - 2 \xi(x) > 0,
\end{equation*}
we find that $\sin(\theta(\tau)) \to 0$ exponentially as $\tau \to -\infty$.

On the other hand, consider $Y_M := 2 + 2\S_+ + 2 X$.
This function satisfies the evolution equation
\begin{equation} \label{eq:EvEYM}
Y_M' = -(2-q)Y_M + \sin^2(\theta) \big( 6 P^2 + 12 \S_-^2 
        + 4 \cos^2(\theta) \S_+^2 + \tfrac{40}{3} M^2 \big) - 12 M^2.
\end{equation}
We may now apply Lemma~\ref{lemma:EdgeCases} for the case that
$u,v,w, \alpha,\beta,m$ and $c$ are as in the table below.
\begin{center}
\begin{tabular}{c|c|c|c|c|c|c}
   $u$  & $v$ & $w$ &$\alpha$ & $\beta$ & $m$ & $c$  \\ \hline
   $M$  & $Y_M$ & 0 & $2-q$ & $2-q$ & 2 & 12 
\end{tabular}
\end{center}
Moreover, $f$ is the middle term Equation~\eqref{eq:EvEYM} above.
As $\sin^2(\theta(\tau)) \to 0$ exponentially as $\tau \to -\infty$ 
and the other variables are bounded, 
we obtain that $f$ is bounded by a function that
converges to zero exponentially as $\tau \to -\infty$.
We conclude that $\limtau Y_M (\tau) > 0$, thus we have a contradiction
to the assumption that $s_+(x) + \xi(x) = -1$.
Moreover, recalling Equation \eqref{eq:EvEM}, we find that
\begin{equation*}
Y_M = \log(M)' + 2-q,
\end{equation*}
we find that $\limtau \log (M(\tau))' > 0$ given that $2 - q(\tau) \to 0$ as $\tau \to -\infty$,
and thus $M(\tau) \to 0$ exponentially as $\tau \to -\infty$.
\end{proof}

Next, for the exceptional orbits, i.e. for orbits in $\bpvio \setminus \otpo$,
we can obtain more information as we know that also $P(\tau)$ must converge as $\tau \to -\infty$.
Observe that this statement also holds for the half-polarized orbits in $\hpvio$.

\begin{lemma} \label{lemma:ExceptionalOrbitsNegativeSplus}
Let $x \in \bpvio \setminus \otpo$ for $\gamma = 2$. 
Then $s_+(x) < 0$.
\end{lemma}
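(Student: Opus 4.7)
The plan is to analyze the evolution of $\log P$ along the orbit. Since $P(x) > 0$ and $\{P = 0\}$ is an invariant set by Remark~\ref{rmk:SmoothEquivalence}, we have $P(\tau) > 0$ throughout. Combined with $P(\tau) \to 0$ as $\tau \to -\infty$ from Lemma~\ref{lemma:IntegrabilityOfP}, this forces $\log P(\tau) \to -\infty$, so by Equation~\eqref{eq:EvEP}
\begin{equation*}
I := \int_{-\infty}^{0} \big[-(2-q) - (3+\sin^2(\theta))\S_+ + X\big]\, d\tau \;=\; \log P(0) - \limtau \log P(\tau) \;=\; +\infty.
\end{equation*}

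First I would extract that $s_+(x) \leq 0$. By Lemma~\ref{lemma:IntegrabilityOfM} the term $2-q$ lies in $L^1((-\infty,0])$, while the remainder of the integrand converges as $\tau \to -\infty$ to $L := -(3+\sin^2(\psi))\,s_+ + \xi$, using Proposition~\ref{prop:ConvergenceOfSigma} and Notation~\ref{not:Limits}. If $L < 0$ the integrand in $I$ would eventually be bounded above by $L/2 < 0$, making $I = -\infty$, a contradiction. Hence $L \geq 0$, and combined with $\xi \leq 0$ this rearranges to $s_+(x) \leq \xi/(3+\sin^2(\psi)) \leq 0$; the equality case $s_+(x) = 0$ moreover forces $\xi = 0$.

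To upgrade to $s_+(x) < 0$ I would rule out $s_+(x) = 0$ along the trichotomy of Proposition~\ref{prop:ConvergenceOfSigma}. If $x \in \hpvio$ the conclusion is immediate since by definition $s_+ \neq 0$ on that set. If $x \in \advio$ with $s_+ = 0$, then $\sin(\psi) = 0$ and $s_- \neq 0$, so $\xi = \sqrt{3}\cos(\psi)\,s_- = \pm\sqrt{3}\,s_- \neq 0$, contradicting $\xi = 0$.

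The main obstacle is the remaining case $x \in \isvio$, where $s_+ = s_- = 0$ and the above argument degenerates ($L = 0$). Here I would show directly that no such orbit can have $P(x) > 0$, via an exponential rate estimate. Since $q \to 2$ and $\S_+, X \to 0$, Equation~\eqref{eq:EvEM} gives $\log(M)'(\tau) \to 2$, so for any $\varepsilon \in (0,2)$ one has $M(\tau) = O(e^{(2-\varepsilon)\tau})$ as $\tau \to -\infty$. Since $2-q \leq \tfrac{8}{3}\,M^2$, integrating $(1-\Omega)' = 2(2-q)\Omega \leq \tfrac{16}{3}\,M^2$ backward from $\tau = 0$ then yields $1 - \Omega(\tau) = O(e^{(4-2\varepsilon)\tau})$, and the Hamiltonian constraint~\eqref{eq:HCNew} bounds $|\S_+|, |\S_-|$ and $P$ by $\sqrt{1-\Omega} = O(e^{(2-\varepsilon)\tau})$ (and $2-q$ by $O(e^{(4-2\varepsilon)\tau})$). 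Consequently each term in $\log(P)' = -(2-q) - (3+\sin^2(\theta))\S_+ + X$ is $O(e^{(2-\varepsilon)\tau})$, so $|\log(P)'|$ is integrable on $(-\infty, 0]$, contradicting $I = +\infty$. This excludes $x \in \isvio$ from $\bpvio \setminus \otpo$ and establishes $s_+(x) < 0$ in all remaining cases.
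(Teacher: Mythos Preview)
Your argument is correct, but it takes a different route from the paper's. The paper introduces the auxiliary function $Z := P^2 \sin(\theta)$, observes that $Z' = 2[-(2-q) - (3+\sin^2(\theta))\S_+]\,Z$, and then applies the technical Lemma~\ref{lemma:EdgeCases} in one shot with $u = Z$, $v = -6\S_+$, and $f = 12M^2$ (the latter decaying exponentially). This yields $-6 s_+ > 0$ directly, without splitting into cases.

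Your proof is more elementary in that it avoids the black-box Lemma~\ref{lemma:EdgeCases}: from the divergence of $\int_{-\infty}^0 (\log P)'\,d\tau$ you first extract the weak inequality $s_+ \leq 0$ (using only $\xi \leq 0$ and $2-q \in L^1$), and then dispose of $s_+ = 0$ by walking through the trichotomy of Proposition~\ref{prop:ConvergenceOfSigma}. The cost is the separate handling of the isotropizing case, where you run a bootstrap (exponential decay of $M$, then of $1-\Omega$, hence of $\S_\pm$ and $P$) to force $(\log P)'$ to be integrable and reach a contradiction. This bootstrap is essentially an \emph{ad hoc} replacement for what Lemma~\ref{lemma:EdgeCases} packages uniformly. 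A pleasant by-product of your route is that the inclusion $\isvio \subset \otpo$ falls out explicitly during the proof, whereas in the paper it is read off afterwards as a corollary of the strict inequality $s_+ < 0$.
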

\begin{proof}
Let us consider the function $Z:= P^2 \sin(\theta)$.
As $x \notin \otpo$, $Z(x) > 0$ and we may compute that
\begin{align}
Z'  &= 2\big[-(2-q) - (3 + \sin^2(\theta)) \S_+ \big] Z.
\end{align}
Then, as $P^2 \geq Z$, we have that
\begin{align*}
-6\S_+' &= -(2-q) (-6\S_+) + 12 M^2  - 18 P^2 \\
    &\leq -(2-q)(-6 \S_+) + 12 M^2 - 18 Z.
\end{align*}
Thus we may now apply Lemma~\ref{lemma:EdgeCases}
for the case that $u,v,w, \alpha,\beta,m$ and $c$ are are as in the table below.
\begin{center}
\begin{tabular}{c|c|c|c|c|c|c}
   $u$  & $v$ & $w$ &$\alpha$ & $\beta$ & $m$ & $c$  \\ \hline
   $Z$  & $-6\S_+$ & $\tfrac{1}{3} \sin^2(\theta)$ & $2(2-q)$ & $2-q$ & 1 & 18 
\end{tabular}
\end{center}
Moreover $f = 12 M^2$,
where we note that $M(\tau)^2 \to 0$ exponentially as $\tau \to -\infty$
by Lemma~\ref{lemma:ExponentialM}.
We may thus conclude that $s_+(x) < 0$.
\end{proof}

Next we can show that unless $s_+(x) = 0$,
which by the above can only happen if $x \in \otpo$,
we must have that $\xi(x) < 0$ for any $x \in \advio$.

\begin{lemma} \label{lemma:ExponentialSinTheta}
Let $x \in \advio$ for $\ga= 2$.
If $s_+(x) \neq 0$ then $\xi(x) < 0$.
In particular, $\sin(\theta(\tau)) \to 0$ exponentially as $\tau \to -\infty$.
\end{lemma}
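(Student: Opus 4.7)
The plan is to argue by contradiction, so suppose $s_+(x) \neq 0$ but $\xi(x) = 0$. Since $x \in \advio$ forces $\sin(\psi(x)) = 0$ and hence $\cos(\psi(x)) = \pm 1$, the defining relation $\xi = \sqrt{3}\cos(\psi)\,s_-$ forces $s_-(x) = 0$. The backward limit is thus a point of the form $(0,0,\psi,s_+,0)$ with $\psi \in \{0,\pi\}$ and $s_+ \neq 0$. In particular the auxiliary quantity $R := 3\S_-^2 + \cos^2(\theta)\S_+^2$ converges to $s_+^2 > 0$, so $R \geq s_+^2/2$ on some interval $(-\infty,\tau_0]$.

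The first key step is to establish exponential decay of $2M^2 + P^2$ as $\tau \to -\infty$. Exponential decay of $M^2$ is exactly Lemma~\ref{lemma:ExponentialM}. For $P$, either $x \in \otpo$ and $P \equiv 0$, or $x \notin \otpo$; in the latter case Lemma~\ref{lemma:ExceptionalOrbitsNegativeSplus} gives $s_+(x) < 0$, and then \eqref{eq:EvEP} yields
\begin{equation*}
(\log P)' = -(2-q) - \bigl(3 + \sin^2(\theta)\bigr)\S_+ + \sqrt{3}\cos(\theta)\S_- \to -3 s_+ > 0,
\end{equation*}
so that $P$ also decays exponentially.

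The second and main step concerns the ODE for $Y_\theta := -2X$. Since $\log(\sin(\theta))' = Y_\theta$, proving $\lim_{\tau\to-\infty} Y_\theta > 0$ will simultaneously contradict $Y_\theta \to -2\xi = 0$ and deliver the exponential decay of $\sin(\theta)$ claimed in the statement. Differentiating \eqref{eq:EveX} gives
\begin{equation*}
Y_\theta' = -(2-q)Y_\theta + 6\cos^2(\theta)\bigl(2M^2 + P^2\bigr) - 4\sin^2(\theta) R,
\end{equation*}
and the bound $R \geq s_+^2/2$ on $(-\infty,\tau_0]$ yields
\begin{equation*}
Y_\theta' \leq -(2-q)Y_\theta + 6\cos^2(\theta)\bigl(2M^2+P^2\bigr) - 2 s_+^2 \sin^2(\theta).
\end{equation*}
This matches the structure needed to invoke Lemma~\ref{lemma:EdgeCases} in the same spirit as in Lemma~\ref{lemma:ExponentialM}, taking $u = \sin(\theta)$ (for which $u' = Y_\theta u$), $v = Y_\theta$, $w = 0$, $\alpha = 0$, $\beta = 2-q$, $m = 2$, $c = 2 s_+^2$, and $f = 6\cos^2(\theta)(2M^2+P^2)$, whose exponential smallness is secured by the first step. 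The lemma then gives $\lim Y_\theta > 0$, contradicting $\xi = 0$, and the exponential decay of $\sin(\theta)$ follows from $\log(\sin\theta)' \to -2\xi > 0$.

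The main potential obstacle is verifying that Lemma~\ref{lemma:EdgeCases} still applies despite $u = \sin(\theta)$ having no intrinsic damping ($\alpha = 0$), in contrast to the application in Lemma~\ref{lemma:ExponentialM} where $u = M$ came with damping $\alpha = 2-q$. Should this be an issue, a natural workaround is to use $u^2 = \sin^2(\theta)$ with $(u^2)' = 2 Y_\theta u^2$, or to introduce an artificial damping by replacing $u$ with $\sin(\theta)e^{\varepsilon\tau}$ and then letting $\varepsilon \to 0^+$; the $c$ and $f$ constructed above are otherwise clean and the remaining verifications are routine.
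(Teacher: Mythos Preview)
Your proof is correct and follows essentially the same route as the paper: assume $\xi(x)=0$, deduce $s_-(x)=0$, establish exponential decay of $M$ and $P$, then apply Lemma~\ref{lemma:EdgeCases} to $u=\sin(\theta)$ and $v=-2X$ with $\alpha=0$, $\beta=2-q$, $m=2$. Your bound via $R:=3\S_-^2+\cos^2(\theta)\S_+^2\to s_+^2$ is slightly cleaner than the paper's splitting of $\cos^2(\theta)$ and $\S_+^2$ separately (and yields $c=2s_+^2$ rather than $c=s_+^2$), but the argument is the same.

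Your final paragraph is unnecessary worry: the hypothesis on $\alpha$ in Lemma~\ref{lemma:EdgeCases} is $\alpha\in C^1_b(I)\cap L^1(I)$, and the zero function qualifies trivially; the paper itself uses $\alpha=0$ in exactly this application.
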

\begin{proof}
It must be so that $\xi(x) \leq 0$ because of the equation 
    $\log(\sin(\theta))' = - 2 X$.
As $\cos^2(\psi(x)) = 1$, we only have to show that $s_-(x) = 0$ is not possible.
Therefore assume to the contrary that $x$ is such that $s_-(x) = 0$, which implies that also $\xi(x) = 0$.

If $x \notin \otpo$, i.e. $P(x) > 0$, then as 
\begin{equation}
\log(P)' = -(2-q) - (3 + \sin^2(\theta)) \S_+ + X,
\end{equation}
we obtain that $\limtau \log(P(\tau))' = -3 s_+(x)$.
On the other hand, by Lemma~\ref{lemma:ExceptionalOrbitsNegativeSplus}
we must in fact have that $s_+(x) < 0$.
Therefore $P(\tau) \to 0$ exponentially as $\tau \to -\infty$.

On the other hand, if $P(x) = 0$,
it remains so as $\otpo$ is an invariant set.
Note that also $M(\tau) \to 0$ exponentially as $\tau \to -\infty$ by
Lemma~\ref{lemma:ExponentialM},
so that for any $x \in \advio$ we have that $M(\tau)$ and $P(\tau)$ may be bounded by a function
that converges to zero exponentially as $\tau \to -\infty$.

Let $T < 0$ be such that for any $\tau \leq T$
it holds that $\cos^2(\theta) > \frac{1}{2}$, 
and such that $|\S_+^2 - s_+^2| < \frac{1}{2} s_+^2$.
Such a $T$ exists as $\cos^2(\theta) \to 1$ and $\S_+^2 \to s_+^2 >0$.
We may rewrite and upper bound the evolution equation for $X$, 
i.e. Equation \eqref{eq:EveX}, as
\begin{align*}
- 2X' &= -(2-q)(-2X) 
    - 12 \sin^2(\theta) \S_-^2
    + 6 \cos^2(\theta)(2M^2 +P^2) \\
    &\quad - 4 \sin^2(\theta) [\cos^2(\theta) - \tfrac{1}{2} + \tfrac{1}{2}] 
        (\S_+^2 - s_+^2 + s_+^2) \\
    &\leq -(2-q)(-2X) 
    + 6 \cos^2(\theta)(2M^2 +P^2)
    - \sin^2(\theta) s_+^2
\end{align*}
Now we apply Lemma~\ref{lemma:EdgeCases}
for the case that
$u,v,w, \alpha,\beta,m$ and $c$ are as in the table below.
\begin{center}
\begin{tabular}{c|c|c|c|c|c|c}
   $u$  & $v$ & $w$ &$\alpha$ & $\beta$ & $m$ & $c$  \\ \hline
   $\sin(\theta)$  & $-2X$ & 0 & 0 & $2-q$ & 2 & $s_+^2$ 
\end{tabular}
\end{center}
Here the relevant interval is $(-\infty, T]$,
where we note that $f = \cos^2(\theta)(M^2 + P^2)$ may be bounded
by a function exponentially converging to zero towards the past.
The conclusion is that $\xi(x) < 0$, which is a contradiction to $s_-(x) =0$.
\end{proof}

\begin{lemma} \label{lemma:ExponentialP}
Let $x \in \advio \setminus \otpo$ for $\gamma = 2$.
Then $3 s_+(x) - \xi(x) < 0$.
In particular, $P(\tau) \to 0$ exponentially as $\tau \to -\infty$.
\end{lemma}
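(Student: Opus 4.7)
The plan is to introduce an auxiliary function $Y_P$ whose limit equals $-3s_+(x) + \xi(x)$, show this limit is strictly positive via an application of Lemma~\ref{lemma:EdgeCases}, and thereby obtain both conclusions at once: the strict inequality $3s_+ - \xi <0$ is immediate, and since $2-q\to 0$ by Lemma~\ref{lemma:IntegrabilityOfM}, the relation
\[
\log(P)' = -(2-q) + Y_P
\]
read off from \eqref{eq:EvEP} will yield $\log(P)'(\tau) \to -3s_+(x)+\xi(x) > 0$, from which exponential decay of $P$ as $\tau \to -\infty$ follows by a standard integration argument. The natural choice is
\[
Y_P := -\bigl(3 + \sin^2(\theta)\bigr) \S_+ + X,
\]
so that $\limtau Y_P(\tau) = -3 s_+(x) + \xi(x)$ once we know that all relevant variables converge.

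First I would chain the preceding lemmas: since $x \in \advio \setminus \otpo$, Lemma~\ref{lemma:ExceptionalOrbitsNegativeSplus} gives $s_+(x)<0$, so in particular $s_+(x) \neq 0$; Lemma~\ref{lemma:ExponentialSinTheta} then yields $\xi(x)<0$ and the exponential decay $\sin(\theta(\tau)) \to 0$ as $\tau \to -\infty$, while Lemma~\ref{lemma:ExponentialM} gives the analogous exponential decay of $M(\tau)$. Using $(\sin^2\theta)' = -4\sin^2(\theta) X$, which follows from \eqref{eq:EvETh}, together with \eqref{eq:EvESplus} and \eqref{eq:EveX}, a direct computation yields
\[
Y_P' = -(2-q)\,Y_P - 12 P^2 + F,
\]
where
\[
F := 8\sin^2(\theta) M^2 + 4\sin^2(\theta) X\, \S_+ + 2\sin^2(\theta)\bigl(3\S_-^2 + \cos^2(\theta)\S_+^2\bigr).
\]
Every summand in $F$ carries an overall $\sin^2(\theta)$ factor with bounded remainder, so $F$ decays exponentially as $\tau \to -\infty$.

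Finally, I would note that the pair $(P, Y_P)$ fits the same template as the pair $(M, Y_M)$ used in Lemma~\ref{lemma:ExponentialM}: equation \eqref{eq:EvEP} takes the form $P' = \bigl[-(2-q) + Y_P\bigr] P$, and the equation for $Y_P$ derived above has the required negative sign in front of $P^2$ plus an exponentially small error. I would thus invoke Lemma~\ref{lemma:EdgeCases} with $u = P$, $v = Y_P$, $w = 0$, $\alpha = \beta = 2-q$, $m = 2$, $c = 12$, and $f = F$ to conclude $\limtau Y_P(\tau) > 0$, which is exactly $3 s_+(x) - \xi(x) < 0$. The main obstacle is essentially organisational rather than conceptual: one must verify the evolution equation for $Y_P$ in the clean form above, confirm that the hypotheses of Lemma~\ref{lemma:EdgeCases} are actually satisfied (in particular the exponential smallness of $F$, which is where the exponential decay of $\sin(\theta)$ from Lemma~\ref{lemma:ExponentialSinTheta} enters decisively), and then translate $\limtau Y_P > 0$ into the stated exponential convergence $P(\tau) \to 0$ via integration of $\log(P)'$ over a sufficiently far interval to the past.
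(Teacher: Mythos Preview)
Your proposal is correct and follows essentially the same route as the paper: the same auxiliary function $Y_P = -(3+\sin^2\theta)\S_+ + X$, the same evolution equation (your $F$ is exactly the paper's $2\sin^2(\theta)(\sqrt{3}\S_- + \cos(\theta)\S_+)^2 + 8\sin^2(\theta)M^2$ before completing the square), and the same invocation of Lemma~\ref{lemma:EdgeCases} with identical parameters. The only difference is organizational: the paper sets up a contradiction by assuming $3s_+ = \xi$ and deduces $\xi<0$ algebraically from that equality, whereas you invoke Lemma~\ref{lemma:ExponentialSinTheta} directly to get the exponential decay of $\sin(\theta)$ up front and then run Lemma~\ref{lemma:EdgeCases} without a contradiction wrapper; both are equally valid.
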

\begin{proof}
From Lemma~\ref{lemma:IntegrabilityOfP} we know that $P(\tau) \to 0$ as $\tau \to -\infty$.
We then may conclude that $3s_+(x) - \xi(x) \leq 0$,
because else $\limtau \log(P)' < 0$, thus $P$ would grow exponentially.

Now assume to the contrary that $3 s_+(x) = \xi(x) $
for some $x \in \advio \setminus \otpo$.
From Lemma~\ref{lemma:ExceptionalOrbitsNegativeSplus} we know that $s_+(x) < 0$,
and we may conclude that $\xi(x) < 0$, and in particular 
we have that $\sin(\theta(\tau)) \to 0$ exponentially as $\tau \to -\infty$.

Observe on the other hand, that the function
$Y_P:= -(3+\sin^2(\theta)) \S_+ + X$,
satisfies
\begin{align*}
Y_P' &= - (2-q) Y_P + 8 \sin^2(\theta) M^2 + 2 \sin^2(\theta)
    \big( \sqrt{3} \S_- + \cos(\theta) \S_+ \big)^2 - 12 P^2.
\end{align*}
Now we may apply Lemma~\ref{lemma:EdgeCases}
for the case that
$u,v,w, \alpha,\beta,m$ and $c$ are as in the table below.
\begin{center}
\begin{tabular}{c|c|c|c|c|c|c}
   $u$  & $v$ & $w$ &$\alpha$ & $\beta$ & $m$ & $c$  \\ \hline
   $P$  & $Y_P$ & 0 & $2-q$ & $2-q$ & 2 & 12
\end{tabular}
\end{center}
Moreover $f$ is the sum of the two terms in the middle of
the right-hand side of the evolution equation above.
Observe that $f$ is bounded by a function that converges to 0 exponentially,
as $\sin^2(\theta)$ and $M^2$ converge to 0 exponentially as $\tau \to -\infty$.
The conclusion then reads that $3 s_+(x) < \xi(x)$, 
which yields a contradiction.
The exponential rate is then a simple consequence of the fact that 
\begin{equation*}
Y_P = \log(P)' + 2 - q,
\end{equation*}
thus $\limtau \log(M)' >0$,
given that $2-q \to 0$ as $\tau \to -\infty$,
and thus $P(\tau)$ converges to zero exponentially as $\tau \to -\infty$.
\end{proof}

\begin{rmk} \label{rmk:EigenvaluesOfENSFF}
From Remark~\ref{rmk:ConvergenceOfSigma}, we know that with respect to a canonical frame
the components of the expansion-normalized shear $\S$ converge.
Denoting the expansion-normalized Weingarten map by $K$,
then also the components of $K$ converge with respect to a canonical frame.
We denote the limits of the components by $\rK_{ij}$, for $i,j \in \{1,2,3\}$.

For an orbit in $\advio$,
all off-diagonal elements of the expansion-normalized shear converge to~0 toward the past.
Hence the limits of the eigenvalues of the expansion-normalized Weingarten map
may be read off the diagonal of $\rK$ written in canonical coordinates.

Thus they can be reconstructed from $s_{+}$ and $s_{-}$ as
\begin{align*}
\rK_{11} &= \tfrac{1}{3} \big( 1 - 2 s_+ \big), \\
\rK_{22} &= \tfrac{1}{3} \big( 1 + s_{+} + \sqrt{3} s_- \big), \\
\rK_{33} &= \tfrac{1}{3} \big( 1 + s_{+} - \sqrt{3} s_- \big).
\end{align*}
Consider, for example, $x \in \advio$ for $\gamma = 2$,
for which $\psi(x) = 0$.
Then, as $q \to 2$, it follows that
$\rK_{22}$ is equal to the past limit of $\frac{1}{6}\log(M)'$,
while $\rK_{33}$ is the past limit of $\frac{1}{6} \log(M \sin^2(\theta))'$
and $\rK_{11}$ is the past limit of $\frac{1}{6}\log(M P^2 \sin^2(\theta))'$.
\end{rmk}

\subsection{The orthogonally-transitive case}
Let us summarize the information we obtained about the orthogonally-transitive case,
i.e. regarding $x \in \advio \cap \otpo$.
The results we obtain in this case are not entirely new,
cf. \cite{HewittWainwrightClassB} and \cite{RadermacherStiff},
although they are slightly more refined.
Note that $\advio$ excludes solutions converging to the set $\scr{F}$.
We obtain that $s_+(x) + \xi(x) > -1$ by Lemma~\ref{lemma:ExponentialM}.
On the other hand, if $s_+(x) = 0$, then $s_-(x) \neq 0$, for else $\alpha(x) \subset \scr{F}$,
and thus by Proposition~\ref{prop:ConvergenceOfSigma} we must have $\xi(x) < 0$.
But also if $s_+(x) \neq 0$ then $\xi(x) < 0$ by Lemma~\ref{lemma:ExponentialSinTheta}.
The intersection of these two conditions means that $\vp^{\tau}(x)$
converges to a point in $\scr{M}^+_{32} \cup \scr{M}^-_{23}$ as $\tau \to -\infty$.
We refer to Figure~\ref{fig:Mij} for 
the projections of these sets to the Jacobs disc $\scr{D}$ in the $\S_+ \S_-$-plane.

For orbits in $\advio \cap \otpo$ the limits of eigenvalues of the expansion-normalized Weingarten map are not necessarily all positive.
If the backward orbit converges to a point such that $\S_+ \geq 1/2$,
then we recall that then $\rK_{11} \leq 0$,
cf. Remark~\ref{rmk:EigenvaluesOfENSFF} above.
In canonical coordinates, cf. Appendix~\ref{sec:AppendixEquations},
$e_1$ corresponds to the direction $a^\#$,
which is the direction orthogonal to the Abelian $G_2$,
and in the orthogonally-transitive case is also an eigenvector of the shear,
thus also of the expansion-normalized Weingarten map $K$.
Note moreover that the largest eigenvalue of $K$ is distinct from the others.

In summary, we have the following proposition.
\begin{prop} \label{prop:NonExceptionalAD}
Let $x \in \advio \cap \otpo$ for $\gamma = 2$.
Then $\vp^{\tau}(x)$ converges to a point in $\scr{M}_{23}^- \cup \scr{M}_{32}^+$
as $\tau \to -\infty$.
In particular, the limits of the eigenvalues of the expansion-normalized Weingarten map
exist and if one of them is non-positive, then it is the limit
of the eigenvalue whose eigenspace coincides with the direction orthogonal
to the Abelian $G_2$.
\end{prop}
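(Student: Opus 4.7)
The plan is essentially bookkeeping, as the substantive work has already been done in Lemma~\ref{lemma:ExponentialM}, Lemma~\ref{lemma:ExponentialSinTheta} and Proposition~\ref{prop:ConvergenceOfSigma}; the present proposition only assembles the consequences into geometric form. In particular, since $x \in \advio$ forces $\sin(\psi(x)) = 0$, we have $\cos(\psi(x)) \in \{+1,-1\}$ and hence $\xi(x) = \pm\sqrt{3}\, s_-(x)$.

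First I would establish the two key inequalities at the limit point. The inequality $s_+(x) + \xi(x) > -1$ is immediate from Lemma~\ref{lemma:ExponentialM}. For the strict inequality $\xi(x) < 0$ I would split on $s_+(x)$: if $s_+(x) \neq 0$ then Lemma~\ref{lemma:ExponentialSinTheta} applies directly; if $s_+(x) = 0$ then the requirement in the definition of $\advio$ that the limit lie in $(\scr{D}^+ \cup \scr{D}^-) \setminus \scr{F}$ forces $s_-(x) \neq 0$, and combined with $\xi(x) \leq 0$ from Proposition~\ref{prop:ConvergenceOfSigma} we conclude $\xi(x) < 0$.

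Next I would split on whether $\psi(x) = 0$ or $\psi(x) = \pi$ to translate the above into the $(s_+,s_-)$-plane. When $\psi = 0$ the two inequalities read $s_- < 0$ and $s_+ + \sqrt{3}\, s_- > -1$; combined with $s_+^2 + s_-^2 = 1 - \mu(x) < 1$, which follows from Lemma~\ref{lemma:IntegrabilityOfM} and Notation~\ref{not:Limits}, this is precisely the defining condition of $\scr{M}_{32}^+$. The case $\psi = \pi$ is symmetric and yields $\scr{M}_{23}^-$.

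For the eigenvalue statement I would invoke Remarks~\ref{rmk:ConvergenceOfSigma} and~\ref{rmk:EigenvaluesOfENSFF}: for $x \in \advio$ the off-diagonal components of $\S$ vanish in the limit in a canonical frame, so the limits $\rK_{ii}$ of the eigenvalues of $K$ are
\[
\rK_{11} = \tfrac{1}{3}\bigl(1 - 2 s_+\bigr), \quad
\rK_{22} = \tfrac{1}{3}\bigl(1 + s_+ + \sqrt{3}\, s_-\bigr), \quad
\rK_{33} = \tfrac{1}{3}\bigl(1 + s_+ - \sqrt{3}\, s_-\bigr).
\]
The defining inequalities of $\scr{M}_{23}^-$ and $\scr{M}_{32}^+$ then give $\rK_{22}, \rK_{33} > 0$, whereas $\rK_{11}$ is non-positive precisely when $s_+ \geq 1/2$. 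The remaining, and only mildly nontrivial, step is the geometric identification: in the stratum $\otpo$ one has $P \equiv 0$, and the canonical-frame construction recalled in Appendices~\ref{sec:AppendixEquations} and~\ref{sec:abelianG2} makes $e_1$ proportional to $a^{\#}$, the direction orthogonal to the Abelian $G_2$; hence $\rK_{11}$ is the eigenvalue whose eigenspace coincides with that direction, which is the content of the second assertion.
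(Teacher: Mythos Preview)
Your proposal is correct and follows essentially the same route as the paper: both assemble Lemma~\ref{lemma:ExponentialM}, Lemma~\ref{lemma:ExponentialSinTheta} and Proposition~\ref{prop:ConvergenceOfSigma} into the two inequalities $s_+ + \xi > -1$ and $\xi < 0$, with the same case split on $s_+(x)$, and then read off the region $\scr{M}_{23}^- \cup \scr{M}_{32}^+$ and the eigenvalue statement via Remark~\ref{rmk:EigenvaluesOfENSFF}. The only point you leave slightly implicit, which the paper spells out, is that in $\otpo$ one has $P \equiv 0$ and hence $\S_{12} = \S_{31} = 0$ \emph{throughout the evolution}, so that $e_1$ is genuinely an eigenvector of $K$ (not merely in the limit); this is what justifies calling $\rK_{11}$ ``the eigenvalue whose eigenspace coincides with the direction orthogonal to the $G_2$''.
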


\subsection{The exceptional case}
Of greater interest to us are the exceptional orbits.
For orbits in $\advio \setminus \otpo$, it holds
    that $s_+(x) + \xi(x) > -1$ by Lemma~\ref{lemma:ExponentialM},
    that $s_+(x) < 0$ by Lemma~\ref{lemma:ExceptionalOrbitsNegativeSplus},
    that $\xi(x) < 0$ by Lemma~\ref{lemma:ExponentialSinTheta},
    and that $3s_+(x) - \xi(x) < 0$ by Lemma~\ref{lemma:ExponentialP},
The intersection of these conditions means that
$\vp^{\tau}(x)$ converges to a point in $\D^-_{123} \cup \D^+_{132}$ 
    as $\tau \to -\infty$.
We refer to Figure~\ref{fig:Dijk} for 
the projections of these sets to the Jacobs disc $\scr{D}$ in the $\S_+ \S_-$-plane.
In particular, the limits $\rK_{11}, \rK_{22}, \rK_{33}$,
which coincide with the limits of the eigenvalues 
of the expansion-normalized Weingarten map, are distinct and positive.
In summary, we have the following proposition.
\begin{prop} \label{prop:ExceptionalAD}
Let $x \in \advio \setminus \otpo$ for $\gamma = 2$.
Then $\vp^{\tau}(x)$ converges to a point in $\D_{123}^- \cup \D_{132}^+$
as $\tau \to -\infty$.
In particular, the limits of the eigenvalues of the expansion-normalized Weingarten map exist,
are distinct and strictly positive.
\end{prop}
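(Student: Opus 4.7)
The plan is to essentially combine the four preceding lemmas (Lemmas~\ref{lemma:ExponentialM}, \ref{lemma:ExceptionalOrbitsNegativeSplus}, \ref{lemma:ExponentialSinTheta}, \ref{lemma:ExponentialP}) applied to $x \in \advio \setminus \otpo$, and then translate the resulting strict inequalities on $s_+(x)$ and $\xi(x)$ into explicit membership in $\D^-_{123} \cup \D^+_{132}$. Convergence of $\vp^\tau(x)$ itself is already given by Proposition~\ref{prop:ConvergenceOfSigma}, so nothing new is needed there; only the location of the limit must be pinned down.

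The first step is to assemble the four inequalities. From Lemma~\ref{lemma:ExponentialM}, $s_+(x) + \xi(x) > -1$; from Lemma~\ref{lemma:ExceptionalOrbitsNegativeSplus}, $s_+(x) < 0$; from Lemma~\ref{lemma:ExponentialSinTheta} (applicable since $s_+(x) \neq 0$), $\xi(x) < 0$; and from Lemma~\ref{lemma:ExponentialP}, $3 s_+(x) - \xi(x) < 0$. Since $x \in \advio$ gives $\sin(\psi(x)) = 0$, the limit $\psi(x)$ is either $0$ or $\pi$, so $\xi(x) = \sqrt{3}\cos(\psi(x)) s_-(x) = \pm \sqrt{3}\, s_-(x)$ and $\cos^2(\psi(x)) = 1$.

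In the case $\psi(x) = 0$ we substitute $\xi = \sqrt{3}\, s_-$ into the four inequalities to get $s_- < 0$, $s_+ < 0$, $s_+ + \sqrt{3}\, s_- > -1$, and $s_+ < \tfrac{1}{\sqrt{3}} s_-$. Comparing with the explicit description of $\D^+_{132}$ given just before this proposition, this is exactly the defining condition, so $\limtau \vp^\tau(x) \in \D^+_{132}$. In the case $\psi(x) = \pi$, $\xi = -\sqrt{3}\, s_-$ gives $s_- > 0$, $s_+ < 0$, $s_+ - \sqrt{3}\, s_- > -1$, and $s_+ < -\tfrac{1}{\sqrt{3}} s_-$, matching the defining condition of $\D^-_{123}$. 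In both cases $\mu(x) > 0$ by Lemma~\ref{lemma:IntegrabilityOfM}, so the limit actually lies on the Jacobs disc rather than its Kasner boundary.

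For the second claim, Remark~\ref{rmk:EigenvaluesOfENSFF} gives the three limits $\rK_{ii}$ as linear functions of $s_+$ and $s_-$; since all off-diagonal components of the shear vanish in the limit (case $(i)$ of Proposition~\ref{prop:ConvergenceOfSigma}), these $\rK_{ii}$ are precisely the limits of the eigenvalues of the expansion-normalized Weingarten map. Positivity and distinctness then follow directly from the definition $\D^\pm_{ijk} = \scr{D}^\pm \cap \{\S_{ii} > \S_{jj} > \S_{kk} > -\tfrac{1}{3}\}$, translated via $\rK_{ii} = \S_{ii} + \tfrac{1}{3}$. I do not anticipate a genuine obstacle here: all the analytic work sits in the four preceding lemmas, and what remains is purely bookkeeping of strict inequalities — the only thing to watch is that $\xi < 0$ combined with $\sin(\psi) = 0$ forces a definite sign of $s_-$ in each of the two subcases, which is what separates $\D^+_{132}$ from $\D^-_{123}$.
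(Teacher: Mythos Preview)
Your proposal is correct and follows essentially the same approach as the paper: the paper's argument (given in the paragraph immediately preceding the proposition) simply collects the four inequalities from Lemmas~\ref{lemma:ExponentialM}--\ref{lemma:ExponentialP} and observes that their intersection is precisely $\D^-_{123} \cup \D^+_{132}$, then invokes Remark~\ref{rmk:EigenvaluesOfENSFF} for the eigenvalue conclusion. Your version is in fact more explicit than the paper's, as you carry out the case split $\psi = 0$ versus $\psi = \pi$ and verify the defining inequalities of $\D^+_{132}$ and $\D^-_{123}$ directly.
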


\begin{rmk} \label{rmk:NoAsysmptoticEigenspaces}
We know that the off-diagonal components $K_{12}$ and $K_{13}$ of the expansion-normalized Weingarten map
converge to zero towards the past.
This fact alone is not enough, as far as we are aware, 
to conclude that the eigenspace that corresponds to the largest eigenvalue of $K$
converges to the direction orthogonal to the $G_2$.
This is a subtle issue, due to the anisotropy and the lack of knowledge
regarding the past convergence of the actual vectors that lie either orthogonal
or tangent to the $G_2$.
\end{rmk}

%HIER VERDER

%%%%%%%%%%%%%%%%%%%%%%%%%%%%%%%%%%%%%%%%%%%%%%%%%%%%%%%%%%%%%%%%%%%%%%%%%%%
%
% CH5: The half-polarized orbits
%
%%%%%%%%%%%%%%%%%%%%%%%%%%%%%%%%%%%%%%%%%%%%%%%%%%%%%%%%%%%%%%%%%%%%%%%%%%%
\section{The half-polarized orbits} \label{sec:halfpolarized}
\noindent
Recall that we defined the half-polarized orbits $\hpvio$
as the set of non-vacuum orbits for which $\cos(\psi) = 0$,
or, alternatively,
which towards the past converge to a point in $\scr{P}$.
In particular, this contains the set $\spvio$ which is of co-dimension~2.
In this section we show that the set $\hpvio$ is
a null subset of the phase space $\bpvio$ of non-vacuum orbits for $\gamma = 2$.

\begin{rmk} \label{rmk:HalfPolarizedEigenvaluesWeingartenMap}
For the exceptional asymptotically-diagonalizing orbits,
i.e. orbits of points in $\advio \setminus \otpo$,
we saw in Proposition~\ref{prop:ExceptionalAD} that the limits of the eigenvalues
    of the expansion-normalized Weingarten map $K$ are all positive.
This is not necessarily the case for the exceptional half-polarized orbits,
i.e. orbits of points in $\hpvio \setminus \otpo$.
Indeed, with respect to a canonical frame, by Remarks~\ref{rmk:ReconstructionOfSigma} and~\ref{rmk:ConvergenceOfSigma}
the limit of the matrix representing the expansion-normalized shear $\S$ is given by
\begin{equation*}
\frac{s_+}{3} \begin{pmatrix} 
-2 & 0 & 0 \\
0 & 1 & 1 \\
0 & 1 & 1
\end{pmatrix}.
\end{equation*}
Thus its eigenvalues are $-\frac{2}{3} s_+, \frac{2}{3} s_+$ and $0$.
The limit of the Hamiltonian constraint in this case is 
\begin{equation*}
\tfrac{4}{3} s_+^2 + \mu =1,    
\end{equation*}
so that we have $s_+ = -\frac{1}{2} \sqrt{3} \sqrt{1 - \mu}$,
recalling that $s_+(x) < 0$ for $x \in \hpvio \setminus \otpo$
by Lemma~\ref{lemma:ExceptionalOrbitsNegativeSplus}.
Hence we obtain that the smallest limit of the eigenvalues of $\S$ lies 
in the range $(-\frac{1}{\sqrt{3}}, 0)$,
so the smallest limit of the eigenvalue of $K$
lies in the range $(-\frac{1}{\sqrt{3}} + \frac{1}{3}, \frac{1}{3})$,
and there appears to be no obstruction to a negative eigenvalue.
\end{rmk}

\subsection{Non-genericity of the half-polarized solutions}
We wish to demonstrate that the set of orbits $\hpvio$ constitutes a null set in $\bpvio$.
For this reason we may consider the system \eqref{eq:PolarEvE}
and its linearization at the points in the set $\scr{P}$.
It will turn out to be convenient to split up the line segment $\scr{P}$ as 
\begin{equation}
\scr{P}^{-} := \scr{P} \cap \{\S_+ < 0\}, \quad \scr{P}^+ := \scr{P} \cap \{\S_+ > 0\}.    
\end{equation}

\begin{rmk}
Observe that, as we have rewritten the system \eqref{eq:PolarEvE}
to be independent of $\Omega$ and $\S_\times$, 
and in particular as $\Omega$ and $\S_{\times}$ are functions of $M,P, \theta, \S_+,\S_-$,
we do not need to take their evolution equations into account
when studying the eigenvalues of the linearized system.
Usually one separates the directions tangent to the constraints
and transverse to the constraints, the latter deemed \emph{unphysical},
cf. e.g. Subsection~2.3 of \cite{HewittHorwoodWainrightExceptional},
as only the relative stability on the manifold defined by the constraints
is to be taken into account.
However, in the formulation of the evolution equations \eqref{eq:PolarEvE}
the directions transverse to the constraint are precisely
    $\nabla \Omega$ and $\nabla \S_{\times}$,
thus if we study the linearization without $\Omega$ and $\S_{\times}$ we obtain 
the same relative stability.
\end{rmk}

The linearization of the system of equations \eqref{eq:PolarEvE},
around a point $y$ of the form 
\begin{equation*}
y = (0,0, \pi/2, s_+, 0) \in \scr{P},
\end{equation*}
(with $\mu > 0$, i.e. non-vacuum) is given by 
\begin{subequations} \label{eq:LinearizationL}
\begin{align}
    \tilde{M}' &= (2 + 2 s_+) \tilde{M}, \\
    \tilde{P}' &= -4 s_+ \tilde{P}, \\
    \tilde{\theta}' &=  -2\sqrt{3} \tilde{\S_-}, \\
    \tilde{\S}_+' &= 0, \\
    \tilde{\S}_-' &= -\tfrac{2}{\sqrt{3}} s_+^2 \tilde{\theta},
\end{align}
\end{subequations}
and thus we find in particular that at $y$ the system has 
eigenvalues and eigenvectors as displayed in Table~\ref{tab:eigenvalues}.
 
\begin{table}[b]
\begin{tabular}{c | c | c | c | c | c}
Eigenvalue
& $2 + 2s_+$
& $-4 s_+$
& $- 2s_+$
& $2 s_+$
& $0$ \\ \hline 
Eigenvector
& $\partial_M$
& $\partial_P$
& $\sqrt{3} \partial_{\theta} + s_+ \partial_{\S_-}$
& $\sqrt{3} \partial_{\theta} - s_+ \partial_{\S_-}$
& $\partial_{\S_+}$ 
\end{tabular} 
\caption{The eigenvalues and corresponding eigenvectors
of the matrix of the linearization at a point
$(0,0,\pi/2, s_+, 0) \in \scr{P}$.}
\label{tab:eigenvalues}
\end{table}
If $y \in \scr{P}^-$, 
there are thus three positive eigenvalues, one negative eigenvalue and one eigenvalue equals zero.
If on the other hand for $y \in \scr{P}^+$, i.e. $s_+(y) > 0$,
which only non-exceptional orbits converge to towards the past,
then we have two positive eigenvalues, two negative eigenvalues and again one eigenvalue equals zero.

The idea is now to use Theorem B.3 from Appendix B of \cite{RadermacherVacuum}
in order to obtain the existence of local center-unstable manifolds.\footnote{
The original reference is Theorem 9.1 of \cite{FenichelGeomSingPert}.
However, the formulation in \cite{RadermacherVacuum} suffices for our purposes.
In \cite{FenichelGeomSingPert} it is also outlined
how, in principle, asymptotic expansions at the fixed points may be obtained to all orders.}
Once we have the local center-unstable manifolds, we may then show that 
the unstable set to $\hpvio$, i.e. the set of points of orbits converging to $\scr{P}$,
is contained in a countable union of $C^{r}$-submanifolds of co-dimension~1,
where $r$ may be chosen to be any positive integer.

\begin{prop} \label{prop:NonGenericityOfHalfPolarized}
The set $\hpvio$ of half-polarized non-vacuum orbits, for $\gamma = 2$,
is contained in a countable union of $C^r$ submanifolds of $\bpvio$ of co-dimension~1,
where $r$ may be chosen to be any positive integer.

Moreover, there exists a $C^r$ submanifold of $\bpvio$ of co-dimension~1, where $r$ is as above,
consisting of half-polarized orbits for which the smallest limit of the eigenvalues of the 
expansion-normalized Weingarten map is strictly negative.
\end{prop}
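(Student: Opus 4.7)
The plan is to apply the normally hyperbolic invariant manifold theorem, cf.\ Theorem~B.3 of \cite{RadermacherVacuum}, to the line of fixed points $\scr{P}$. First I verify normal hyperbolicity. Away from the single degenerate point $\scr{P} \cap \scr{F} = \{(0,0,\pi/2,0,0)\}$, the linearization in Table~\ref{tab:eigenvalues} has exactly one zero eigenvalue, with eigenvector $\partial_{\S_+}$ tangent to $\scr{P}$, so both $\scr{P}^-$ and $\scr{P}^+$ are normally hyperbolic invariant manifolds. For $y \in \scr{P}^-$ the four nonzero eigenvalues split as three unstable and one stable, so the local center-unstable manifold $W^{cu}_{\mathrm{loc}}(y)$ supplied by the theorem is a $C^r$ submanifold of $\bpvio$ of codimension~$1$; for $y \in \scr{P}^+$ they split as two and two, so $W^{cu}_{\mathrm{loc}}(y)$ has codimension~$2$.

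For the first assertion, take $x \in \hpvio$. By definition $\limtau \vp^\tau(x) = y$ for some $y \in \scr{P}^- \cup \scr{P}^+$, so for $\tau$ sufficiently negative $\vp^\tau(x)$ lies in the neighborhood of $y$ on which $W^{cu}_{\mathrm{loc}}(y)$ is defined, and its entire backward semi-orbit stays there; the standard characterization of the local center-unstable manifold then forces $\vp^\tau(x) \in W^{cu}_{\mathrm{loc}}(y)$, whence $x \in \vp^{-\tau}\bigl(W^{cu}_{\mathrm{loc}}(y)\bigr)$ by invariance. Covering $\scr{P}^- \cup \scr{P}^+$ by a countable family of such neighborhoods $\{U_n\}_{n \in \bb{N}}$, which is possible as $\scr{P}$ is second countable, and letting the shift run through the positive integers gives
\[
\hpvio \ \subset \ \bigcup_{n,\,k \in \bb{N}} \vp^{k}\bigl(W^{cu}_{\mathrm{loc}}(y_n)\bigr),
\]
a countable union of $C^r$ submanifolds of $\bpvio$ of codimension at least~$1$.

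For the second assertion, pick $y^* \in \scr{P}^-$ with $s_+(y^*) \in (-\tfrac{\sqrt{3}}{2}, -\tfrac{1}{2})$, a non-empty range because $s_+^2 < 3/4$ on $\scr{P}$. The eigenvectors $\partial_M$ and $\partial_P$ at $y^*$ have positive eigenvalues $2 + 2s_+(y^*)$ and $-4 s_+(y^*)$, so both lie in the unstable subspace; hence $W^{cu}_{\mathrm{loc}}(y^*) \cap \bpvio$ is a non-empty $C^r$ codimension~$1$ submanifold of $\bpvio$ that meets $\{P > 0\}$ in a relatively open set. The normal hyperbolicity of $\scr{P}$ at $y^*$ yields more: $W^{cu}_{\mathrm{loc}}(y^*)$ is foliated by the unstable leaves attached to points of $\scr{P}^-$ close to $y^*$, and every orbit in the leaf through a base point $y$ converges backward to $y$. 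Shrinking the neighborhood so that every base point $y$ satisfies $s_+(y) < -\tfrac{1}{2}$, every orbit in $W^{cu}_{\mathrm{loc}}(y^*) \cap \bpvio$ lies in $\hpvio$, and by Remark~\ref{rmk:HalfPolarizedEigenvaluesWeingartenMap} the smallest limit of the eigenvalues of $K$ equals $\tfrac{1}{3} + \tfrac{2}{3} s_+(y) < 0$.

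The main technical obstacle is precisely the strong unstable foliation invoked in the second part: the existence of $W^{cu}_{\mathrm{loc}}$ alone does not guarantee that its orbits have backward limits on $\scr{P}$, and the invariant fibration of $W^{cu}_{\mathrm{loc}}$ into unstable leaves of nearby points of $\scr{P}^-$ is needed to upgrade local containment to genuine convergence. This is why I verify full normal hyperbolicity rather than merely the existence of a center direction. The degenerate point of $\scr{P} \cap \scr{F}$ is harmless: orbits approaching it would satisfy $s_+ = 0$ at the limit and therefore belong to $\isvio$, not $\hpvio$, by the very definition of $\hpvio$.
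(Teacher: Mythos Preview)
Your proof is correct and follows essentially the same approach as the paper: both linearize at $\scr{P}$, invoke Theorem~B.3 of \cite{RadermacherVacuum} to obtain local center-unstable manifolds of codimension~1 along $\scr{P}^-$, use the backward-orbit characterization to trap $\hpvio$ in these manifolds, and flow forward by integers to cover. The only minor differences are that the paper disposes of $\scr{P}^+$ via Lemma~\ref{lemma:ExceptionalOrbitsNegativeSplus} (orbits converging there lie in $\otpo$) rather than via its codimension-2 center-unstable manifold, and your explicit invocation of the strong-unstable fibration for the second assertion is more careful than the paper's terse appeal to Theorem~B.3.
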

\begin{proof}
We already know from Proposition~\ref{prop:ConvergenceOfSigma}
and Lemma~\ref{lemma:ExceptionalOrbitsNegativeSplus}
that if $x \in \hpvio \setminus \otpo$, then $\vp^{\tau}(x)$
    converges to a point in $\scr{P}^{-}$ as $\tau \to -\infty$.
Given that the set of orthogonally-transitive orbits $\otpo$ is itself
a smooth submanifold of co-dimension~1,
this shows that the set of orbits converging to $\scr{P}^{+}$
is contained in a smooth submanifold of co-dimension~1.
Thus we only need to focus on the orbits converging to $\scr{P}^-$,
i.e. the set $\hpvio \setminus \otpo$.

Fix $r \in \mathbb{N}$.
By Theorem B.3 in \cite{RadermacherVacuum},
for every compact subset $\scr{Y} \subset \scr{P}^-$ there exists a neighbourhood
$\scr{N_Y} \subset \cl (\bpvi) \setminus \cl(\bpvi^0)$ of $\scr{Y}$
and a four-dimensional $C^r$ submanifold $\scr{M_Y} \subset \scr{N_Y}$ containing~$\scr{Y}$,
such that at every $y \in \scr{Y}$ the submanifold $\scr{M_Y}$ is tangent to 
    the eigenspaces corresponding to non-positive eigenvalues
    of the matrix of the linearization \eqref{eq:LinearizationL} at $y$.
Moreover, if $x \in \scr{N_Y}$ then either $x \in \scr{M_Y}$ or 
the backward orbit through $x$ leaves $\scr{N_Y}$.

Next, let us show that the backwards basin of attraction, denoted $\scr{U_Y}$,
of such a compact subset~$\scr{Y}$ is contained in a countable union of $C^r$ submanifold of $\bpvio$ of co-dimension~1.
All orbits in $\bpvio$, in particular those contained in $\scr{U_Y}$,
converge towards the past by Proposition~\ref{prop:ConvergenceOfSigma}.
Thus, for any $z \in \scr{U_Y}$ there exists a $T_z < 0$ such that
for $\tau \leq T_z$, it holds that $\vp^{\tau}(z) \in \scr{N_Y}$. 
Hence, by the conclusions above, for such $\tau$ the point $\vp^{\tau}(z)$ belongs to the manifold $\scr{M_Y}$.
Therefore, the family of $C^r$~co-dimension~1 submanifolds $\{\vp^{k}(\scr{M_Y})\}_{k=1}^{\infty}$ covers all of $\scr{U_Y}$  .

In particular, choosing a countable set of compact subsets $(\scr{Y}_m)_{m=1}^\infty$ 
    exhausting the set $\scr{P}^-$,
yields that the backwards basin of attraction of any $y \in \scr{P}^{-}$
is contained in a countable union of co-dimension~1 submanifolds of $\bpvio$.
We may thus conclude the same for the set $\hpvio$.

The second statement, i.e. the existence of the $C^r$ submanifold of $\bpvio$ of co-dimension~1, 
consisting of half-polarized orbits for which the smallest limit of the eigenvalues of the 
expansion-normalized Weingarten map is strictly negative,
is a consequence of applying Theorem B.3 to e.g. the following compact subset of $\scr{P}^-$, namely:
\begin{equation*}
\scr{Y}_\ve := \scr{P}^- \cap \big\{ 
    \S_+ \in \big[- \tfrac{1}{2}\sqrt{3} + \tfrac{3}{2} \ve,
        -\tfrac{1}{2}- \tfrac{3}{2}\ve \big] \big\}
\end{equation*}
for suitably small $\ve >0$. 
Then the smallest of the limits of the eigenvalues 
of the expansion-normalized Weingarten map of orbits on the corresponding centre-unstable manifold $\scr{M}_{\scr{Y}_\ve}$
is in the range $[- (\frac{1}{\sqrt{3}} - \frac{1}{3}) + \ve, -\ve]$,
cf. Remark~\ref{rmk:HalfPolarizedEigenvaluesWeingartenMap}.
\end{proof}
\section{Generic past convergence to the Jacobs disc} \label{sec:MainTheorem}
\noindent
We are now ready to formulate our main result concerning the generic behaviour of solutions towards the past, which now follow almost directly from
the results proven in Sections \ref{sec:asympdiag} and \ref{sec:halfpolarized}.
\newpage
\begin{thm} \label{thm:GenericConvergenceToTriangle}
For $\gamma = 2$, any backward orbit in the invariant set $\advio \setminus \otpo$,
which is a subset of $\bpvio$ of full measure that 
moreover contains a countable intersection of dense and open subsets of $\bpvio$,
converges to a point in $\Delta_{123}^- \cup \Delta_{132}^+$.

In particular, for the orbits in $\advio \setminus \otpo$ the past limits
of the eigenvalues of the expansion-normalized Weingarten map exist,
are distinct and are all strictly positive.
\end{thm}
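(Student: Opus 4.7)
The plan is to assemble the result from the existing pieces in Sections~\ref{sec:asympdiag} and~\ref{sec:halfpolarized}. The trichotomy in Proposition~\ref{prop:ConvergenceOfSigma} decomposes $\bpvio$ as the disjoint union $\advio \cup \hpvio \cup \isvio$, and Proposition~\ref{prop:ExceptionalAD} already supplies the geometric conclusion for orbits in $\advio \setminus \otpo$. What remains is to verify that the complement $\hpvio \cup \isvio \cup \otpo$ is small in both the Lebesgue and Baire senses.

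First I would observe that $\isvio \subset \otpo$. This is immediate from Lemma~\ref{lemma:ExceptionalOrbitsNegativeSplus}: any orbit outside $\otpo$ has $s_+(x) < 0$, while any orbit in $\isvio$ has $s_+(x) = 0$. Hence
\begin{equation*}
\bpvio \setminus (\advio \setminus \otpo) \subset \hpvio \cup \otpo.
\end{equation*}
The set $\otpo$ is the smooth codimension-1 submanifold defined by $P=0$, and is closed and nowhere dense in $\bpvio$. Proposition~\ref{prop:NonGenericityOfHalfPolarized} provides a countable cover of $\hpvio$ by $C^r$ codimension-1 submanifolds. Each such submanifold is of Lebesgue measure zero and nowhere dense, since locally it is the level set of a function with nonvanishing gradient, so the countable union $\hpvio \cup \otpo$ is both null and meager. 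Consequently $\advio \setminus \otpo$ has full measure in $\bpvio$ and contains a countable intersection of dense open subsets.

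Finally, for any $x \in \advio \setminus \otpo$ the convergence of $\vp^{\tau}(x)$ to a point in $\D_{123}^- \cup \D_{132}^+$ as $\tau \to -\infty$, and the distinctness and strict positivity of the three past limits of the eigenvalues of the expansion-normalized Weingarten map, are precisely the content of Proposition~\ref{prop:ExceptionalAD}. The principal obstacle in the whole argument is not in this concluding step but in the preparatory Proposition~\ref{prop:NonGenericityOfHalfPolarized}, where the center-unstable manifold theorem was needed to confine $\hpvio$ within a countable union of $C^r$ codimension-1 submanifolds; granting that result, the main theorem is essentially a bookkeeping assembly of the pieces listed above.
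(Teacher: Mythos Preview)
Your proof is correct and follows essentially the same route as the paper: use the trichotomy, observe via Lemma~\ref{lemma:ExceptionalOrbitsNegativeSplus} that $\isvio \subset \otpo$, invoke Proposition~\ref{prop:NonGenericityOfHalfPolarized} to trap $\hpvio$ in a countable union of codimension-1 submanifolds, and then apply Proposition~\ref{prop:ExceptionalAD}. The only cosmetic difference is that the paper cites Sard's theorem (in the form that smooth submanifolds of positive codimension have measure zero) for the null-measure claim, whereas you argue it directly from the local level-set description; both are fine.
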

\begin{proof}
As a corollary of Proposition~\ref{prop:NonGenericityOfHalfPolarized}
and Sard's theorem, cf. e.g. Corollary 6.12 of \cite{LeeSmoothManifolds},
$\advio \cup \isvio$ is a full measure subset $\bpvio$
that is a countable intersection of dense and open sets.
The same conclusions hold if remove $\otpo$ as this is a smooth submanifold of co-dimension~1.
Note also that the set $\isvio$ of isotropizing orbits is contained in $\otpo$ by 
Lemma~\ref{lemma:ExceptionalOrbitsNegativeSplus}.
The statement regarding convergence of the backward orbits 
then follows directly from Proposition~\ref{prop:ExceptionalAD}.
\end{proof}

\section{Towards the future} \label{sec:future}
\noindent So far the main focus has been on the behaviour of solutions towards the past.
However, using methods similar to those found in \cite{HewittTraceClass},
which concerns polarized type~\vi{} orthogonal perfect fluid Bianchi cosmologies,
we may obtain at least some bounds for certain quantities towards the future.
In this section we do not fix the parameter $\gamma$ at the value 2, 
but use Equations (\ref{eq:PolarEvEAppendix}) - (\ref{eq:AuxiliaryAppendix}) for more general $\gamma \in (2/3,2]$,
as the analysis below is of relevance also for non-stiff perfect fluids.

Central to this analysis are the functions $V_{\alpha, \beta, \delta, \varepsilon}$
defined as 
\begin{align}
V_{\alpha, \beta, \delta,\varepsilon} := \alpha \log( M ) + \beta \log(P) 
    + \delta \log( \sin(\theta)) + \varepsilon \log(\Omega),
\end{align}
where $\alpha, \beta, \delta, \varepsilon \in \bb{R}$ are real parameters, typically integer valued.
Observe that for example the function $Z = P^2 \sin(\theta)$,
which appeared in a proof above, is the same as $\exp(V_{0,2,1,0})$ in this notation.
The usefulness of these functions is due to the property that
\begin{align}
V_{\alpha, \beta, \delta, \varepsilon}'
    = (\alpha + \beta + 2 \varepsilon) q - 2 \beta - 2 q^* \varepsilon
    + (2 \alpha+ \beta - 2 \delta) X
    + (2 \alpha - (3 + \sin^2(\theta)) \beta) \S_+,
\end{align}
where we recall $q^*(\gamma) = \frac{1}{2}(3\gamma -2)$.
On the other hand, we know that the functions $M, P, \sin(\theta)$ and $\Omega$
    are all strictly positive yet bounded from above for orbits not in $\otpo$,
hence if the derivative of $V_{\alpha, \beta, \delta, \varepsilon}$
may be bounded above or below from 0,
then we obtain asymptotic information about the functions $M, P, \sin(\theta)$ and $\Omega$.

In particular, consider the function $V_{4,2,5,-3}$, which is well-defined on $\bpvio \setminus \otpo$ 
and whose derivative satisfies 
\begin{equation}
V_{4,2,5,-3}' = 9 \gamma - 10 + 2 \cos^2(\theta) \S_+.
\end{equation}
We obtain in particular that the derivative is strictly positive on its domain for $\gamma \geq 4/3$.
As a consequence of the monotonicity principle, i.e. Lemma~\ref{lemma:monotonicity},
for $x \in \bpvio \setminus \otpo$ and $\gamma \geq 4/3$ it must hold that
\begin{align*}
\lim_{\tau \to -\infty}& M P \sin(\theta) (\tau) = 0, \qquad
\lim_{\tau \to \infty} \Omega (\tau) = 0.
\end{align*}
Indeed, by the monotonicity principle we must end up on the boundary of $\bpvio \setminus \otpo$,
which consists of the intersection of $\bpvi$ with either $\{ M = 0\}$, $\{P =0 \}$, $\{\sin(\theta) = 0\}$
or $\Omega = 0$, i.e. respectively the sets corresponding to Bianchi type I, the $G_2$ acting orthogonally-transitively, Bianchi type II or vacuum.
It may of course also end up in intersections of these boundaries,
but in the limit towards the past at least one of $M = 0$, $P = 0$, or  $\sin(\theta) = 0$ must hold,
while toward the future necessarily $\Omega = 0$ must hold.
In particular, we have proven the following proposition.
\begin{prop}
Let $x \in \bpvio \setminus \otpo$ for $\gamma \in [4/3, 2]$.
Then 
\begin{equation}
\limtau M(\tau) P (\tau) \sin(\theta(\tau)) = 0
\end{equation} 
and 
\begin{equation}
\lim_{\tau \to \infty} \Omega(\tau) = 0.
\end{equation}
\end{prop}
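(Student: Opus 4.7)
The plan is to use the function $V_{4,2,5,-3}$ already singled out in the preceding paragraph as a strictly monotone quantity on the invariant set $\bpvio \setminus \otpo$ and then invoke the monotonicity principle (Lemma~\ref{lemma:monotonicity}) to force the $\alpha$- and $\omega$-limit sets into the topological boundary of this set inside $\cl(\bpvi)$. The remaining work is a straightforward case analysis on which boundary component is admissible in each time direction.

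First I would verify that the derivative
\begin{equation*}
V_{4,2,5,-3}' = 9\gamma - 10 + 2 \cos^2(\theta) \S_+
\end{equation*}
is strictly positive pointwise on $\bpvio \setminus \otpo$ for $\gamma \in [4/3, 2]$. For such $\gamma$ one has $9\gamma - 10 \geq 2$, and the Hamiltonian constraint \eqref{eq:HCNew} together with $M > 0$ and $\Omega > 0$ implies $\S_+^2 < 1$ strictly on $\bpvio$, so $2 \cos^2(\theta) \S_+ > -2$ and hence $V_{4,2,5,-3}' > 0$ pointwise. No uniform lower bound is needed for what follows.

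Next, I would exploit the resulting strict monotonicity of $V_{4,2,5,-3}$ along orbits to constrain the limit sets. The four pieces of $\partial(\bpvio \setminus \otpo)$ inside $\cl(\bpvi)$ are the zero loci of $M$, $P$, $\sin(\theta)$ and $\Omega$; on the first three, $V_{4,2,5,-3} \to -\infty$, and on the fourth, $V_{4,2,5,-3} \to +\infty$. Since $V_{4,2,5,-3}(\vp^\tau(x))$ is bounded below by $V_{4,2,5,-3}(x)$ for $\tau \geq 0$, only $\{\Omega = 0\}$ can contain points of $\omega(x)$; and since $V_{4,2,5,-3}(\vp^\tau(x))$ is bounded above by $V_{4,2,5,-3}(x)$ for $\tau \leq 0$, every point of $\alpha(x)$ must satisfy at least one of $M = 0$, $P = 0$ or $\sin(\theta) = 0$. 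The two claimed limits then follow by standard continuity-plus-compactness arguments using that $\vp^\tau(x)$ approaches its limit sets in the Hausdorff sense: $\Omega$ vanishes on $\omega(x)$ and $M P \sin(\theta)$ vanishes on $\alpha(x)$.

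The only subtle point I anticipate is the borderline case $\gamma = 4/3$, where $9\gamma - 10 = 2$ is only marginally matched by the uniform bound $2 \cos^2(\theta) \S_+ \geq -2$ and $V_{4,2,5,-3}'$ may approach $0$ along an orbit; this is not a genuine obstacle, however, since pointwise strict positivity is all that is needed to apply Lemma~\ref{lemma:monotonicity}. The argument does break down for $\gamma < 4/3$, where the constant term would be overpowered by $2\cos^2(\theta) \S_+$, which is presumably why the hypothesis is stated for this range.
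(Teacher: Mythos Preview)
Your proposal is correct and follows essentially the same route as the paper: both arguments use the strict positivity of $V_{4,2,5,-3}'$ for $\gamma \geq 4/3$, invoke the monotonicity principle to push the $\alpha$- and $\omega$-limit sets onto the boundary of $\bpvio \setminus \otpo$, and then read off from the sign of $V$ near each boundary piece which components can receive which limit set. Your write-up is in fact a bit more explicit than the paper's in spelling out why $\S_+^2<1$ forces the strict inequality and in handling the borderline $\gamma = 4/3$.
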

% We note another function that may possibly be of interest, namely $V_{6,4,8,-5}$, whose derivative satisfies
% \begin{equation}
% V_{6,4,8,-5}' = 15 \gamma - 18 - 4 \sin^2 (\theta) \S_+.
% \end{equation}
% Then, depending on the asymptotic behaviour of $\theta$, one may use this to obtain conclusions
% which are similar in nature as the ones maybe from analyzing $V_{4,2,5,-3}$.
\section*{Acknowledgements}
\noindent The author sincerely thanks Hans Ringström for the suggestion of the topic,
and his careful reading as well as many comments and suggestions on drafts of the manuscript.
Many thanks also go to Claes Uggla for the stimulating discussions on the subject
(as well as on many other subjects) and the insightful suggestions
regarding adapted polar coordinates and references in the literature.
Lastly, thanks go to Phillipo Lappicy
for the stimulating discussion on the exceptional Bianchi spacetimes during CERS XIII.
This research was funded by the Swedish research council, dnr. 2017-03863 and 2022-03053.
\appendix
%%%%%%%%%%%%%%%%%%%%%%%%%%%%%%%%%%%%%%%%%%%%%%%%%%%%%%%%%%%%%%%%%%%%%%%%%%%
%
% A: BIANCHI TYPES, THE ABELIAN G_2 AND HYPERSURFACE ORTHOGONAL KILLING VECTOR FIELDS
%
%%%%%%%%%%%%%%%%%%%%%%%%%%%%%%%%%%%%%%%%%%%%%%%%%%%%%%%%%%%%%%%%%%%%%%%%%%%

\section{Bianchi types and their geometrical features} \label{sec:abelianG2}
\noindent In this appendix we recall some facts concerning the Bianchi-Behr classification.
We also discuss special geometrical features that arise in Bianchi spacetimes
that play a role in this article,
in particular the property of orthogonal transitivity of the Abelian $G_2$
and the potential presence of hypersurface-orthogonal Killing vector fields.

\subsection{The Bianchi-Behr classification} \label{sec:Classification}
Let us briefly recall some essentials regarding the Bianchi-Behr classification,
in particular the defining properties of the Bianchi type~\vi{} algebras;
see e.g. Section~1.5.1 of \cite{DynSysCosmology} and the references therein
or otherwise \cite{EllisMacCallumHomCos} for a more complete overview.

A connected, three-dimensional Lie group $G$ may be classified by its Lie algebra $\g$,
based on properties of the structure constants of $\g$.
The Lie algebra $\g$ is said to be of class A
if it is unimodular,
which is equivalent to the linear map $\ha \in \g^*$ being trivial,
where $\ha$ given by 
\begin{equation}
\ha: X \mapsto \tfrac{1}{2} \tr(\ad_X).
\end{equation}
If the algebra is not unimodular then $\g$ is said to be of class B, such as, for example, type~\vi{}.
Next, given a basis $\he_1, \he_2, \he_3$ of $\g$, we may consider the structure constants 
$\hg_{ij}^k$ defined by $[\he_i, \he_j] = \hg_{ijk}^k \he_k$.\footnote{
We use hats, e.g. $\he_1$, here to denote objects in the Lie algebra,
so as to distinguish them for objects in spacetime.}
We may then define the quantities
\begin{equation}
\hn^{kl} := \epsilon^{ij(k} \hg_{ij}^{l)}, \quad \ha_{i} := \tfrac{1}{2} \hg_{ik}^k.
\end{equation}
Observe that the $\ha_i$ are the components of the map $\ha$ above,
in particular, if $\ho^1, \ho^2, \ho^3$ is a basis dual to $e_1, e_2, e_3$,
then $\ha = \ha_i \ho^i$.
The Lie algebra being of class B thus means that not all $\ha_i$ vanish.

Concerning the Bianchi types, we consider the matrix $\hn$ with components $\hn^{ij}$, 
which manifestly is a symmetric matrix.
In particular, by rotating the frame we may diagonalize the matrix $\hn$,
and it has real eigenvalues.
The Jacobi identity may be written as $\hn^{ij}\ha_j = 0$
and thus $\ha^\# = \ha_i \he_i$ is an eigenvector
of the matrix $\hn$ with components $\hn^{ij}$ and it has eigenvalue 0.

For class A, we obtain the Bianchi types as in Table~\ref{tab:typeA},
depending only on the relative signs of the eigenvalues of the matrix $\hn$.
For class B, there is always one eigenvalue zero, so we can consider the relative
signs of the eigenvalues of the remaining two,
which leads to the other Bianchi types in Table~\ref{tab:typeB}.
If both of these eigenvalues are non-zero,
i.e. in the case of Bianchi type VI$_\eta$ (when they have opposite signs),
and Bianchi type VII$_\eta$ (when they have equal signs),
then there is an additional parameter $\eta$ determined by the relation
\begin{equation} \label{eq:HawkingsConstraint}
\ha_i \ha_j = \tfrac{\eta}{2} \epsilon_{ikl} \epsilon_{mnq} \hn^{km} \hn^{lq},
\end{equation}
an observation that goes back to \cite{CollinsHawkingIsotropic}.

Thus, for Bianchi type~\vi{}, the matrix $\hn$ has two non-zero eigenvalues
of opposite sign, and if written with respect to a basis diagonalizing the matrix $\hn$
and such that $\hn^{11} =0$,
we have the algebraic relation
\begin{equation}
\ha_1^2 = -\tfrac{1}{9} \hn^{22} \hn^{33}.
\end{equation}

\begin{table}[ht]
\captionsetup{position=below}
\begin{subtable}[h]{0.4\textwidth}
\begin{center}
{\renewcommand{\arraystretch}{1.2}
\begin{tabular}{l c c c c}
Class & Type & $ \hn^{11} $ & $ \hn^{22} $ & $ \hn^{33} $\\ \hline \hline 
A & I & 0 & 0 & 0  \\ \hline
  & II & 0 & 0 & +  \\ \hline
  & VI$_0$ & 0  & + & -  \\ \hline
  & VII$_0$ & 0 & + & +  \\ \hline
  & VIII & + & + & -  \\ \hline
  & IX & + & + & +  
\end{tabular}}
\caption{The various Bianchi types of class A.}
\label{tab:typeA}
\end{center}
\end{subtable}
\hspace{2cm}
\begin{subtable}[h]{0.4\textwidth}
\begin{center}
{\renewcommand{\arraystretch}{1.2}
\begin{tabular}{l c c c}
Class & Type & $ \hn^{22} $ & $ \hn^{33} $ \\ \hline \hline 
B & IV  & 0 & + \\ \hline
 & V & 0 & 0 \\ \hline
 & VI$_\eta$  & + & -  \\ \hline
 & VII$_\eta$  & + & +
\end{tabular}}
\caption{The various Bianchi types of class B;
observe that type III is the same as type VI$_{-1}$
and is thus omitted.}
\label{tab:typeB}
\end{center}
\end{subtable}
\caption{The Bianchi types of class A (left), for which $\ha = 0$, and class B (right),
for which $\ha \neq 0$, typified by the different signs of the eigenvalues of the matrix $\hn$.
Observe that for type VI$_\eta$ and VII$_\eta$
there is also the relation~\eqref{eq:HawkingsConstraint}.}
\end{table}

\subsection{The Abelian $G_2$}
All the Lie algebras in the Bianchi classification, except those of types VIII or IX,
possess an Abelian Lie subalgebra of dimension 2,
which integrates to a two-dimensional Abelian subgroup of the corresponding simply-connected Lie group. 
For the lower types of class A, namely I, II, VI$_0$ and VII$_0$,
the presence of the Abelian Lie subalgebra is a simple consequence of the classification itself.
Indeed, if we diagonalize $\hn$ and let $\hn^{11}$ = 0, then due to the diagonalization
we may write $[\he_2, \he_3] = \hn^{11} \he_1$ = 0.
In particular the span of $\he_2$ and $\he_3$ is an Abelian Lie subalgebra
integrating to an Abelian Lie subgroup.
For the types of class B, namely IV, V, VI$_{\eta}$ and VII$_{\eta}$,
something similar holds.
In this case, if we let $\ha^\# = \ha_1 \he_1$ and diagonalize $\hn$ so that
then by the Jacobi identity we have $\hn_{11} = 0$,
and then again $[\he_2, \he_3] = \hn^{11} \he_1 = 0$.

Now we shift our attention to Bianchi cosmologies, i.e. Bianchi spacetimes
as defined in Definition~\ref{defn:BianchiSpacetime} below,
that moreover solve Einstein's equations for  fluid matter or in vacuum, or equivalently, 
developments arising from suitable left-invariant initial data on a Lie group.
Bianchi spacetimes have a three-dimensional group of isometries
acting freely on spacelike hypersurfaces, the surfaces of homogeneity.
In particular, there is a basis of spacetime Killing vector fields,
tangent to the group orbits and whose Lie algebra has the specified Bianchi type.
On the other hand, one may consider an orthonormal frame $\{e_0, e_1, e_2, e_3\}$ that commutes
with the Killing fields and so that $e_0$ is orthogonal to the group orbits.
The frame fields give rise to structure coefficients $\ga$ (not constants, as they vary over time),
which may be decomposed into $n$ and $a$,
similar to the decomposition of $\hg$ into $\hn$ and $\ha$ above.
These structure coefficients are then structure constants at every instance of time,
and the Lie algebra also has the specified Bianchi type.
We refer to Section~1.5 of \cite{DynSysCosmology} for details.

Now for Bianchi cosmologies that are not of type VIII or IX,
there is also an Abelian subgroup $G_2$ of the isometry group.
One may then specialize the orthonormal frame such that $e_2$ and $e_3$ lie tangent
to this Abelian subgroup $G_2$, cf. the appendix in \cite{WainwrightExact1981}.
In particular Bianchi cosmologies of all types except VIII and IX
are special cases of so-called \emph{$G_2$ cosmologies},
which are solutions to Einstein's field equations admitting an Abelian group of isometries 
whose orbits are spacelike two-surfaces, cf. Sections 1.6 and 12 of \cite{DynSysCosmology}.
$G_2$ cosmologies have been studied considerably
see in particular the work by Wainwright and collaborators
e.g. \cite{WainwrightClassification1979}, \cite{WainwrightExact1981},
\cite{HewittWainwrightOTG2} and \cite{UgglaWainwrightVanElstG2}
wherein geometrical classifications are given and equations
in the dynamical systems approach are formulated,
and more recently by Hewitt and collaborators,
\cite{HewittConjectureChaoticCosmInhom} and \cite{HewittEtAlHSOKVF},
wherein self-similar $G_2$ cosmologies are studied
and a conjecture about the late-time behaviour for $G_2$ cosmologies is stated.
Besides the Bianchi cosmologies of lower types,
also cosmologies which have Gowdy symmetry fall under the class of $G_2$ cosmologies,
cf. \cite{UgglaWainwrightVanElstG2, RingstromGowdyLivingReviews} for more details.

The $G_2$ may have two important properties,
following the classification of $G_2$ cosmologies by Wainwright \cite{WainwrightExact1981}.
It may act \emph{orthogonally-transitively},
which means that the planes orthogonal to the group orbits
are the tangent planes of some two-surface, i.e. are two-surface forming.
Moreover, one or two of the Killing vectors may be \emph{hypersurface-orthogonal}
which means that the tangent planes in spacetime that are orthogonal to 
this Killing vector are three-surface forming.

\subsection*{Orthogonal transitivity}
If we impose Einstein's field equations with an perfect fluid
on a Lie group of a lower Bianchi type with a left-invariant first and second fundamental form,
and the fluid congruence is orthogonal to the surfaces of homogeneity, 
then the momentum constraints imply that the $G_2$ acts orthogonally-transitively,
except if the Bianchi type is~\vi{}.
The $G_2$ acting orthogonally-transitively is, in fact, equivalent 
to the spacelike direction normal to the $G_2$, i.e. the direction parallel to $a^\#$,
being a shear eigenspace by Theorem~3.1(a) of \cite{WainwrightClassification1979}.
In particular, the tangent planes to the $G_2$ may be written as sums of 
eigenspaces of the shear or equivalently of the Weingarten map.
The $G_2$ being Abelian then implies 
the vanishing of a structure coefficient with respect to a frame
    diagonalizing the Weingarten map.
As explained in the introduction, this is particularly relevant 
for quiescence or more precisely for the convergence of the eigenvalues of the expansion-normalized Weingarten map.

For Bianchi type~\vi{} the momentum constraints do not imply that the $G_2$ acts 
    orthogonally-transitivity.
This is due to a certain degeneracy of the momentum constraints compared to the other Bianchi types,
this was already observed by Ellis and MacCallum in \cite{EllisMacCallumHomCos}.
As the $G_2$ need not act orthogonally-transitively,
Bianchi type~\vi{} orthogonal perfect fluid cosmologies have one more degree of freedom
compared to orthogonal perfect fluid cosmologies of the other lower Bianchi types.
This is because the vector $a^\#$ orthogonal to the $G_2$ is no longer an eigenvector of the Weingarten map,
which means that the planes tangent to the $G_2$ are no longer a sum of eigenspaces of the Weingarten map.

\subsection*{A spacelike hypersurface-orthogonal Killing vector field}
One of the Killing vector fields generating the isometry group 
    may be hypersurface-orthogonal.
This means that the planes tangent to it, which contain two spacelike directions 
and a timelike direction, are integrable in the sense of Frobenius.
In the literature this condition has sometimes been referred to as polarization
(though it should be said that this latter word has been used
    for various geometric features that reduce the 
gravitational degrees of freedom).
If one of the two spacelike Killing fields of a $G_2$ cosmology
is hypersurface-orthogonal,
then there is another hypersurface-orthogonal Killing vector
field orthogonal to the first one if and only if the $G_2$ acts orthogonally-transitively,
see Theorem 2.1 of \cite{WainwrightExact1981}.

A necessary condition for the existence of the hypersurface-orthogonal Killing vector field
in a type~\vi{} orthogonal perfect fluid cosmology
is that, with respect to a frame diagonalizing the shear, 
we must have $\ga_{31}^2 = n^{22} =0$, where $\xi = \xi_2 e_2$.
This frame also diagonalizes the expansion-normalized Weingarten map of course,
which is of interest for quiescence.
Furthermore, the hypersurface-orthogonal Killing vector field
    must lie tangent to the $G_2$, $\tr(n)$ must vanish 
and lastly one of possibly two hypersurface-orthogonal Killing vector fields must be 
a shear eigenvector with eigenvalue 0.
%%%%%%%%%%%%%%%%%%%%%%%%%%%%%%%%%%%%%%%%%%%%%%%%%%%%%%%%%%%%%%%%%%%%%%%%%%%
%
% B: DERIVATION OF THE EVOLUTION EQUATIONS
%
%%%%%%%%%%%%%%%%%%%%%%%%%%%%%%%%%%%%%%%%%%%%%%%%%%%%%%%%%%%%%%%%%%%%%%%%%%%
\section{Derivation of the evolution equations} \label{sec:AppendixEquations}
\noindent In this appendix we derive the evolution equations as presented 
in Section~\ref{sec:Equations} above.
The equations describe Bianchi type~\vi{} spacetimes
satisfying Einstein's equations for an orthogonal perfect fluid,
where we impose a linear equation of state of the form $p = (\gamma - 1) \rho$
with $\gamma \in (\tfrac{2}{3},2]$. 
The derivation of the equation goes in two steps.
First we choose a frame that in effect diagonalizes the structure coefficients,
and we expansion-normalize the variables and rescale time so as to obtain a 
dimensionless or scale-invariant system.
Then we introduce polar coordinates for the structure coefficients
and for the off-diagonal shear components, which allows us to write the Hamiltonian
and momentum (or Gauss and Codazzi) constraints as graphs,
thus eliminating some of the variables and constraints.
We end up with a system representing the five degrees of freedom in six variables and one constraint.

\subsection{Bianchi type VI$_{-1/9}$ cosmologies}
The spacetimes we consider in this work are assumed to satisfy Einstein's field equations
without cosmological constant with a perfect fluid matter source.
Thus, if $(M,g)$ denotes our Lorentz manifold, we have
\begin{equation}
\roRic_g - \tfrac{1}{2} \roScal_g g = T,
\end{equation}
where $\roRic_g$ and $\roScal_g$ are respectively the Ricci and scalar curvature associated
to the Levi-Civita connection of $g$,
and the energy-momentum tensor $T$ is that of an orthogonal perfect fluid
with a linear equation of state, which we specify below.
We moreover demand $(M,g)$ to be a Bianchi type~\vi{} spacetime.

\begin{defn} \label{defn:BianchiSpacetime}
A \emph{Bianchi type~\vi{} spacetime} is a Lorentz manifold $(M,g)$ of the form $M = I \times G$,
where $I$ is an open interval and $G$ a connected, three-dimensional Lie group,
with Lie algebra of type~\vi{}
and the metric may be written as 
\begin{equation}
g = - dt \otimes dt + a_{ij}(t) \xi^i \otimes \xi^j;
\end{equation}
here the $\xi^j$ are dual to a basis $e_i$ of the Lie algebra of $G$,
and the functions $a_{ij}$ are smooth and form the components 
of a symmetric, positive definite matrix at any $t \in I$.
\end{defn}

The energy momentum tensor for a perfect fluid may be written as 
\begin{equation}
T = (\rho + p) u \otimes u + p g
\end{equation}
where $\rho \geq 0$ denotes the energy density, $p$ the pressure,
and $u$ the four-velocity of the fluid.
We assume a linear equation of state, which means that for some constant $\gamma \geq 0$
we have $p = (\gamma - 1) \rho$. 
Typically one considers $\gamma \in (2/3,2]$;
for this range both the strong and dominant energy conditions are satisfied
A stiff fluid is a perfect fluid for which the pressure equals the energy density,
i.e. $\gamma = 2$,
while orthogonality means that the four-velocity of the fluid is orthogonal to the 
orbits of the Lie group $G$. 
In particular for orthogonal perfect fluid in a Bianchi spacetime the energy-momentum tensor takes the form
\begin{equation}
T = 2 \rho~ dt \otimes dt  + \rho g     
\end{equation}
where $\rho \geq 0$ is the energy density of the fluid.

\subsection{Expansion-normalized canonical coordinates}
Our starting point to derive the evolution equations are the Equations (1.90)-(1.99) 
    of \cite{DynSysCosmology}, where we write $\theta = 3 H$ for the mean curvature.
We choose to rotate the frame in such a way that the symmetric matrix $n$ is diagonalized,
and moreover so that $n^{11} = 0$, which we may do initially just as in the classification.
In the notation of \cite{DynSysCosmology}, this condition may be propagated by letting 
\begin{equation}
\Omega_1 = \s_{23} Q, \qquad 
\Omega_2 = \s_{31}, \qquad
\Omega_3 = -\s_{12}
\end{equation}
where $Q:= (n_{22} + n_{33}) / (n_{22} - n_{33})$.
We refer to this choice of frame as the \emph{canonical frame},
and the resulting expansion-normalized coordinates as \emph{canonical coordinates}.
Observe that $Q$ is well-defined due to the fact that $n_{22}$ and $n_{33}$,
being the eigenvalues of $n$, must be non-zero
and have opposite signs for a development of type~\vi{}.
    
We proceed to expansion-normalize the variables, i.e. divide by the mean-curvature\footnote{
Our choice of expansion-normalization is the convention 
    of \cite{RingstromBianchiIX},
as opposed to the convention where one normalizes with the Hubble parameter $H = \theta/3$.},
\begin{equation}
\S_{ij} := \frac{\s_{ij}}{\theta}, \quad
N_{ij} := \frac{n^{ij}}{\theta}, \quad
A_{k} := \frac{a_k}{\theta}, \quad
\Omega := \frac{3 \rho}{\theta^2}.
\end{equation}
and rescale the time-parameter $\tau = \tau(t)$
as $d\tau/dt = \theta/3$, where $\theta = \tr(k)$.
The fact that $\theta >0$ throughout the development follows from the Hamiltonian constraint,
before it is expansion-normalized,
given that the spatial scalar curvature is strictly negative, as it is for type~\vi{} cosmologies
on the surfaces of homogeneity.
To continue, we use the fact that the shear is trace free to define as usual
\begin{equation} \label{eq:SigmaPlusSigmaMinus}
\S_+ := - \tfrac{3}{2} \S_{11},
\quad \S_- := \tfrac{\sqrt{3}}{2} \left( \S_{22} - \S_{33} \right),
\end{equation}
and in a similar vein we define
\begin{equation}
N_{\pm} :=\tfrac{\sqrt{3}}{2} \left( N_{22} \pm N_{33} \right),
\quad A := 3 \sqrt{3} A_1.
\end{equation}
We have the following set of expansion-normalized evolution equations 
(where $'$ denotes a derivative with respect to the time-coordinate $\tau$)
\begin{subequations} \label{eq:CanonEvE}
\begin{align}
A' &= (q  + 2 \S_{+}) A, \\
N_{\pm}' &= (q + 2 \S_{+}) N_{\pm} 
    + 2 \sqrt{3} \S_{-} N_{\mp} \\
\S_{+}' &= -(2-q)  \S_{+} - 2 N_{-}^2
	+ 9 (\S_{12}^2 + \S_{31}^2), \\
\S_{-}' &= -(2-q)  \S_{-} - 2\sqrt{3} N_{+} N_{-} 
  - 3 \sqrt{3} \left(\S_{31}^2 - \S_{12}^2 \right)
  + 6\sqrt{3} \S_{23}^2 Q  , \\
\S_{23}' &= -(2-q)  \S_{23} - \tfrac{2}{3} A N_-   
  - 2 \sqrt{3} \S_{-} \S_{23} Q 
  + 6 \S_{12} \S_{31}, \\
\S_{31}' &= -(2-q) \S_{31} - \big( 3 \S_{+} - \sqrt{3} \S_{-} \big) \S_{31} 
	- 3 \S_{12} \S_{23} (1 + Q), \label{eq:EvES31} \\
\S_{12}' &= -(2-q) \S_{12} - \big( 3 \S_{+} + \sqrt{3} \S_{-}\big) \S_{12}
	- 3 \S_{31} \S_{23} (1 - Q), \label{eq:EvES12}
\intertext{and there is also the auxiliary equation}
\Omega' &= 2(q-q^*) \Omega.
\end{align}
\end{subequations}
Here $q, q^*$ and $Q$ are given by 
\begin{subequations}
\begin{align}
q &:= q^* \Omega + 2 \left(\S_+^2 + \S_-^2 +
    3 \left( \S_{12}^2 + \S_{23}^2 + \S_{31}^2 \right) \right), \\
q^*(\gamma) &:= \tfrac{1}{2} (3 \gamma - 2), \\
Q &:= N_+ / N_-,
\end{align}
\end{subequations}
where $\gamma \in (\tfrac{2}{3},2]$.
For reference, the Raychaudhuri equation may be written as 
\begin{equation}
\theta' = -(1+q)\theta,
\end{equation}
and $Q$ satisfies the evolution equation
\begin{equation} \label{eq:EvEQ}
Q' = 2 \sqrt{3} \S_- \big(1-Q^2 \big).
\end{equation}
Lastly there are the various constraint equations,
namely the algebraic constraint, the momentum (or Codazzi) constraints and
the Hamiltonian (or Gauss) constraint, 
which for our choice of gauge respectively read
\begin{subequations}
\begin{align}
A^2 &=  N_{-}^2 - N_{+}^2 , \label{eq:AC} \\
\tfrac{1}{3} \S_+ A &= \S_{23} N_-, \label{eq:MC1}\\
\S_{12} A &= \S_{31} \left(N_{-} - N_{+}\right), 
    \label{eq:MC2} \\
\S_{31} A &= \S_{12} \left(N_{-} + N_{+}\right), 
    \label{eq:MC3}\\
1 &= \Omega + \tfrac{1}{3} A^2 + N_-^2 + \S_{+}^2 + \S_-^2 
    + 3 \left(\S_{12}^ 2+ \S_{23}^2 + \S_{31}^2 \right). \label{eq:HCOld}
\end{align}
\end{subequations}
\begin{rmk} \label{rmk:VariableQ}
We may replace $N_+$ with $Q$ as an independent variable of the system above,
observing that we can write $N_+ = N_- Q$
and obtain an equivalent system. 
This viewpoint is of use in Subsection~\ref{sec:PolarCoordinates} below.
\end{rmk}

\subsection{Invariant subsets}
There are multiple subsets invariant under the flow of the system of equations above, 
which are related to possible geometrical features of the Abelian $G_2$
described in Subsection~\ref{sec:abelianG2}.
Firstly, there are the orbits corresponding to a cosmology in which
the $G_2$ acts orthogonally-transitively,
i.e. when $\S_{12} = \S_{31} = 0$. 
We can observe from Equations \eqref{eq:EvES31} and \eqref{eq:EvES12},
that this indeed forms an invariant set, which we denote by \ot.
We refer to these orbits as the \emph{orthogonally-transitively} orbits
or sometimes also the \emph{non-exceptional} orbits.
On the other hand, orbits which are not orthogonally-transitive or not non-exceptional
we refer to as \emph{exceptional} orbits
for historical reasons.

Next, there are the orbits corresponding to a cosmology
in which the $G_2$ has a hypersurface-orthogonal Killing vector field.
This is equivalent to the condition that 
\begin{equation}
\S_{31} = \pm \S_{12}, \quad 
\S_- = 0, \quad \text{and}~
N_+ = 0.
\end{equation}
We denote this invariant set by $\svi$,
and refer to orbits of this type as \emph{polarized orbits}.
Analogous sets have been called \emph{shear-invariant}
e.g. in \cite{WainwrightHsuClassA}.

\subsection{Equivalent strata} \label{sec:strata}
Let us first observe that there are various equivalent strata of the space defined by the various variables and constraints.
In particular, if we map 
\begin{equation*}
(A, N_+,N_-, \S_{12}, \S_{31}) \mapsto -(A, N_+, N_-, \S_{12}, \S_{31}),    
\end{equation*}
then the dynamics
and constraint equations are invariant,
and similarly if we map 
\begin{equation*}
(N_+, N_-, \S_{12}, \S_{23}) \mapsto -(N_+, N_-, \S_{12}, \S_{23}),    
\end{equation*}
or if we map 
\begin{equation*}
(N_+, N_-, \S_{23}, \S_{31}) \mapsto - (N_+, N_-, \S_{23}, \S_{31}).
\end{equation*}
This is a result of the discrete symmetries that we can apply to the frame,
namely the maps $S_j: e_j \mapsto -e_j$, where $j = 1,2,3$ respectively.
Lastly, we can choose to rotate in the $G_2$, say $R_{23}: (e_2, e_3) \mapsto (e_3, -e_2)$.
Then we obtain the map 
\begin{equation*}
(N_-,  \S_-, \S_{23}, \S_{31}, \S_{12},) \mapsto -(N_-, \S_-, \S_{23}, \S_{12}, -\S_{31}),
\end{equation*}
using the fact that the matrix $\S_{ij}$ is symmetric.
As for example $S_3 = R_{23} \circ S_2 \circ R_{23}$,
we see that we have $16$ equivalent strata of the phase space.

Hence, we choose to let 
\begin{equation} \label{eq:PreferredStratum}
A > 0, \quad 
N_- > 0, \quad 
\S_{12} \geq 0, \quad 
\S_{31} \geq 0.    
\end{equation}
Indeed, we may first use the reflection map $S_1$ to always obtain $A > 0$,
and one of $S_2, S_3$ or $R_{23}$ to ensure that $N_- > 0$.
Then, by \eqref{eq:MC2}, $\S_{12}$ and $\S_{13}$ must have the same sign,
so if they both have a minus sign then we may use $R_{23}^2 = S_2 \circ S_3$ to obtain 
that $\S_{12} \geq 0$ and $\S_{13} \geq 0$.

Observe that the non-exceptional orbits satisfy $\S_{12} = \S_{31} = 0$,
in which case $S_2$ and $S_3$ describe the same symmetry,
or in other words, $R_{23}^2 = \Id$ for the non-exceptional orbits.
In that case we only have 8 different strata.

\subsection{Adapted polar coordinates} \label{sec:PolarCoordinates}
In canonical coordinates with
    $N_{11} = 0$, the variables $N_{22}$ and $N_{33}$ are
the non-zero eigenvalues of the matrix $N$.
Thus $N_+$ and $N_-$ are up to a factor the sum and difference of the eigenvalues.
By the algebraic constraint, $A^2$ is then a multiple of the product of the eigenvalues.
Moreover, upon observation of the momentum constraints
it becomes clear that $N_{22}/N_{33}$,
i.e. the ratio of the eigenvalues, plays a fundamental role in the dynamics.

This ratio can be captured effectively by introducing polar coordinates
for the pairs $N_+, A$ and $\S_{12}, \S_{31}$.
Define the coordinates $ M \in (0,\infty)$, $P \in [0,\infty)$,
$\theta \in (0, \pi)$, and $\phi \in [0,\pi/2]$ as follows:
\begin{subequations}
\begin{align}
N_+ = M \cos(\theta), \\
A = M \sin (\theta), \\
\S_{31} = \tfrac{1}{\sqrt{3}} P \cos (\phi), \\
\S_{12} = \tfrac{1}{\sqrt{3}} P \sin (\phi).
\end{align}
\end{subequations}
Observe that the restrictions on $\theta$ and $\phi$ are a consequence
of choosing the stratum such that $A > 0, \S_{12} \geq 0, \S_{31} \geq 0$.
In particular $\sin(\theta) > 0$.
We also let $\S_{\times} := \sqrt{3} \S_{23}$.
We refer to this set of coordinates as \emph{adapted polar coordinates}.

In these coordinates, we firstly find that
$N_- = M$, again using the assumption $N_- > 0$,
and secondly that Equations \eqref{eq:MC2} and \eqref{eq:MC3} may be simply written as  
\begin{subequations}
\begin{align}
M P ( \cos(\theta - \phi) -\cos(\phi) ) &= 0, \\
M P ( \sin(\theta - \phi) -\sin(\phi) ) &= 0,
\end{align}
\end{subequations}
which means that, either $M = 0$ or $P = 0$ or $\phi = \theta/2 $.
The first condition means the corresponding orbit is of Bianchi type I, 
while the second condition means that the orbit 
in the invariant set of non-exceptional orbits $\otp$ for which $P$ vanishes identically.
The condition $\phi = \theta /2$ is of most interest, as this is the only possible solution
for the generic set of exceptional orbits, and allows us to eliminate the angle $\phi$
inside the phase space;
for the non-exceptional orbits the angle $\phi$ does not play a role in the dynamics
and we may simply set it to $\theta/2$ regardless.

Observe that in these coordinates $Q = \cos(\theta)$,
and we have now effectively taken the viewpoint of Remark~\ref{rmk:VariableQ}
and turned the ratio $N_+/N_-$ into a variable.

The momentum constraint \eqref{eq:MC1} in adapted polar coordinates reads
\begin{equation}
M \big( \S_+ \sin(\theta) - \sqrt{3} \S_{\times}) = 0
\end{equation}
and allows us to write $\S_{\times} = \frac{1}{\sqrt{3}} \S_+ \sin(\theta)$,
and hence to eliminate $\S_{\times}$ as a variable.
In what follows we also eliminate $\Omega$ using the Hamiltonian constraint \eqref{eq:HCOld},
rewritten in adapted polar coordinates, to write $\Omega$ as a function of the other variables.

We thus have the following set of evolution equations:
\begin{subequations} \label{eq:PolarEvEAppendix}
\begin{align} 
M' &= \big[q +2 \S_{+} + 2 \sqrt{3} \cos(\theta) \S_-  \big] M, \label{eq:EvEMApp}\\
P' &= \big[-(2-q) - \big(3 + \sin^2(\theta) \big) \S_{+}  
    + \sqrt{3} \cos(\theta) \S_-  \big] P, \label{eq:EvEPApp} \\
\theta' &= -2 \sqrt{3} \S_- \sin(\theta), \label{eq:EvEThApp}\\
\S_{+}' &= -(2-q) \S_{+} - 2 M^2 + 3 P^2, \label{eq:EvESplusApp} \\
\S_{-}' &= -(2-q) \S_- - \sqrt{3} \cos(\theta) \big( 2 M^2 + P^2 
    - \tfrac{2}{3} \sin^2(\theta) \S_{+}^2 \big) \label{eq:EvESminusApp},
\intertext{where $q$ is shorthand for}
q &= (2 - q^*) \big( \big(1+ \tfrac{1}{3}\sin^2(\theta)\big) \S_+^2
    + \S_-^2 + P^2 \big) + q^* \left(1 - \big(1+ \tfrac{1}{3}\sin^2(\theta)\big) M^2 \right),
    \label{eq:DefinitionOfqApp}
\end{align}
\end{subequations}
where we eliminated $\Omega$ as a variable using the Hamiltonian constraint.

The constraints, which are now the definitions of $\Omega$ and $\S_{\times}$,
in these coordinates take the form 
\begin{subequations} \label{eq:AuxiliaryAppendix}
\begin{align} 
\Omega &= 1 - \big(1+ \tfrac{1}{3}\sin^2(\theta)\big) \big( M^2 + \S_+^2 \big) 
    - P^2 - \S_-^2, \label{eq:HCNewApp} \\
\S_{\times} &= \tfrac{1}{\sqrt{3}} \S_+ \sin (\theta), \label{eq:MCSxApp}
\intertext{and $\Omega$ and $\S_{\times}$ satisfy the so-called \emph{auxiliary evolution equations},
which are now simply a consequence of differentiating the constraints}
\S_{\times}' &= [-(2-q) - 2 \sqrt{3} \S_- \cos(\theta)] \S_{\times} 
    - \tfrac{1}{\sqrt{3}} \sin(\theta) \big( 2 M^2 - 3 P^2 \big), \\
\Omega' &= 2(q - q^*) \Omega. \label{eq:EvEOmApp}   
\end{align}
\end{subequations}

\begin{rmk}
Alternatively, having derived the system of equations (\ref{eq:EvEM} -~\ref{eq:EvESminus}),
one could substitute back $Q = \cos(\theta)$ as a variable, 
as only $\cos(\theta)$ and $\sin^2 (\theta)$
appears in the system of equations, except for the equation for $\theta$
that would be substituted by Equation \eqref{eq:EvEQ},
and for the constraint defining $\S_{\times}$. 
This would yield again an equivalent system, cf. Remark~\ref{rmk:VariableQ}.
This has the drawback of introducing square roots in the constraint defining $\S_\times$
as well as its evolution equation however, which is why we prefer to proceed with 
the angular coordinate $\theta$.
\end{rmk}

\subsection{Vacuum fixed points} \label{sec:VacuumFixedPoints}
There are also fixed points of the flow for the vacuum orbits, i.e. $\Omega = 0$,
which are described for example in \cite{HewittHorwoodWainrightExceptional}.
These do not play a large role in our analysis, but for the convenience of
the reader we decided to present them in adapted polar coordinates in our choice of stratum.
The vacuum fixed points play a role in the analysis of the future asymptotics,
which are briefly discussed in Section~\ref{sec:future}.
Of course there is the Kasner circle $K = \partial \scr{D}$,
corresponding to the Kasner solutions, which are well known and we not elaborate on here.
Next, there is the set of plane-wave equilibria $\mathrm{PW}$
(there are more sets but because of our choice of stratum we only have one in the phase space)
which in adapted polar coordinates may be written as 
\begin{equation}
\begin{split}
\mathrm{PW} = \big\{ (\sqrt{-\S_+(1+\S_+)}, 0 , \arcsin(\sqrt{\tfrac{3(1+\S_+)}{-\S_+}}), \S_+, 0)~|~
    \S_+ \in (-1,-\tfrac{3}{4}] \big\}.
\end{split}
\end{equation}
Note that $\mathrm{PW} \subset \otp^{0}$, and point with the value $\S_+ = -1$ corresponds to the special point $T_1$,
while the point with the value $\S_+ = -3/4$ lies in $\spvi^0$, and in \cite{HewittTraceClass}
is called the Lifschitz-Khalatnikov point.
In \cite{HewittWainwrightClassB} the analysis for the orthogonally transitive orbits
makes use of a monotone function
which in adapted polar coordinates reads 
\begin{equation}
(1+\S_+)^2 - \tfrac{4}{3} M^2 \sin^2(\theta).
\end{equation}
Using this function it is shown that $\mathrm{PW} \cup \{T_1\}$ 
forms the global future attractor
for orbits in $\otp^0$ as well as for orbits in $\otpo$ for $\gamma = 2$, aside of a set of measure zero.
Here $T_1$ is the special or Taub point on the Kasner circle
for which $s_+ = 1$ and $s_- = 0$ which corresponds to a flat Kasner solution.
However, as described in \cite{HewittHorwoodWainrightExceptional},
the extra degree of freedom in the shear
which is present in the exceptional orbit destabilizes the set of plane-wave solutions.

Lastly there is the Robinson-Trautman solution $\mathrm{RT}$
which in adapted polar coordinates reads
\begin{equation}
\mathrm{RT} = (\tfrac{1}{\sqrt{2}}, \tfrac{\sqrt{5}}{3 \sqrt{3}}, \pi/2, -\tfrac{1}{3}, 0).
\end{equation}
The point $\mathrm{RT}$ lies in the invariant set $\spvi^0$.
As $P(\mathrm{RT}) > 0$, it lies in the set of exceptional orbits.
For $\gamma \in (9/10,2)$ it is conjectured in \cite{HewittHorwoodWainrightExceptional}
that $\mathrm{RT}$ is the global future attractor.
The same reasoning can be extended to the case $\gamma = 2$,
and we indeed expect $\mathrm{RT}$ to be the global future attractor
for the stiff fluid case as well.
% %%%%%%%%%%%%%%%%%%%%%%%%%%%%%%%%%%%%%%%%%%%%%%%%%%%%%%%%%%%%%%%%%%%%%%%%%%%
% %
% % B: TOOLS FROM DYNAMICAL SYSTEMS THEORY
% %
% %%%%%%%%%%%%%%%%%%%%%%%%%%%%%%%%%%%%%%%%%%%%%%%%%%%%%%%%%%%%%%%%%%%%%%%%%%%
\section{Analytical tools} \label{sec:dynsys}
\noindent An important tool in the dynamical systems approach to cosmology
    is the monotonicity principle.
The proof of the lemma below can be found e.g. in appendix A of \cite{LeBlancKerrWainwrightMagneticVI0}.
\begin{lemma}[Monotonicity principle]
\label{lemma:monotonicity}
Let $\vp^\tau$ be a flow on $\bb{R}^n$ and $S$ an invariant set of $\vp^\tau$.
Let $Z: S \to \bb{R}$ be a $C^1$-function whose range is the interval $(a,b)$,
where $a \in \bb{R} \cup \{-\infty\}, b \in \bb{R} \cup \{\infty\}$ and $a<b$. 
If $Z$ is strictly monotonically decreasing on orbits in $S$, then for all $x \in S$,
\begin{align}
\begin{split}
\alpha(x) &\subseteq 
\left\{s \in \bar{S}\setminus S ~|~ \lim_{y \to s} Z(y) \neq a \right\}, \\
\omega(x) &\subseteq 
\left\{s \in \bar{S}\setminus S~ |~ \lim_{y \to s} Z(y) \neq b \right\}. 
\end{split}
\end{align}
\end{lemma}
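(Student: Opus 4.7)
The plan is to proceed via the classical argument underlying LaSalle-type invariance principles, which in this setting reduces to exploiting the fact that a bounded, strictly monotone function of $\tau$ must have limits at $\pm\infty$, together with the invariance and closedness of $\alpha$- and $\omega$-limit sets. The proof is symmetric in the two assertions, so I would focus on the statement for $\omega(x)$ and then note that the argument for $\alpha(x)$ is obtained by reversing time.

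Fix $x \in S$ and consider the function $\tau \mapsto Z(\vp^\tau(x))$. Since $S$ is invariant, this is well-defined for all $\tau \in \mathbb{R}$, takes values in $(a,b)$, and is strictly decreasing by assumption. Hence $L_+ := \lim_{\tau \to \infty} Z(\vp^\tau(x))$ exists in $[a, Z(x))$; in particular $L_+ < Z(x) < b$. Now take any $y \in \omega(x)$, so that there is a sequence $\tau_k \to \infty$ with $\vp^{\tau_k}(x) \to y$. I would first argue $y \notin S$: if instead $y \in S$, then continuity of $Z$ on $S$ gives $Z(y) = L_+$. Because $S$ is invariant and $\omega(x)$ is closed and invariant under $\vp^\tau$ (a standard fact used in Remark \ref{rmk:alphalimit}), the entire orbit $\vp^\tau(y)$ lies in $S \cap \omega(x)$, and the same continuity argument applied to sequences approximating $\vp^\tau(y)$ forces $Z(\vp^\tau(y)) = L_+$ for every $\tau$. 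This contradicts the strict monotonicity of $Z$ along the orbit of $y$ in $S$. Hence $\omega(x) \subseteq \bar{S}\setminus S$.

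The remaining condition, that $\lim_{z \to y} Z(z) \neq b$ for $y \in \omega(x)$, I would then read off directly from the approximating sequence: along $z_k := \vp^{\tau_k}(x) \to y$ one has $Z(z_k) \to L_+$, and since $L_+ < Z(x) < b$ the limit along this particular sequence is not $b$, so the claimed set membership holds under the natural reading of the limit as the (possibly sequence-dependent) limit from inside $S$.

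For $\alpha(x)$ the argument is identical after replacing $L_+$ with $L_- := \lim_{\tau \to -\infty} Z(\vp^\tau(x)) \in (Z(x), b]$, which satisfies $L_- > a$, and using sequences $\tau_k \to -\infty$. The only delicate point is ensuring the invariance of $\omega(x)$ and $\alpha(x)$ under the flow (so that the would-be orbit through a point in $\omega(x) \cap S$ really stays in $\omega(x)$), but this is a standard property of limit sets under complete flows and requires no work here. Accordingly, I expect no significant obstacle; the core of the proof is the three-line contradiction derived from strict monotonicity combined with constancy of $Z$ on the orbit of a hypothetical $y \in \omega(x) \cap S$.
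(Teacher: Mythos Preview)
Your argument is correct and is precisely the standard LaSalle-type proof of the monotonicity principle. Note that the paper does not actually prove this lemma; it defers to Appendix~A of \cite{LeBlancKerrWainwrightMagneticVI0}, and your proposal reproduces the argument one would find there.
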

\noindent Another tool of use is an elementary version of 
    Gr{\"o}nwall's lemma, see e.g. Lemma~7.1 of \cite{RingstromCauchy}
for a more complete statement and proof. 
Note the reversed orientation of time.
\begin{lemma}
\label{lemma:Gronwall}
Let $I:=(a,b)$ be in an interval with $-\infty \leq a < b < \infty$.
Let $u \in C^0(\bar{I})\cap C^1(I)$ and $\alpha \in C(\bar{I})$.
If $u$ satisfies the inequality $u'(\tau) \leq - \alpha(\tau) u(\tau)$,
then
\begin{equation}
u(\tau) \geq u(b) \exp \left(\int_\tau^b \alpha(s) ds \right)
\end{equation}
for all $\tau \in I$.
\end{lemma}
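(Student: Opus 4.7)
The plan is to use the standard integrating-factor trick, adapted to the fact that the orientation of time is reversed: here the inequality $u'(\tau) \leq -\alpha(\tau) u(\tau)$ is compared to the value $u(b)$ at the right endpoint rather than the left.

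First I would define the auxiliary function
\begin{equation*}
F(\tau) := \exp\left(\int_\tau^b \alpha(s)\, ds\right),
\end{equation*}
which is well-defined and strictly positive on $\bar{I}$ since $\alpha$ is continuous on $\bar{I}$. A direct computation using the fundamental theorem of calculus gives $F'(\tau) = -\alpha(\tau) F(\tau)$, and clearly $F(b) = 1$.

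Next I would consider $w(\tau) := u(\tau)/F(\tau)$, which belongs to $C^0(\bar{I})\cap C^1(I)$ because $F$ is smooth and nowhere vanishing. A short computation yields
\begin{equation*}
w'(\tau) = \frac{u'(\tau) + \alpha(\tau) u(\tau)}{F(\tau)},
\end{equation*}
and the hypothesis $u'(\tau) \leq -\alpha(\tau) u(\tau)$ together with $F(\tau) > 0$ implies $w'(\tau) \leq 0$ on $I$. Hence $w$ is non-increasing on $\bar{I}$, so for any $\tau \in I$ with $\tau < b$ one has $w(\tau) \geq w(b) = u(b)/F(b) = u(b)$. Multiplying through by $F(\tau) > 0$ gives exactly the claimed inequality.

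There is essentially no obstacle here: the only subtlety is checking that the quotient $w = u/F$ inherits the correct regularity and that the direction of the inequality is preserved when dividing by the positive quantity $F(\tau)$. The care one must take is simply bookkeeping about the sign conventions in the integrating factor, since the orientation is reversed relative to the usual forward Grönwall statement; choosing $F(\tau) = \exp(\int_\tau^b \alpha\, ds)$ (rather than $\exp(\int_b^\tau \alpha\, ds)$) is what makes the monotonicity of $w$ yield a lower, rather than upper, bound on $u(\tau)$.
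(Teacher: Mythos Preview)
Your proof is correct and is exactly the standard integrating-factor argument for Gr{\"o}nwall's inequality with reversed time orientation. The paper does not actually give its own proof of this lemma; it simply cites Lemma~7.1 of \cite{RingstromCauchy}, so your argument supplies precisely what the paper omits.
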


Next, we have a practical lemma, which we did not find a reference for
in the literature, so we include a proof for completeness.
Below $C_b^1(I)$ denotes the space of continuously differentiable
functions with bounded derivative.
\begin{lemma} \label{lemma:EdgeCases}
Let $I:=(-\infty,b]$, and let $u,v,w \in C_b^1(I)$,
$\alpha, \beta \in C^1_b(I) \cap L^1(I)$ be such that: \\ 
$u \geq 0, u(b) > 0, |w| \leq \frac{1}{2}, \beta \geq 0$,
and 
\begin{equation}
\limtau u(\tau) = 0, \quad \limtau v(\tau) = v_0, 
\end{equation}
and such that the following differential inequalities are satisfied:
\begin{align}
u'(\tau) &\leq (-\alpha(\tau) + (1+w(\tau)) v(\tau)) u(\tau), \\
v'(\tau) &\leq -\beta(\tau) v (\tau) + f(\tau) - c u(\tau)^m
\end{align}
where $c > 0$ is a constant, $m \in \mathbb{N}$ is an integer,
and, finally, $f$ is bounded from above by 
a function converging to 0 exponentially as $\tau \to -\infty$,
i.e. there exists a $T_f$ such that 
$f(\tau) \leq f_0 \exp(\delta \tau)$ for some constants $f_0 \geq 0$, $\delta > 0$,
for any $\tau \leq T_f$.
Then $v_0 > 0$.
\end{lemma}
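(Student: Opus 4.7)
The plan is to argue by contradiction, assuming $v_0 \leq 0$ and splitting into the subcases $v_0 < 0$ and $v_0 = 0$. Applying a Grönwall-type argument to the first inequality (as in Lemma~\ref{lemma:Gronwall}) together with $u(b) > 0$ shows that $u > 0$ throughout $\bar{I}$, so $\log u$ is well-defined and satisfies $(\log u)' \leq -\alpha + (1+w) v$. If $v_0 < 0$, pick $T_1 \leq b$ with $v(\tau) \leq v_0/2 < 0$ for $\tau \leq T_1$; since $|w| \leq 1/2$ gives $1 + w \geq 1/2 > 0$ and $v < 0$, one computes $(1+w) v \leq v/2 \leq v_0/4 < 0$. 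Integrating from $\tau$ to $T_1$ yields
\[
\log u(\tau) \geq \log u(T_1) + \int_\tau^{T_1} \alpha\, ds - \tfrac{v_0}{4}(T_1 - \tau),
\]
and as $\tau \to -\infty$ the integral stays bounded by $\|\alpha\|_{L^1(I)}$ while the last term tends to $+\infty$ because $-v_0/4 > 0$. Thus $u(\tau) \to +\infty$, contradicting $u \to 0$.

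The borderline case $v_0 = 0$ is the heart of the lemma and requires a two-step bootstrap. Given small $\epsilon > 0$ (to be chosen below), pick $T_\epsilon \leq T_f$ with $|v| < \epsilon$ on $(-\infty, T_\epsilon]$, so that $(1+w)v \leq (3/2)\epsilon$ there. Integrating the log-derivative inequality from $\tau$ to $T_\epsilon$ produces
\[
u(\tau) \geq K_\epsilon\, e^{(3/2)\epsilon(\tau - T_\epsilon)}, \quad \tau \leq T_\epsilon,
\]
with $K_\epsilon := u(T_\epsilon) e^{-\|\alpha\|_{L^1(I)}} > 0$. Now choose $\epsilon$ so small that $(3m/2)\epsilon < \delta$; then $cu^m \geq cK_\epsilon^m e^{(3m/2)\epsilon(\tau - T_\epsilon)}$ dominates $f(\tau) \leq f_0 e^{\delta\tau}$ near $-\infty$, since $e^{(\delta - (3m/2)\epsilon)\tau} \to 0$. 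Consequently there exist $T^* \leq T_\epsilon$ and $C_1 > 0$ such that
\[
f - c u^m \leq -C_1\, e^{(3m/2)\epsilon \tau}, \quad \tau \leq T^*.
\]

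Plugging this into the second inequality gives $v' + \beta v \leq -C_1\, e^{(3m/2)\epsilon \tau}$ on $(-\infty, T^*]$. Multiplying by the bounded integrating factor $e^{-B(\tau)}$, where $B(\tau) := \int_\tau^b \beta\, ds \in [0, B_0]$, yields $(v e^{-B})' \leq -C_1 e^{-B_0} e^{(3m/2)\epsilon \tau}$. Integrating from $\tau_1$ to a fixed $\tau_2 \leq T^*$ and letting $\tau_1 \to -\infty$ (using $v(\tau_1) e^{-B(\tau_1)} \to v_0 e^{-B_0} = 0$) gives
\[
v(\tau_2) \leq -\frac{C_1 e^{-B_0}}{(3m/2)\epsilon}\, e^{(3m/2)\epsilon \tau_2}\, e^{B(\tau_2)} < 0,
\]
so $v(\tau) < 0$ for every $\tau \leq T^*$. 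Feeding this sign back into the first inequality, using $1 + w \geq 1/2 > 0$, reduces it to $(\log u)' \leq -\alpha$ on $(-\infty, T^*]$, whence $u(\tau) \geq u(T^*) e^{-\|\alpha\|_{L^1(I)}}$ for all $\tau \leq T^*$. This uniform positive lower bound contradicts $u \to 0$, completing the proof.

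The main obstacle is precisely the borderline case $v_0 = 0$: neither differential inequality in isolation is strong enough, because the coefficient $-\alpha + (1+w)v$ in the first equation degenerates to integrable noise as $\tau \to -\infty$, and the sign of $-\beta v$ in the second inequality is uncontrolled. The argument works only by using each inequality to improve the bound obtained from the other; the exponential dominance $cu^m \gg f$ near $-\infty$, achieved by tuning the rate $(3m/2)\epsilon$ inherited from the first bootstrap step to lie strictly below the given decay rate $\delta$ of $f$, is the technical heart of the proof.
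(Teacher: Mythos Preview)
Your proof is correct, and the central bootstrap --- using $|v|<\epsilon$ to obtain an exponential lower bound on $u$, then choosing $\epsilon$ so small that $cu^m$ eventually dominates the exponentially decaying $f$ --- is precisely the idea in the paper. The two arguments differ only in how the contradiction is closed in the case $v_0=0$. The paper first shows $\int_\tau^b v\,ds\to+\infty$, which both rules out $v_0<0$ and produces a sequence $\tau_k\to-\infty$ with $v(\tau_k)>0$; once $f-cu^m<0$ and $\beta\geq 0$, one has $v'<0$ whenever $v>0$, so $v$ is monotone toward the past from any such $\tau_{k_0}$ and hence $v_0\geq v(\tau_{k_0})>0$. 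You instead apply an integrating factor to $v'+\beta v\leq -C_1 e^{(3m/2)\epsilon\tau}$ to force $v(\tau)<0$ for all $\tau\leq T^*$, and then feed this sign back into the first inequality to obtain a uniform positive lower bound on $u$, contradicting $u\to 0$. Both closures are short; the paper's monotonicity argument is a little slicker and avoids the integrating-factor computation, while yours is more quantitative and does not require extracting the auxiliary sequence $(\tau_k)$.
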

\begin{proof}
Firstly, let us observe that for any $\tau \leq b$
\begin{equation} 
u (\tau) \geq u(b) \exp( \int_{\tau}^{b} \left(\alpha(s) - (1+w(s)) v(s)\right) ds) , 
\end{equation}
as a consequence of Lemma~\ref{lemma:Gronwall}.
Hence as $u(\tau) \to 0, u(b) >0$ and 
$\limtau \int_{\tau}^b \alpha(s) ds$ is finite,
we must have that $\int_{\tau}^{b} (1+w(s)) v(s) ds \to \infty$ as $\tau \to -\infty$.
However, as $|w| \leq \frac{1}{2}$, this means that also 
the integral $\int_{\tau}^{b} v(s) ds \to \infty$ as $\tau \to -\infty$.
It follows that $v_0 \geq 0$, and there must also exist a series of
times $(\tau_k)_{k=1}^\infty$ such that $\tau_k \to -\infty$ and such that $v(\tau_k) >0$.

Assume now to the contrary that $v_0 = 0$.
Let $T_f, f_0$ and $\delta$ be as given in the statement.
The same statements in the paragraph above hold with $b$ replaced by $T_f$,
by running the same argument ending at the time $T_f$,
where we note in particular that $u(T_f) >0$ by the inequality shown above.

On the one hand, there must exist a $T_\delta$ such that $|v(\tau)| \leq \delta/(3m)$
    for any $\tau \leq T_\delta$, as $v (\tau) \to 0$.
On the other hand, there must exist a time $T \leq \min\{T_{\delta},T_f\}$ such that 
for any $\tau \leq T$ it must hold that
\begin{equation} \label{eq:InequalityForF}
    f(\tau) < \tfrac{c}{2} u(\tau)^m.
\end{equation}
For if not, by comparing logarithms, we could deduce the existence
of a series of times $(\rho_k)_k$ such that  $\rho_k \to -\infty$ and such that for each 
$k \in \mathbb{N}$ it would hold that
\begin{equation*}
\delta \rho_k + m \int_{\rho_k}^{T_f} (1+w(s)) v(s) ds \geq - \ln(f_0) 
+ \ln(c/2) + m \ln(u(T_f)) + m \int_{\rho_k}^{T_f} \alpha(s) ds.
\end{equation*}
But as $|1+|w(s)||v(\tau)|\leq \delta/(2m)$ for $\tau \leq T_f$, that would lead to a contradiction,
as all terms on the right-hand side are bounded on $(-\infty,b]$,
while the terms on the left-hand side are diverging to $-\infty$ as $k \to \infty$.

Recalling the differential inequality for $v$,
\eqref{eq:InequalityForF} implies that $v'(\tau) < 0$ for any $\tau \leq T$
such that $v(\tau) >0$, as we assumed that $\beta \geq 0$.
Thus, if $v(\tau) > 0$ at some time $\tau \leq T$, then it is also increasing towards the past.
As we know that along the sequence of times $(\tau_k)_k, \tau_k \to -\infty$
it holds that $v(\tau_k) > 0$, we also know that there exists a $k_0 \in \mathbb{N}$
such that $\tau_{k_0} \leq T$,
and we may conclude that $v(\tau) \geq v(\tau_{k_0})$ for any $\tau \leq \tau_{k_0}$.
Hence we must have $v_0 = \limtau v(\tau) > 0$.
\end{proof}

\printbibliography
\end{document}